\definecolor{darkgr}{rgb}{0.0, 0.62, 0.42}
\numberwithin{equation}{section}
\newcommand{\Span}{\mathrm{span}}
\newtheorem{lemma}{Lemma}[section]
\newtheorem{theorem}[lemma]{Theorem}
\newtheorem{corollary}[lemma]{Corollary}
\newtheorem{remark}[lemma]{Remark}
\newtheorem{definition}[lemma]{Definition}
\def\tM{{\tt m}}
\def\bv{{\bf v}}
\def\bu{{\bf u}}
\def\simm#1{\mathop{\sim}^{#1}}
\def\sigmav{\varsigma}
\newcommand{\csi}{\xi}
\newcommand{\nform}{N}
\newcommand{\NForm}{\cN}
\def\mass{\mathcal M \kern -2pt a \kern-2pt d}
\def\russi{Par08,PS10,PS12}
\def\be{{\bf e}}
\def\simb#1{S^{#1,\delta}}
\def\im{{\rm i}}
\def\cV{{\mathcal{V}}}
\def\cN{{\mathcal{N}}}
\def\cC{{\mathcal{C}}}
\def\cG{{\mathcal{G}}}
\def\cE{{\mathcal{E}}}
\def\cW{{\mathcal{W}}}
\def\cL{{\mathcal{L}}}
\def\cR{{\mathcal{R}}}
\def\cB{{\mathcal{B}}}
\def\ops#1{{OPS}^{#1,\delta}}
\def\scalg#1#2{\left(#1;#2\right)_g}
\def\scal#1#2{\left(#1;#2\right)}
\def\scalgs#1#2{\left(#1;#2\right)_{g*}}
\def\norg#1{\left\|#1\right\|_g}
\def\norgs#1{\left\|#1\right\|_{g*}}
\def\tc{{\tt c}}
\def\td{{\tt d}}
\def\tN{{\tt N}}
\newcommand{\cQ}{\mathcal{Q}}
\newcommand{\R}{\mathbb R}
\newcommand{\Z}{\mathbb Z}
\newcommand{\N}{\mathbb N}
\newcommand{\T}{\mathbb T}
\def\norma#1{\Vert#1\Vert}
\def\norm#1{\Vert#1\Vert}
\def\xik{\xi^{\kappa}}
\def\zetak{\zeta^{\kappa}}
\newcommand{\Mc}{M^{(c)}}
\def\inte#1{\lfloor#1\rfloor}
\def\frazp#1{\left\{#1\right\}}
\def\perpz#1{\tilde{#1}}
\newcommand{\Op}{{Op}^{W}}
\newcommand{\OPS}{{OPS}\,}
\definecolor{awesome}{rgb}{1.0, 0.13, 0.32}
\def\change#1{{\color{black}#1}}
\newcommand{\ep}{\epsilon}
\newcommand{\Cip}{C^{\prime}}
\newcommand{\Dip}{D^{\prime}}
\newcommand{\intg}[2]{\left({#1}, {#2}\right)_{g} }
\newcommand{\scala}[2]{\left({#1}, {#2}\right) }
\newcommand{\gaugemap}{U_{\beta}}
\newcommand{\ii}{i}
\def\ac{{\mathcal A}\kern-.7pt\ell\kern-.9pt\mathcal{S}}
\begin{document}


\title{Spectral
asymptotics of all the eigenvalues of Schr\"odinger operators on flat tori}

\date{}


\author{ Dario Bambusi\footnote{Dipartimento di Matematica, Universit\`a degli Studi di Milano, Via Saldini 50, I-20133
Milano. 
 \textit{Email: } \texttt{dario.bambusi@unimi.it}}, Beatrice Langella\footnote{International School for Advanced Studies (SISSA), via Bonomea 265, I-34136 Trieste.
 \textit{Email: } \texttt{beatrice.langella@sissa.it}}, Riccardo Montalto \footnote{Dipartimento di Matematica, Universit\`a degli Studi di Milano, Via Saldini 50, I-20133
Milano.
 \textit{Email: } \texttt{riccardo.montalto@unimi.it}}
}

\maketitle

\begin{abstract} We study Schr\"odinger operators with Floquet
boundary conditions on flat tori obtaining a spectral result giving an
asymptotic expansion of all the eigenvalues. The expansion is in
$\lambda^{-\delta}$ with $\delta\in(0,1)$ for most of the eigenvalues
$\lambda$ (stable eigenvalues), while it is a ``directional
expansion'' for the remaining eigenvalues (unstable eigenvalues). The
proof is based on a structure theorem which is a variant of the one
proved in \cite{PS10,PS12} and on a new iterative quasimode argument.
\end{abstract} \noindent

{\em Keywords:} Schr\"odinger operator,  {spectral asymptotics}, normal form, pseudo
differential operators, Nekhoroshev theorem

\medskip

\noindent
{\em MSC 2010:} 37K10, 35Q55


\tableofcontents
\section{Introduction}\label{intro}

The spectrum of periodic Schr\"odinger operators has been extensively
studied in the last decades and it is essentially fully understood in
dimension one. In particular a full asymptotic expansion of the
eigenvalues $(\lambda_j)_{j \in \Z}$ in the parameter $1/|j|^2$ has
been given by Marchenko \cite{marchenko}. In higher dimension the
situation is considerably more complicated. Consider the Laplacian
with periodic boundary conditions on a general torus
$\T^d_\Gamma:=\R^d/\Gamma$, with $\Gamma$ a maximal dimensional
lattice. Its eigenvalues are given by $\{\|\xi\|^2\}_{\xi\in\Gamma^*}$
with $\Gamma^*$ the dual lattice\footnote{We recall that the dual
lattice is defined as the set of $\xi$'s s.t. $\xi\cdot \gamma\in 2\pi
\Z$ $\forall \gamma\in\Gamma$} to $\Gamma$.  For generic lattices the
differences between couples of eigenvalues accumulate at zero and this
makes difficult to use standard resolvent expansions in order to
obtain properties of the eigenvalues. 

A milestone of the higher dimensional theory is the result {of
\cite{FKT90}, \cite{Fri90} (see also \cite{W})} who proved that,
provided $\cV$ is a sufficiently smooth potential with zero average,
and $\Gamma$ a generic lattice, most of the eigenvalues {of the
Laplace operator $-\Delta$ are stable under the perturbation given by
the potential $\cV,$ in the sense that there are two eigenvalues
$\lambda_{\pm \xi}$ of
\begin{equation} \label{H} -\Delta+\cV(x)\ ,
\end{equation}
in}
the interval
$$ \Big[ \|\xi\|^2 - \frac{1}{\|\xi\|^{{2 \delta}}}\,,\, \|\xi\|^2 +
\frac{1}{\|\xi\|^{{2 \delta}}} \Big] $$ with $\delta \in (0, 1)$ a
parameter. However, it was shown in \cite{FKT2} (developing an
argument by \cite{ERT}), that there are also eigenvalues which behave
differently and are not stable.

The stable eigenvalues also admit a full asymptotic expansion in 
 {$\displaystyle{\lambda^{-\delta} \sim \|\xi\|^{-2\delta}}$}, which can be obtained as a byproduct of the works
\cite{Par08,PS10,PS12} (see also \cite{Kar97} and \cite{vel15} for some
partial previous results) and is explicitly given in \cite{noi}.
For the unstable eigenvalues such an asymptotic expansion is simply
false. Here we address the problem of understanding the kind
of asymptotic expansion valid for unstable eigenvalues.

To present our approach we first recall the method developed in
\cite{PS10} (see also \cite{PS12}). In the paper \cite{PS10} the
authors developed a technique to construct a unitary transformation
which conjugates the operator \eqref{H} to a new operator which is the
sum of a ``normal form operator" and a remainder. Such a technique can
be interpreted as a quantization of the classical normal form
algorithm usually employed to study the dynamics of the Hamiltonian
system $h(x, \xi) := \|\xi\|^2 + V(x),$ whose quantization is
\eqref{H}. On the classical side, it is well known that the dynamics
(and thus the normal form) of a Hamiltonian system is completely
different in the resonant and in the nonresonant regions: it turns out
that stable eigenvalues correspond to the nonresonant regions, while
the unstable eigenvalues correspond to resonant regions. In
particular, {in \cite{PS10} a precise definition of resonant\slash
  nonresonant regions was given and it was shown that the normal form
  operator is a block diagonal operator which is just a Fourier
  multiplier if one localizes it in the nonresonant region; then the blocks
  corresponding to resonant regions turn out to be finite dimensional,
  but their dimension is not bounded.  When adding the remainder, a
  quasimode argument can be used to get the asymptotics of the
  eigenvalues corresponding to the nonresonant region, but almost
  nothing is known on the eigenvalues of the other blocks.
  
Here we want to obtain precise asymptotics also of all the
eigenvalues corresponding to the resonant regions.}
{The} asymptotic
expansions we get are not in the parameter
$\left\|\xi\right\|^{-2\delta}$: instead, they are directional
asymptotics. To explain this point, label the eigenvalues of \eqref{H}
using the points
$\xi\in\Gamma^*$, then, roughly speaking, the result is the following:
consider a submodule $M$ of $\Gamma^*$ and assume that the
 {vector $\xi\in\Gamma^*$ is resonant with the vectors
of a basis of $M$, but with no other vectors in $\Gamma^*$, then
the corresponding eigenvalue $\lambda_{\xi}$ admits an asymptotic
expansion in the parameter $\|(\xi)_M\|^{-2\delta}$, the lower index
$M$ denoting orthogonal projection on $M$.}

\vskip5pt

 {The main point in order to get such an expansion consists in first
proving a structure theorem which is a variant of the block diagonal
decomposition of \cite{PS10,PS12}, but \change{global in the Hilbert space and} suitable for
iteration.}  {This is needed in order to further decompose the
  resonant blocks in sub-blocks which at the end of the procedure will
  be just isolated points or finite dimensional, but with \emph{uniformly bounded dimension.}} \change{In view of possible future applications to quasilinear problems, we prove our structure theorem in the case where $\cV$ is not a potential, but a pseudodifferential operator of order strictly less than 2.}

More precisely, (as in \cite{PS10, PS12})  {as a first step }we
conjugate the operator in \eqref{H} to $\widetilde H+\cR$ with $\cR$ a
smoothing pseudodifferential operator and $\widetilde H$ a block
diagonal operator. The blocks corresponding to the non resonant
zone are  just isolated points.  The main novelty of our structure
  theorem is that we prove that in the non trivial blocks $\widetilde H$ is still a periodic Schr\"odinger operator, but on a lower
dimensional torus: essentially it contains only the angles in the
resonant directions. { We point out that a similar, but
  less precise property, was proved in
  \cite{PS09} just for the 2-d case.} Then \change{in the case where $\cV$ is bounded,} since in each block one has the same
structure as that of \eqref{H}, one can apply again the normal form
procedure and iterate until one is left with trivial blocks and blocks
with uniformly bounded dimension. However, since the new operator only
depends on the resonant angles one gets that the new normal form is
only up to a remainder which is smoothing in the resonant
directions. This is the source of the directional decay.

One further difficulty is that, since there are infinitely many blocks,
one must have a {uniform} control of all the constants of the restricted
operators. We will achieve this goal by performing the whole
construction in an intrinsic way: we define the resonant regions, the
blocks, and the seminorms of the pseudodifferential operators in terms
of the natural metric of the torus.  {This allows a control of all the
constants of the restricted operators in terms of the constants of the
original operator.}

The final step of the proof consists in reconstructing the eigenvalues
of the original operator. This is obtained through an iterative
quasimode argument  {that, as far as we know, is new}. To explain
it  {consider the case $d=2$};  {in this case, when restricting
  to the blocks, $\widetilde H$ turns out to be }either a Fourier
multiplier or a 1 dimensional Schr\"odinger operator. So essentially
everything is known on the spectrum of each block operator. However in
order to establish a correspondence between the spectrum of
$\widetilde H$ and the spectrum of $\widetilde H+\cR$ one must have
some information on how the eigenvalues of $\widetilde H$ are
distributed on the real line and to know something on the
eigenfunctions. The information that we use on the eigenvalues is just
Weyl law, which allows {to partition of the spectrum in
  clusters}. Then to each cluster we apply a new quasimode argument
(which is a development of that used in \cite{BKP15})  {which
  allows to describe
  how the perturbation changes the eigenvalues}. Concerning the
eigenfunctions, the information that we extract is that their negative
Sobolev norms decay fast with $\|{\xi}\|^{-1}$  {(see Equation
  \eqref{atima_soba} for a precise statement)}.  We point out that it
could be interesting to extract more information on the
eigenfunctions. 

We point out that we think that our formulation of the Structure
theorem could be useful also for more applications, for example we
think that one could get a detailed description of the semiclassical
measures \cite{Macia} {or (following \cite{Roy}) a precise
semiclassical expansion in $\hbar$ of the eigenvalues. }


\bigskip

 {Finally, we emphasize that our motivation for this research comes from our
work in KAM theory for PDEs: the construction of quasiperiodic
solutions of a Hamiltonian PDE requires a full understanding of the
dynamics of the operator obtained by linearizing the PDE at any
approximate solution. A good model problem is the time dependent
Schr\"odinger equation $ -\ii \partial_t u = -\Delta u+\cV(t,x) u$,
where $\cV$ is a smooth potential depending in a quasiperiodic way on
time and an efficient way to completely characterize its dynamics
consists in conjugating such an operator to a time independent equation
(reducibility problem). This can be done using a general strategy
developed in \cite{BBM14, 2, 1, BM, BBHM} for the study of quasilinear
1-d problems and extended to some very particular higher dimensional cases in
\cite{BGMR1,FGMP,BLM19,M,BGMR19,GrebFeola}. The first step of this
approach requires a very precise knowledge of the eigenvalues of the
problem in which time is frozen, and that's why we attack here this
problem. The final aim of this line of research is to bypass the limitation of the results of
\cite{bourglemma, BouLib,EK09,PX,BMas} and to get a KAM theory applicable to
equations on manifolds or domains which are as general as possible. }

\vskip 10pt

\noindent
{The paper is split in two parts: Part I, containing Sections \ref{spectral.res} and
\ref{spec sec}, in which we give our main results, and Part II
containing the proofs.}  In Section \ref{spectral.res}, we give a
statement of the Structure Theorem, recalling also the main notions
needed to give a precise statement.  In Section \ref{spec sec}, we start
by describing in detail the partition of $L^2(\T^d_\Gamma)$ in
invariant subspaces. This is the quantum analogue of the construction
of the geometric part of Nekhoroshev theorem. In particular this
is needed in order to give a precise statement of our spectral result
(see Theorem \ref{speci_spettri}): the kind of asymptotics that we
give depends on the block to which the eigenvalue belongs (in a sense
that will be made precise).

Part II is devoted to the proof of the main results. In
Sect. \ref{sez forma normale nonris} we give our normal form lemma
conjugating up to a smoothing operator \eqref{H} to a normal form
operator. This corresponds to the analytic part of Nekhoroshev's
theorem. In Section \ref{sez blocchi} we study the partition of
Subsect. \ref{geometrachilegge} in order to show that it is actually a
partition and is left invariant by an operator in normal form. This
corresponds to the geometric part of Nekhoroshev's theorem. Finally
in Sect. \ref{quasiquasi} we give a quasimode argument adapted to our
situation and prove our spectral result.

The paper contains also three appendixes: in Appendix \ref{pdc} we
adapt some standard results on pseudodifferential calculus to our
context, in Appendix \ref{lemmacci} we prove some very technical
lemmas which are used in the core of the paper, finally in Appendix
\ref{aquasi} we prove a couple of results on spectral problems needed
in Section. \ref{quasiquasi}.

\bigskip

{\sc Acknowledgments.} While working on this project we had many
discussions with several people. In particular we had several
enlightening discussions with Antonio Giorgilli, who explained to us
the details of the geometric part of the proof of Nekhoroshev's
theorem: its understanding has been a key step for the completion of
this work. A particular thank also goes to Thomas Kappeler who
introduced us to the literature on higher dimensional periodic
Schr\"odinger operators. We also thank Emanuele Haus, Fabricio Macia,
Alberto Maspero, Michela Procesi and
Didier Robert for some very stimulating discussions.

This work is partially supported by GNFM.
\vskip25pt

\Large  \noindent{\bf PART I: Statements}

\normalsize

\section{The structure theorem}\label{spectral.res}
\subsection{Preliminaries}

Let $\Gamma$ be a 
lattice of dimension $d$ in $\R^d$, with basis ${\bf e}_1, {\bf e}_2,
\ldots, {\bf e}_d$, namely 
\begin{equation}\label{definizione Gamma}
\Gamma := \Big\{  \sum_{i = 1}^d k_i {\bf e}_i : k_1, \ldots, k_d \in
\Z\Big\}\ ,
\end{equation}
and define 
\begin{equation}
  \label{toro}
 \T^d_\Gamma := \R^d / \Gamma\,.  
\end{equation}
Consider the Schr\"odinger operator
\begin{equation}
  \label{1.1}
-\Delta+\cV\ , 
\end{equation}
with Floquet boundary conditions on $\T^d_\Gamma$, namely acting on
functions $u$ which fulfill (together with their first derivatives)
the boundary conditions
\begin{equation}
  \label{floquet}
u(x+\gamma)=e^{\ii\gamma\cdot\kappa}u(x)\ ,\quad \forall \gamma\in
\Gamma ;
\end{equation}
$\kappa\in\T^d_{\Gamma^*}$ is a parameter.
Here $\cV$ is either a
potential, or more generally a pseudodifferential operator of order
zero on $T^*\T^d_\Gamma$ (see Definition \ref{def weyl berti} below for a precise definition).

By making the Gauge transformation $u= e^{i \kappa \cdot x}\tilde u$ the operator
\eqref{1.1} is conjugated to the operator
\begin{equation}
\label{per}
H=\sum_j(D_j+\kappa_j)^2+\cV\ ,\quad D_j:=-\im \partial_j 
\end{equation}
with periodic boundary conditions (p.b.c.) on $\T^d_{\Gamma}$; from
now on we will only use the variable $\tilde u$ and omit the tilde.  If $\cV\equiv
0$, then the eigenvalues of $H$ are simply given by
\begin{equation}
  \label{laple}
\lambda_\xi^{(0)}:=\norma{\xi+\kappa}^2\ ,\quad \xi\in \Gamma^*\ .
\end{equation}

By introducing in $\T^d_\Gamma$ the basis of the vectors
 $\be_i$, $H$ is reduced to 
\begin{align}
\label{g}
H=-\Delta_{g, \kappa}+\cV\ ,
\\
\label{deltag}
-\Delta_{g, \kappa}:=g^{AB}(D_A+\kappa_A)(D_B+\kappa_B)
\end{align}
with p.b.c. on the standard torus $\T^d:=\R^d/(2\pi \Z)^d$. Note that in formula \eqref{deltag}, we use the standard Einstein notation, namely 
$$
g^{AB}(D_A+\kappa_A)(D_B+\kappa_B) := \sum_{A, B = 1}^d g^{AB}(D_A+\kappa_A)(D_B+\kappa_B)
$$  
where 
\begin{equation}
\label{g.1}
g_{AB}:=\be_A\cdot\be_B, 
\end{equation}
and the matrix with upper indexes is the inverse of the matrix with
lower indexes, namely it is defined by
$$
\quad g_{AB}g^{BC}=\delta_A^C\ .
$$

Conversely, given an operator of the form \eqref{g}, by introducing a
basis which is orthonormal with respect to the metric $g:=(g_{AB})$
and making a Gauge transformation, one is reduced to a standard Schr\"odinger operator with Floquet boundary conditions on a suitable
torus $\T^d_{\Gamma}$. For this reason, from now we will restrict our
study to the operator \eqref{g} and we will call it a {\it
  Schr\"odinger operator of dimension $d$ with Floquet boundary
conditions}. Furthermore, with a slight abuse of language, we will use
the same name for operators which are the restriction of an operator
of the form \eqref{g} to a subspace of $L^2$.

 In the following we will only deal with
scalar products and norms with respect to the metric $g$. We will
denote 
\begin{equation}
  \label{scal}
\scalg xy:=g_{AB }x^Ay^B\ ,\quad \scalgs \xi \eta:= g^{AB}\xi_A\eta_B
  \end{equation}
the scalar product with respect
to this metric of two vectors $x,y$ or two covectors
$\xi,\eta$. Correspondingly we will denote  
\begin{equation}
  \label{nors}
\norg x^2:=\scalg xx\ ,\quad \norgs \xi^2:= \scalgs \xi\xi\ .
  \end{equation}
Finally we will denote by $d\mu_g(x)$ the volume form corresponding to
$g$.
The following constant plays a relevant role in our construction:
\begin{gather}
	\label{coerc}
	{\frak c} := \underset{ k \in \Z^d \backslash \{ 0 \}}{\inf
        }\norgs k^2 \,.
\end{gather}
{ Given $s$ linearly independent vectors $\{u_1, \dots, u_s\}$ in
  $\Z^d$, denote by $\textrm{ Vol}_{g^*} \{u_1 | \cdots | u_s\}$ the
  $s-$ dimensional volume, calculated with respect to the metric
  $g^*$, of the parallelepiped in $\R^d$ with edges given by $\{u_1,
  \dots, u_s\}$. A further relevant constant is
\begin{equation}\label{volmin}
{\frak C} := \min_{ 1 \leq s \leq d}\ \min_{u_1, \dots, u_s \in \Z^d} \textrm{ Vol}_{g^*} \{u_1 | \cdots| u_s\}\,.
\end{equation}
\begin{remark}\label{rmk cbig}
In Lemma \ref{lemma parallelepipedi} of the Appendix \ref{lemmacci},
we will prove that $\frak C$ is strictly positive.
\end{remark}
In the following we will often refer to the constants ${\frak c}$,
${\frak C}$ as the constants of the metric.}

\subsection{Pseudodifferential calculus}\label{pcal}
Given $u\in L^2(\T^d)$, we define as usual its Fourier series by
$$
u(x) = \sum_{\csi \in \Z^d}
\hat{u}_{\csi}\ e^{\ii \xi \cdot x}
$$
where $\xi \cdot x=\xi_Ax^A$ is the usual pairing
between a vector and a covector.\\
Fix $\kappa \in \R^d/\Z^d$, then we define $H^s(\T^d)$ to be the completion of
  $\cC^{\infty}(\T^d)$  in the norm
  \begin{equation}
\label{norfou}
\norma u ^2_{H^s} = \sum_{\xi\in\Z^d} {\langle{\xi {+ \kappa}}\rangle_g}^{2s}|\hat u_\xi|^2\ ,
\end{equation}
{where $\langle \xi \rangle_g := \left( 1 + \norgs\xi^2
	\right)^{1/2}$.}
Given a function $a\in
C^\infty(T^*\T^d)$, we define (exploiting the equivalence
$T^*\T^d\simeq\T^d\times\R^d $),
\begin{equation} \label{def norma}
\begin{aligned}
\| d^{M}_x d^{N}_\csi a(x, \csi)\| = \underset{\begin{subarray}{c}
	\norg{h^{(i)}}=1\\ \norgs{k^{(j)}} = 1 
	\end{subarray}} {\sup} \left|d^{M}_x d^{N}_\csi a(x, \csi) \left[h^{(1)}\,, \dots, h^{(M)}, k^{(1)}\,, \dots, k^{(N)}\right] \right|\,.
\end{aligned}
\end{equation}

\begin{definition} \label{def weyl berti} Let ${a \in {\cal C}^{\infty} \left( T^*\T^d\right)}$ and
	$m \in \R,\ \delta >0$ and $\kappa \in \R^d/\Z^d\,.$ We say that $a \in S^{m,\delta}$ is a
	symbol of order $m$, if $\forall\ N_1\,, N_2 \in \N\,,$ there exists
	a constant ${C_{N_1, N_2}>0}$ such that
	$$
	\| d_x^{N_1} d_{\csi}^{N_2} a(x, \csi)\| \leq C_{N_1, N_2} \langle \csi + \kappa \rangle_g^{m - \delta|N_2|} \quad \forall x\in \T^d\,,\ \csi \in \R^d\,.
	$$
	We also define $S^{-\infty,\delta}:=\cap_{m}S^{m,\delta}$. 
\end{definition}
\begin{remark}
	\label{kappain}
	The parameter $\kappa$ which appears in the definition of symbol and as a weight in the Sobolev norms \eqref{norfou} has
	been introduced in order to get uniform estimates suitable for
        the iteration of Theorem \ref{main}. 
\end{remark}

\begin{definition}\label{oppseudo} Let $a\in S^{m,\delta}$, its
	\emph{Weyl quantization} is the linear operator $A\equiv
	\Op(a)$ defined by
	\begin{equation} \label{def weyl toro}
	\left(\Op(a) [u]\right)(x) =  \sum_{\csi \in \Z^d} \sum_{h \in \Z^d} \hat{a}_h\left(\csi + \frac{h}{2}\right) \hat{u}_{\csi}\ e^{\ii (\csi + h) \cdot x}\,,
	\end{equation}
	where $\forall k \in \Z^d$ and $\forall \csi \in \R^d$
	$$
	\hat{a}_k(\csi) = \frac{1}{\mu_g(\T^d)} \int_{\T^d} a(x, \csi) e^{- \ii  k \cdot x} d\mu_g 
	\,.
	$$
\end{definition}

\begin{definition}\label{psd}
	Let $A$ be a linear operator on $L^2(\T^d)$, we say that it is a
	pseudodifferential operator of class $\ops m$ if there
	exists $a\in S^{m,\delta},$ such that $A = \Op(a)$. Operators
        of class $\ops{-\infty}$ will be called \emph{smoothing}.
\end{definition}

\begin{definition}[\bf Seminorms]\label{definizione seminorme}
	Let $a \in S^{m, \delta}$ and $N_1, N_2 \in \N$. We define 
	$$
	C_{N_1, N_2}(a) := \sup_{(x, \xi) \in \T^d \times \R^d} \langle \csi + \kappa \rangle_g^{ \delta N_2 - m} \|  d_x^{N_1} d_{\csi}^{N_2} a(x, \csi)\|\,.
	$$
	Equivalently, if $A = \Op(a),$ we set
	$
	C_{N_1, N_2}(A) := C_{N_1, N_2}(a)\,.
	$
\end{definition}
\begin{remark}\label{rmk semin}
$\{ C_{N_1, N_2}(\cdot) \}_{N_1, N_2 \in \N}$ is a family of seminorms
  on $S^{m, \delta}$, and we will refer to $\{C_{N_1, N_2}(A)\}_{N_1,
    N_2 \in \N}$ as the family of seminorms of the operator $A\,.$
  {All the definitions are given in such a way that the seminorms
    do not depend on the coordinates that one uses in $\T^d$, namely, if one
    changes the basis $\{ \be_i \}$ by means of a unimodular
    transformation $A$ (i.e. a unimodular matrix with integer coefficients), then this does not change the
    value of the seminorms. This is crucial for our procedure.}
\end{remark}
{We refer to the Appendix \ref{pdc} for some basic properties of
  pseudodifferential calculus {in the intrisic formulation}. In particular, we emphasize  that  all the
  constants controlling the seminorms of the composition, commutators, and
  exponentiation of pseudodifferential operators depend only on the
  constants of the metric. This is is needed for iterating the
  structure theorem.}

\subsection{Submodules, subspaces and statement of the Structure Theorem}\label{sub}

\begin{definition}
  \label{generated}
	Given $E \subseteq \Z^d\,,$ we denote  
	\begin{equation} \label{def cal e}
	{\cal E} = \overline{\Span{\{e^{\ii \csi \cdot x}\ |\ \csi \in E\}}}\,,
	\end{equation}
        where the bar denotes the closure in $L^2$. We will
       call such a subspace \emph{subspace generated by $E$. }
\end{definition}
\begin{definition}
  \label{gen.2}
        We will denote by $\Pi_{\cal E}:L^2(\T^d)\to \cE$ the
        orthogonal projector on ${\cal E}\,$ and, given a linear
        (pseudodifferential) operator $F$, we will write
\begin{equation}
  \label{restri}
F_{\cal E} := \Pi_{\cal E} F \Pi_{\cal E}\,.
\end{equation}
\end{definition}

The block decomposition as well as the spectral asymptotics of the
Schr\"odinger operator are related to the submodules of $\Z^d$, for
this reason we recall some properties of the bases of the
modules. {The systematic use of the properties of discrete
  submodules is one of the differences with the construction of
  \cite{\russi}. This plays a crucial role in order to show that the
  operator one obtains in each invariant block still has the structure
  of a Laplacian plus a potential plus a more regularizing
  pseudodifferential operator.}

\begin{definition}
  \label{mod}
A subgroup $M$ of $\Z^d$ is called a submodule if
${\Z^d\cap\Span_{\R}M=M.}$ Here and below, $\Span_{\R}M$ is the subspace generated
by taking linear combinations with real coefficients of elements of
$M$. 
\end{definition}
Given a discrete submodule $M$ of $\Z^d$ it is well known
that it admits a basis, namely that there exist $d'$ independent
vectors $\bv^1,...,\bv^{d'}$ such that
\begin{equation}\label{coordinate adattate bla}
M={\rm span}_{\Z}\{\bv^1,...,\bv^{d'} \} : = \Big\{w\in \Z^d\ :\ w= \sum_{k = 1}^{d'} n_k \bv^k, \quad n_1, \ldots, n_{d'} \in \Z \Big\}\ .
\end{equation}
\begin{definition}
  \label{integer}
  Given a basis {$\{\bv^k\}_{k =1, \ldots, d'}$ of $M$ and a vector
    $m=m_k\bv^k\in$span$_\R M$, we denote
  $$
\inte m:=\inte{m_k}\bv^k\ , 
$$
with $\inte{m_k}$ the integer part of $m_k$,} {and
	 $$
	\{ m\}:=\{m_k\}\bv^k\ , 
	$$
where $\{m_k\}$ is the fractional part of $m_k$.}
\end{definition}
{Given a covector $\csi \in \Z^d$, a Floquet parameter $\kappa,$ and a module $M$, we will have to
	decompose the covector $w = \csi + \kappa \in\R^d$ in a component along $M$ and a component in the
	orthogonal direction, and this has to be done in a way compatible with
	the lattice structure of $\Z^d$ and with the Floquet parameter. \\}
We consider the orthogonal decomposition
$\R^d=\Span_{\R}M\oplus (\Span_{\R}M)^{\perp}$. {Correspondingly, given a vector $w\in\R^d$,
we decompose it as}
$$
w=w_M+w_{M^\perp}\ ,\quad w_M\in\Span_{\R}M\ ,\quad w_{M^\perp}\in
(\Span_{\R}M)^{\perp} \ . 
$$

\begin{definition}
  \label{ortz}
  {Given a vector $\xi\in\Z^d$, a module $M$ and a Floquet parameter
  $\kappa$, 
  we define the
  following two objects:}
 \begin{equation} \label{betainz}
 \begin{gathered}
 \perpz\xi:=\xi-\inte{(\xi+\kappa)_M}\ ,\\
 \kappa':=\frazp{(\xi+\kappa)_M}\,.
 \end{gathered}
\end{equation}
\end{definition}
\begin{remark}
  \label{decompoz}
  If we denote $\zeta:=\inte{(\xi+\kappa)_M}$, one has
  \begin{equation}
    \label{dec.4}
(\xi+\kappa)_M=\zeta+\kappa'\ ,\quad
    (\xi+\kappa)_{M^\perp}=(\tilde \xi+\kappa)_{M^\perp}\ .
  \end{equation}
\end{remark}
{
Given a vector {$\beta\in \Z^d$,} we will have to consider the
space
\begin{equation}
\label{mbeta}
M+\beta:=\left\{ \xi\in \Z^d\ :\ \exists v\in M\ :
\ \xi=v+\beta \right\}\ .
\end{equation}
\begin{remark}\label{rmk costanti}
Notice that, for any $\csi
	\in  M + \beta,$ one has 
	$$
{	\tilde \csi  = \tilde \beta}\,, {	\quad \{(\csi + \kappa)_M\} = \{(\beta + \kappa)_M\}\,,}
	$$ 
{	thus the quantities $\tilde{\csi}$ and $\kappa^\prime$ defined in \eqref{betainz} are constant on $M + \beta$.}
\end{remark}
\begin{remark}
  \label{Gauge}
  The set $M + \beta$ defined as in \eqref{mbeta} is clearly an affine module isomorphic to
$M$.  A convenient way to identify the two spaces $M + \beta$ and $M$ is to subtract
  $\perpz\beta$ to a vector $w\in M+\beta$.\\
Correspondingly, the subspace of $L^2(\T^d)$ generated by
$M+\beta$ (in the sense of Definition \ref{def cal e}) is isomorphic to the subspace generated by
$M$. 
Explicitly, the isomorphism can be  realized by using the Gauge
transformation $\gaugemap$ defined by
\begin{equation}
  \label{gal}
\gaugemap u:=e^{-\ii \tilde{\beta} \cdot x}u\ . 
\end{equation}
\end{remark}
}
\begin{definition}
	\label{trasl}
	Given a module $M$, a vector {$\beta\in \Z^d$} and a set $W\subset
	M+\beta$, we denote $W^t:=W-\perpz \beta$ so that $\cW^t:=\gaugemap
	\cW\subset {L^2(\T^d)}$.
\end{definition}
{As a last step, we introduce the definitions of \emph{coordinates adapted to a module}.
	If $\bv^1,...,\bv^{d'}$ ($d'<d$) is a basis of $M\subset \Z^d$, then it
	can be completed to a basis of $\Z^d$, namely there exist
	$\bv^{d'+1},...,\bv^{d}$ such that the whole collection
	$\bv^1,...,\bv^{d}$ generates $\Z^d$. Such a basis will be called
	a basis {\it adapted to} $M$.
	In what follows, given a collection of such vectors $\{ \bv^{d^\prime + 1}, \dots, \bv^d\}$, we will denote
	\begin{equation}
	\label{mort}
	{\Mc}:={\rm span}_{\Z}\{\bv^{d'+1},...,\bv^{d} \}\ ;
	\end{equation}
	if $M=\Z^d$ then ${\Mc}=\{0\}$ and if $M=\{0\}$ then ${\Mc}=\Z^d$.
	{Of course, in general ${\Mc}$ is not unique, but this will not affect
		our construction.} Consider now the basis $\{\bu_j
        \}_{j = 1, \ldots, d}$ of $\R^d$ dual to $\{ \bv^j \}_{j =1 ,
          \ldots, d}$.}
\begin{definition}
  \label{adapted}
{        The
	coordinates $z^j$ introduced by
	\begin{equation}
	\label{cootd}
	x=z^j\bu_j
	\end{equation}
	are good coordinates on $\T^d$ (in the sense that they respect the
	$2\pi$ periodicity of the torus). These
	coordinates will be called \emph{coordinates adapted to $M$.}
}
\end{definition}

{The main result of this section is the following theorem. }

\begin{theorem}
\label{main}[Structure Theorem]
Given $\ep, \delta \in \R^+$ and
{$\tau > d - 1$ fulfilling
\begin{equation}\label{legami parametri}
\delta + d (d + \tau + 1) \ep < 1\,, \quad \ep (\tau + 1) \leq \delta\,,
\end{equation}
} a
Floquet parameter $\kappa$ and a flat metric $g$, there exists a partition of $\Z^d$:
\begin{equation}
\label{parti}
\Z^d= \bigcup_{M{\subseteq \Z^d}}\bigcup_{\beta\in \widetilde M}W_{M,\beta}
\end{equation}
where $M$ runs over the submodules of $\Z^d$ and $\widetilde M$ is a
subset of ${\Mc}$. {All the sets $W_{M, \beta}$ have finite cardinality}, the set $E_{\{0\}}:=\bigcup_{\beta}W_{\{0\}, \beta}$
has density one at infinity, and $W_{\Z^d,\{0\}}$ has cardinality bounded
by an integer $n_*$ 
which depends on the constants of the metric and on $d,\ \delta,\ \ep,
{\tau}$ only. 

Consider the operator \eqref{g}, assume that \change{$\cV\in\OPS^{\mu,\delta}$, $\mu < 2 \delta$, and let
	\begin{equation}\label{guadagno}
	\rho:= 2\delta - \mu\,;
	\end{equation}
	} then $\forall \tN> 0$ there exists a unitary transformation
$U$ which depends
smoothly on $\cV$, which fulfills
 \begin{align}
            \label{bohr}
	U - {\rm Id} \,,\ U^{-1} - {\rm Id} \in \change{\ops{\mu-\delta}}
	\end{align}
and is  s.t.
\begin{equation}
\label{tu}
UHU^{-1}=\widetilde H+\cR\ ,
\end{equation}
where
\begin{enumerate}
\item $\cR\in \change{\ops{\mu -\tN\rho}}$
\item $\widetilde{H}$ leaves invariant the subspaces generated by
  $W_{M, \beta}$ (according to Definition \ref{generated}) for all $M$
  and $\beta \in \widetilde M.$
  \change{\item Furthermore, one has the following:}
\begin{enumerate}
	\item[3.1]  $\forall \beta$, $\displaystyle{\widetilde{H}_{{\cal W}_{\{0\},
              \beta}}}\equiv \widetilde H\big|_{\cW_{\{0\} ,\beta}}$ is a Fourier multiplier
	\item[3.2] \change{If $\mu\leq 0$,} \label{lowerdim} $\forall M$ proper submodule and $\forall \beta \in
          \widetilde M$, one has that
	 {$H^{(1)}_{M,\beta}:=\gaugemap \widetilde{H}_{\cW_{M,\beta}} \gaugemap^*$} is a \emph{Schr\"odinger operator of
          dimension $d'=dim M\,,$} in the sense that introducing
          coordinates adapted to $M,$ it takes the form 
        \begin{equation}
          \label{onestep}
H^{(1)}_{M,\beta}= \Pi_{\cW^t_{M,\beta}} \left( - \Delta_{g, \kappa^\prime} + \cV_{M, \beta} {+ \norgs{(\tilde\beta+\kappa)_{M^\perp}}^2}\right) \Pi_{\cW^t_{M,\beta}}\,,
        \end{equation}
        where $-\Delta_{g, \kappa^\prime} $ is the $d'$ dimensional
        Laplacian computed with respect to the restriction of the
        metric $g^*$ to $\Span_{\R}M$ and with Floquet parameter
        $\displaystyle{\kappa^\prime = \{(\beta + \kappa)_M\}}$, and $\cV_{M, \beta} $ is a {periodic} pseudodifferential operator of order \change{$\mu$} (in $d'$
dimensions). \\
{Furthermore, the seminorms of the operators $U, {\cal R}$ and ${\cal
		V}_{M, \beta}$ only depend on the constants of the metric ${\frak c}, {\frak C},$ and on the seminorms of $\cal V.$}
\end{enumerate}

\end{enumerate}
\end{theorem}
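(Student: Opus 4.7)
The plan is to mimic Nekhoroshev's two-step scheme (geometric, then analytic) at the operator level, producing both the partition \eqref{parti} and the unitary $U$, and then reading off the block structure. For the geometric part (Section \ref{sez blocchi}) I would, for each submodule $M\subseteq\Z^d$, introduce a \emph{resonant region} $\Omega_M\subset\R^d$ consisting of those $w=\xi+\kappa$ for which $|\scalgs{w}{v}|$ is small (thresholds of size $\langle w\rangle_g^{-\tau\ep}$) exactly when $v\in M$ and bounded below (by $\sim \langle w\rangle_g^{1-(\tau+1)\ep}$) for $v\notin M$. Setting $W_{M,\beta}:=\{\xi\in(M+\beta):\xi+\kappa\in\Omega_M\}$, the disjointness of the $\Omega_M$, the finite cardinality of each $W_{M,\beta}$ (the transverse thickness of $\Omega_M$ along $M^\perp$ being a low-degree polynomial in $\langle\xi\rangle_g$), the uniform bound on $|W_{\Z^d,\{0\}}|$, and the density-one property of $E_{\{0\}}$ would all be obtained by Diophantine counting, with \eqref{legami parametri} fixing the compatible exponents and Lemma \ref{lemma parallelepipedi} bounding the relevant $s$-dimensional lattice volumes in terms of ${\frak C}$.

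For the analytic part (Section \ref{sez forma normale nonris}) I would iterate a Lie transform $U_j=e^{\ii \Op(a_j)}$: at step $j$ the conjugated operator takes the form $\widetilde H_{j-1}+B_j+\cR_j$ with $B_j\in\ops{\mu-(j-1)\rho}$, and I would split $B_j=B_j^{(\mathrm{res})}+B_j^{(\mathrm{nr})}$ according to whether, on $\Omega_M$, the Fourier index $k$ of the symbol of $B_j$ lies in $M$ or not; then choose $\Op(a_j)$ solving the cohomological equation $[-\Delta_{g,\kappa},\Op(a_j)]=B_j^{(\mathrm{nr})}$ in symbols. On $\Omega_M$ the divisor $2\scalgs{w}{k}+\norgs{k}^2$ is bounded below by $c\langle w\rangle_g^{1-(\tau+1)\ep}$ for $k\notin M$; the small-divisor loss $(\tau+1)\ep$ is absorbed by $\delta$ thanks to \eqref{legami parametri}, and the subsequent conjugation folds $B_j^{(\mathrm{res})}$ into $\widetilde H_j$ while improving the next perturbation by $\rho=2\delta-\mu$. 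After $\tN$ steps $\cR\in\ops{\mu-\tN\rho}$; unitarity of $U$ and $U-\mathrm{Id}\in\ops{\mu-\delta}$ follow from Hermiticity of the generators and from the bound on $a_1$.

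Items 3.1 and 3.2 then essentially come for free from what the normal form preserves. On $\cW_{\{0\},\beta}$ only the Fourier mode $k=0$ survives in the symbol of $\widetilde H$, so the restriction is a Fourier multiplier. For a proper $M$, the symbol of $\widetilde H_{\cW_{M,\beta}}$ only carries Fourier modes in $M$; the Gauge $\gaugemap$ (Remark \ref{Gauge}) sends $\cW_{M,\beta}$ to the subspace generated by $M$ itself with Floquet parameter $\kappa'=\{(\beta+\kappa)_M\}$, and in coordinates adapted to $M$ (Definition \ref{adapted}) functions there depend only on $d'=\dim M$ variables, so $\widetilde H_{\cW_{M,\beta}}$ is intrinsically a $d'$-dimensional pseudodifferential operator. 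The Laplacian decomposes as $-\Delta_{g,\kappa}=-\Delta_{g,\kappa'}+\norgs{(\tilde\beta+\kappa)_{M^\perp}}^2$ on this subspace, yielding \eqref{onestep}; the assumption $\mu\le 0$ keeps the surviving symbol $\cV_{M,\beta}$ subleading with respect to the Laplacian.

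The hard part is not any single estimate, but the \emph{uniformity} of all seminorm bounds over the infinitely many pairs $(M,\beta)$, without which the theorem could not be iterated (as required by the iterative quasimode argument announced in Section \ref{spec sec}). This forces the entire construction to be performed in the intrinsic pseudodifferential calculus of Appendix \ref{pdc}: by Remark \ref{rmk semin} the seminorms of Definition \ref{definizione seminorme} are invariant under the unimodular changes of basis used when passing to coordinates adapted to $M$, so the normal-form and Gauge-restriction steps can be carried out once in the ambient calculus and the restricted operators inherit bounds depending only on ${\frak c},{\frak C},d,\delta,\ep,\tau$ and the seminorms of $\cV$. This is the sole reason for introducing the metric-based formulation and will be the technical heart of the argument.
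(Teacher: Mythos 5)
Your overall architecture coincides with the paper's (analytic normal form by iterated Lie transforms, a Nekhoroshev-type geometric partition, dimensional reduction via the Gauge map $\gaugemap$ and adapted coordinates, and the intrinsic seminorms of Appendix \ref{pdc} to get uniformity in $(M,\beta)$). The analytic step and Item 3 are sketched essentially as in Sections \ref{sez forma normale nonris} and \ref{rid}. The gap is in the geometric part, and it is genuine.

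First, your regions $\Omega_M$, defined by the dichotomy ``$|\scalgs{w}{v}|$ small exactly for $v\in M$ and bounded below for $v\notin M$'', do not cover $\Z^d$: a point can be resonant with the vectors of $M$ while some $v\notin M$ sits in the intermediate regime, neither below your small threshold nor above your lower bound. More seriously, even granting a covering, the sets $W_{M,\beta}=(M+\beta)\cap\Omega_M$ are \emph{not} automatically invariant under the normal form operator: the operator connects $\xi$ to $\xi+k$ with $k\in M$, $\norm{k}\leq\langle\xi_k\rangle^\ep$, and $\xi+k$ may acquire a new resonance with some $k'\notin M$ and hence leave $\Omega_M$. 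This is precisely why the paper does not work with a single threshold but with the hierarchy of zones $Z^{(s)}_M$, blocks $B^{(s)}_M$ and \emph{extended} blocks $E^{(s)}_M=\{B^{(s)}_M+M\}\cap Z^{(s)}_M\cap\bigcap_j(E^{(s-j)})^c$, together with strictly increasing constants $C_s,D_s$ and exponents $\delta_{s+1}=\delta_s+(d+\tau+1)\ep$; the non-overlapping Lemma \ref{lemma che bei blocchi} and the separation Lemma \ref{lemma speriamo sia vero} (both resting on Lemma \ref{giorgilli metrica g} and Lemma \ref{xixik}) are exactly what rules out the two failure modes above, and invariance is then proved by induction on $s$ in Theorem \ref{teo dieci piccoli blocchi}. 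The hypothesis $\delta+d(d+\tau+1)\ep<1$ in \eqref{legami parametri} exists to accommodate the growth of $\delta_s$ up to $s=d$; your sketch has no mechanism that consumes this hypothesis. A second, related defect: you propose to split $B_j$ into resonant/non-resonant parts ``according to whether, on $\Omega_M$, the Fourier index $k$ lies in $M$''; this makes the cutoff depend on which region $\xi$ lies in and destroys smoothness of the symbols. The paper instead performs the normal form globally with the smooth scalar cutoffs $\chi_k(\xi),\tilde\chi_k(\xi)$ of Definition \ref{rmk dec symb} (keeping everything inside $S^{m,\delta}$), and only \emph{afterwards} proves (Lemma \ref{supporto}) that an operator in normal form maps $e^{\ii\xi\cdot x}$, $\xi\in E^{(s)}_M$, into modes $\xi+k$ with $k\in M$. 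Decoupling the normal form from the partition in this way is the step your proposal is missing.
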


\begin{remark}
  \label{decomp}
The partition of $\Z^d$ does not depend on the operator \eqref{g}, but
only on the properties of the metric, and on $\kappa$.  
  \end{remark}
\change{\begin{remark}\label{rmk m}
		Theorem \ref{main} is stated here in the more general case of an unbounded perturbation $\cV$ of order $\mu<2\delta$. However, we emphasize that Item 3. only holds true in the case of $\cV \in \ops{\mu}$ with $\mu \leq 0$. This is mainly due to the fact that, for positive $\mu$, $\cV \in \ops\mu$ does not imply $\cV_{M, \beta} \in \ops{\mu}$ as a $d^\prime$ dimensional operator with uniform boundedness of the seminorms with respect to $M, \beta$ (see Lemma \ref{prop riduzione} below). Thus an iterative application of Theorem \ref{main}, which is needed to prove our spectral result, requires to restrict to the case of bounded perturbations $\cV$.\\ For positive values of $\mu$, it is actually natural to expect that one could bound the seminorms of $\|\beta + \kappa\|^{\mu}\,\cV_{M, \beta}$ uniformly with respect to $M$ and $\beta$, with dependence only on the seminorms of $\cV$ and on the constants of the metric. However, we did not insist on pursuing such result here, as it goes beyond the interest of the present paper.
	\end{remark}
}

\begin{remark}
  \label{lowerset}
The theorem holds also if the initial operator \eqref{g} is replaced
by the restriction of a Schr\"odinger operator {to the subspace generated by any finite subset $E$ of
	$\Z^d$, with the only exception that in such a case the set $E_{\{0\}}$ does not have, of course, density one at infinity.} This is useful for iterating the construction.
\end{remark}
\begin{remark}
The restriction of the metric $g$ to a module $M$ has new constants which are controlled by the constants ${\frak c}$ and ${\frak C}$ of the initial metric $g\,.$ This is useful for the iteration of the construction.
\end{remark}


{Theorems similar to Theorem \ref{main} were proved in
 \cite{Par08,PS10,PS12} (see also \cite{PS09}). The main differences with the theorems proved in
 those papers are the following:}
 \begin{itemize}
 \item[1.] {The remainder $\cR:$ } {the kind of remainders obtained in those papers are
   not smoothing operators, but operators which are small when
   localized in an annulus in the action space $\xi$. 
 }

 \item[2.] {The presence of \change{Item 3.2}, {namely the dimensional reduction:} this is the main new point contained in
  Theorem \ref{main}. In particular
  the presence of this point allows to apply Theorem \ref{main} in an
  iterative way.
} 

\item[3.] {The last statement of the
  theorem, namely the uniformity of the constants with respect to the
  block: again this is needed in order to get an asymptotic expansion
  useful in order to study the time dependent case.}
 \end{itemize}

\section{The partition and the spectral theorem}\label{spec sec}

\subsection{Construction of the Partition}\label{geometrachilegge}

We are now giving the explicit construction of the sets $W_{M,\beta}$. This is
a quantum analogue of the classical geometrical construction of the
Nekhoroshev theorem {\cite{Nek1, Nek2} (see also \cite{GioPisa}). A direct classical counterpart can be found in
\cite{neknoi}. We found very
striking the fact that there is a so close connection between
classical and quantum dynamics.}
 {We remark that the construction of this section can also be
   considered as a variant of the construction of \cite{PS10, PS12}:
   the differences will be pointed out in the following.}

Roughly speaking, {given a submodule $M \subseteq \Z^d$ of dimension $s,$} the sets
\begin{equation}
  \label{BM}
{E^{(s)}_M}:=\bigcup_{\beta\in\widetilde M}W_{M,\beta}
\end{equation}
are  {the points} $\xi\in\Z^d$ which are resonant only with
the integer vectors of $M$. {For this reason, in the following we will often refer to a submodule $M \subseteq \Z^d$ ad a \emph{resonance module}.} In order to make
the construction precise, consider the classical symbol of
$-\Delta_{g,\kappa}$, namely
\begin{equation}
  \label{hzero}
h_0(\xi)=\left\|\xi+\kappa\right\|_{g^*}^2\ ;
\end{equation}
the frequencies of the corresponding classical motion are
$$
\omega_j=\xi_j+\kappa_j\ ,
$$
so that a point $\xi$ is (exactly) resonant with some integer $k$ if
$$
\scalgs{(\xi+\kappa)}{k}=0\ .
$$
Actually, the theory developed in \cite{noi} shows that, in a quantum
context a possible definition of point resonant with a vector $k$ is
\begin{equation}\label{def.res}
\left|\scalgs{(\xi+\kappa)}{k}\right|< \frac{\langle
	\xi+\kappa\rangle_{g}^\delta}{\|k\|_{g^*}^\tau} \ ,
\end{equation}
furthermore, due to the decay
of the Fourier coefficients of a smooth function, it is enough to
consider the $k$'s s.t.
$$
\|k\|_{g^*}\leq \langle\xi\rangle_{g}^\epsilon
$$
for some positive small $\epsilon$.
So, in principle ${E^{(s)}_M}$ should be the set of the $\xi$'s which are in resonance 
with the $k$'s belonging to $M$ and having a not too large
module. However this has to be modified due to the translation by
$k/2$ present in the definition of Weyl quantization. Furthermore, one
has to modify the construction both in order to get that the sets
${E^{(s)}_M}$ do not overlap and in order to obtain invariant sets.

To start with we define the resonance zones, in which the
following notation will be used:
\begin{definition} \label{csi traslate}
 Given $\csi \in \R^d$ and $k \in \Z^d,$ and a 	Floquet parameter
 $\kappa$, we define
        \begin{align}
          \label{traslo.1}
          \xik:=\xi+\kappa\ ,
          \\
          \label{traslo}
	\csi_{k} := \csi + \kappa +  \frac{k}{2}\equiv \xik+ \frac{k}{2}\,.
        \end{align}
\end{definition}
\begin{definition}[Resonant zones] \label{def res zones}
	Fix $\delta,\epsilon,\tau$ as in the statement of Theorem
        \ref{main}; fix also constants fulfilling:
	\begin{gather*}
	\delta_0 = \delta\,,\\
	\delta_{s+1} = \delta_s + {(d+ \tau + 1)\ep} \quad \forall s = 0, \dots, d-1\,,\\
	1 = D_0 < D_1 < \cdots D_{d-1}\,,\\
	1 = C_0 < C_1 < \cdots C_{d-1}\,,
	\end{gather*}
	then we define the following sets:
	\begin{enumerate} 
		\item
		$\displaystyle{ Z^{(0)} = \left \lbrace \csi \in
                  \Z^d\ \left|\ |\scalgs{\csi_k}{k}| > \langle \csi_k
                  \rangle_{g}^{\delta} {\norgs{k}^{-\tau}} \quad
                  \forall k \in \Z^{d}\ \textrm{ s. t. } \norgs{k} \leq
                  \langle \csi_k \rangle ^{\ep}_{g} \right. \right
                  \rbrace} $
		\item given $M \subseteq \Z^d$ a resonance module of
                  dimension $s\geq 1$
		and $s$ linearly independent vectors $\{ {k}_1, {k}_2, \dots, {k}_s \} \subset M$, we define
		\begin{multline}
		Z_{{k}_1, \dots, {k}_s} = \Big \lbrace \csi \in \Z^d
                \ \Big|\  \forall j = 1,\dots s \quad
                |\scalgs{\csi_{{k}_1}}{{k}_j}| \leq C_{j-1} \langle
                \csi_{{k}_1} \rangle ^{\delta_{j-1}}_{g}
                    {\norgs{{k}_j}^{-\tau}} \\ \label{r.z.1}
		\textrm{ and } \norgs{{k}_j} \leq D_{j-1} \langle
                \csi_{{k}_1} \rangle^{\ep}_{g} \Big \rbrace 
		\end{multline}
		and
		\begin{equation} \label{res zone sd}
		Z^{(s)}_{M} = \bigcup_{\begin{subarray}{c}
			\{{k}_1, \dots, {k}_s\} \subset M \\ \textrm{lin. ind. }
			\end{subarray}} Z_{{k}_1, \dots, {k}_s}\,.
		\end{equation}
	\end{enumerate}
\end{definition}

\begin{remark}
  \label{keneso}
By \eqref{r.z.1}, $\forall
s\geq1$ and $\forall M$, one has $Z^{(s)}_M\cap Z^{(0)}=\emptyset$.
\end{remark}

\begin{remark} \label{rmk zone inscatolate}
If $1\leq r<s$, then for any $M$ with dim $M=s$, one has
  $$
	Z^{(s)}_{M} \subseteq \bigcup_{ \begin{subarray}{c} M^\prime \subset M\\
		\textrm{dim.} M^\prime = r 
		\end{subarray}} Z^{(r)}_{M^\prime}\,.
	$$
\end{remark}

\begin{remark}
  \label{metrinoi}
{The fact that the zones $Z^{(0)}$ and $Z_{k_1,...,k_s}$ are defined
only in terms of the metric is one of the key ingredients allowing to
iterate the structure theorem.} 
\end{remark}
The regions $Z^{(s)}_{M}$ contain points $\csi \in \Z^d$ which are in
resonance with \textit{at least} $s$ linearly independent vectors in
$M$. Thus such regions are clearly not reciprocally disjoint. We identify
now the points $\csi \in \Z^d$ which admit \textit{exactly} $s$ linearly
independent resonance relations.
\begin{definition}[Resonant blocks] \label{def blocks}\ {Consider the following sets:}
	\begin{enumerate} 
	\item
		$$
		B^{(d)} := Z^{(d)}_{\Z^d}\,.
		$$
        \item Given $M \subset \Z^d$ a resonance module of dimension $s \in \{ 1, \dots, d-1\},$
		$$
		B^{(s)}_{M} := Z^{(s)}_{M} \backslash \left \lbrace \bigcup_{\begin{subarray}{c} M^\prime \textrm{s.t.} \dim M^\prime = s+1 \end{subarray}} Z^{(s+1)}_{M^\prime} \right \rbrace
		$$
		\item
		$$
		B^{(0)}:= Z^{(0)}
		$$
	\end{enumerate}
{We say that $B$ is a resonant block if $B = B^{(d)}, B = B^{(0)}$ or $B = B^{(s)}_M$ for some module $M$ of dimension $s$.}
\end{definition}
\begin{remark}
  \label{nonio}
The resonant blocks form a covering  of $\Z^d$.
\end{remark}
As proven below in Lemma \ref{lemma che bei blocchi} there exists a
suitable choice of the constants $C_s,\ D_s,\ \delta_s$ such that two blocks
$B^{(s)}_{M},\ B^{(s)}_{M^\prime} $ are disjoint if $M,\ M^\prime$ are
two distinct subspaces of equal dimension.

Still the blocks defined in Definition \ref{def blocks} do not provide
a suitable partition of $\Z^d,$ since  they are not left
invariant by the operator $\widetilde H$ of eq. \eqref{tu}.

Recall now that,
given two sets $A $ and $B,$  their Minkowski sum $A+B$ is defined by:
		$$
		A + B := \{ a + b \quad  | \quad a \in A\,,\ b \in B
                \}\ .
		$$
\begin{definition}[Extended blocks] \ 
	\begin{enumerate} 
		\label{def blocchi estesi}
\item $\displaystyle{E^{(0)}:=B^{(0)}\equiv Z^{(0)}}$
              \item Given a resonance module $M$ of dimension $1$, we define
		$$
		E^{(1)}_{M} := \left\{ B^{(1)}_{M} + M \right\} \cap Z^{(1)}_{M}\,,
		$$
		$$
		E^{(1)} := \bigcup_{M \textrm{ of dim. } 1} 
		E^{(1)}_{M}
		$$
		\item Given a resonance module $M$ of dimension $s$,
                  with 
                  {$2\leq s\leq d$,} 
                   we define
		$$
		E^{(s)}_{M} := \left\{ B^{(s)}_{M} + M \right\} \cap
                Z^{(s)}_{M} \cap 
                {\bigcap_{j=1}^{s-1} \left(E^{(s-j)}\right)^{c}\,,}
		$$
		{where, given $E \subseteq \Z^d,$ $E^c$ is the complementary set of $E$ in $\Z^d$.}
Correspondingly we define
		$$
		E^{(s)} := \bigcup_{ M \textrm{ of dim. }s } E
		^{(s)}_{M}\,.
		$$
	\end{enumerate}
\end{definition}
\begin{remark} \label{lemma ricoprimento}
The blocks $\{E^{(s)}_{M}\}_{M, s},\ E^{(0)},\ E^{(d)}\,$ form a
covering of $\Z^d$. {Actually, as shown in Theorem \ref{cor partizione} below, they form a partition of $\Z^d.$}
\end{remark}

It turns out that the decomposition $\Z^d=\bigcup_ME^{(s)}_M$ is
invariant for the operator $\widetilde H$ of Theorem
\ref{main}. Furthermore the sets  $E^{(s)}_M$ can still be decomposed
in invariant subsets which are given by
\begin{equation}
  \label{veri}
W_{M,\beta}:=E^{(s)}_M\cap(\beta+M)\ ,\quad 
\end{equation}
\begin{definition}
  \label{tildee}
The set of the $\beta\in {\Mc}$ s.t. the set \eqref{veri} is not empty
is denoted by $\widetilde M$.
\end{definition}

\begin{theorem}
  \label{specifica}
The sets $W_{M,\beta}$ of Theorem \ref{main} are the sets defined by
equation \eqref{veri}\ .
\end{theorem}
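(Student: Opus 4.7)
The plan is to identify the abstract invariant sets $W_{M,\beta}$ produced by the normal form procedure of Theorem \ref{main} with the explicit combinatorial sets defined in \eqref{veri}. The argument splits naturally into three steps.

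First, I would invoke the partition property: by Remark \ref{lemma ricoprimento}, the extended blocks $\{E^{(s)}_M\}$ form a covering of $\Z^d$, and by (the forthcoming) Theorem \ref{cor partizione} they are also pairwise disjoint. Disjointness across different dimensions $s\neq s'$ is built into Definition \ref{def blocchi estesi} through the intersection with $\bigcap_{j=1}^{s-1}(E^{(s-j)})^c$; disjointness at equal dimension $s$ between $E^{(s)}_M$ and $E^{(s)}_{M'}$ with $M\neq M'$ follows from the geometric separation of the resonant zones $Z^{(s)}_M$, which is ensured by a suitable choice of the constants $C_s,D_s,\delta_s$ (the content of Lemma \ref{lemma che bei blocchi}).

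Second, I would decompose each extended block along affine $M$-cosets. Fix a module $M$ of dimension $s$ and coordinates adapted to $M$, so that each $\xi\in\Z^d$ has a unique representation $\xi=\beta+m$ with $\beta\in M^{(c)}$ and $m\in M$. The key observation is that $E^{(s)}_M$ is \emph{$M$-saturated} in the following sense: if $\xi\in E^{(s)}_M$ then the entire coset $\beta+M$ intersected with the constraints defining $E^{(s)}_M$ is contained in $E^{(s)}_M$. This is immediate from the definition $E^{(s)}_M=(B^{(s)}_M+M)\cap Z^{(s)}_M\cap\bigcap_{j}(E^{(s-j)})^c$: adding an element of $M$ to $\xi$ does not change the residue $\tilde\xi$ nor the resonance relations with vectors of $M$, hence preserves membership in $Z^{(s)}_M$ and in the complements of the lower-dimensional blocks (which are themselves $M'$-saturated for $M'\subsetneq M$ of smaller dimension). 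Setting $\widetilde M$ as in Definition \ref{tildee}, this yields the disjoint decomposition $E^{(s)}_M=\bigsqcup_{\beta\in\widetilde M}W_{M,\beta}$ with $W_{M,\beta}=E^{(s)}_M\cap(\beta+M)$.

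Third, the invariance under $\widetilde H$ must be checked. By the construction in Theorem \ref{main} (more precisely, by the normal form argument carried out in Sections \ref{sez forma normale nonris}--\ref{sez blocchi}), the operator $\widetilde H$ has only Fourier modes $e^{ik\cdot x}$ with $k$ belonging to the resonance module attached to each point: on the component of $L^2(\T^d)$ generated by a point of $B^{(s)}_M$, only modes $k\in M$ survive after the normal form procedure. Acting on $e^{i\xi\cdot x}$ with $\xi\in W_{M,\beta}$, $\widetilde H$ therefore produces linear combinations of $e^{i(\xi+k)\cdot x}$ with $k\in M$, and the image stays inside the coset $\beta+M$. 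Combined with the $M$-saturation established in step two and the containment in $Z^{(s)}_M$, the image remains in $W_{M,\beta}$. Since this identification is consistent with the partition property and invariance required by Theorem \ref{main}, the $W_{M,\beta}$ of the Structure Theorem coincide with \eqref{veri}.

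The main obstacle I expect is the second step, namely justifying rigorously the $M$-saturation of $E^{(s)}_M$: one must carefully track that the zones $Z_{k_1,\ldots,k_s}$ defined in \eqref{r.z.1} are themselves invariant under translations by elements of $M$ (modulo the inductive exclusion of higher-dimensional zones). This relies on the use of $\csi_{k_1}$ rather than $\csi$ in the defining inequalities, and on showing that the choice of basis $\{k_1,\ldots,k_s\}$ of $M$ appearing in \eqref{res zone sd} can be made coherently for all points in a coset $\beta+M$.
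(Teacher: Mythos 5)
Your overall skeleton (partition of the extended blocks, coset decomposition, invariance via the support of the normal form symbol) matches the paper's strategy, and your step one and the first half of step three (Lemma \ref{supporto}: the normal form operator maps $e^{\ii\xi\cdot x}$, $\xi\in E^{(s)}_M$, only to modes $e^{\ii(\xi+k)\cdot x}$ with $k\in M$) are exactly what the paper does. The genuine gap is in your step two. The claimed ``$M$-saturation'' of $E^{(s)}_M$ is false: adding $v\in M$ to $\xi$ \emph{does} change the resonance relations, since $\scalgs{(\xi+v)_{k_1}}{k_j}$ differs from $\scalgs{\xi_{k_1}}{k_j}$ by $\scalgs{v}{k_j}$, which is unbounded along the coset. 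Indeed Lemma \ref{lemma piccole proiez} shows that points of $Z^{(s)}_M$ satisfy $\norm{(\xik)_M}\leq K\langle\xik\rangle^{\delta_{s-1}+d\epsilon}$, so the full coset $\xi+M$ is never contained in $Z^{(s)}_M\supseteq E^{(s)}_M$; this is precisely why Definition \ref{def blocchi estesi} re-intersects $B^{(s)}_M+M$ with $Z^{(s)}_M$. As stated (``the coset intersected with the constraints defining $E^{(s)}_M$ is contained in $E^{(s)}_M$'') the saturation claim is vacuous, and the justification ``immediate from the definition'' does not survive scrutiny. Consequently your step three, which leans on this saturation to conclude that the image stays in $W_{M,\beta}$, does not close.

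What the paper actually proves is weaker and is the real content of Theorem \ref{teo dieci piccoli blocchi}: only the \emph{specific} shifts produced by the normal form operator, namely $\xi=\zeta+k$ with $k\in M$ and $\norm{k}\leq\langle\zeta_k\rangle^{\epsilon}$, keep you inside $E^{(s)}_M$. This requires two ingredients you do not supply: (i) the separation-of-resonances Lemma \ref{lemma speriamo sia vero}, which shows that such a shifted point $\zeta+k$ cannot lie in $Z^{(s')}_{M'}$ for any $M'\not\subset M$ with $\dim M'\leq s$ (proved by transporting the resonance relations back to a point of $B^{(s)}_M$ via Lemmas \ref{lemma diam} and \ref{xixik} and deriving a contradiction with $\xi\in B^{(s)}_M$); and (ii) an induction on $s$ using the partition property and the self-adjointness symmetry $\langle e^{\ii\xi\cdot x},\NForm e^{\ii\zeta\cdot x}\rangle=\langle\NForm e^{\ii\xi\cdot x},e^{\ii\zeta\cdot x}\rangle$ to exclude that $\xi$ falls into a lower-dimensional extended block. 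Once invariance and \eqref{piano affine} are established, the identification $W_{M,\beta}=(\beta+M)\cap E^{(s)}_M$ is immediate, as in your final step. You should replace the saturation argument by these two ingredients.
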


\subsection{Iteration and Spectral Theorem}\label{iterami}

\change{Given $\cV \in \ops{0}$,} Theorem \ref{main} allows to conjugate, up to smoothing operators, the operator \change{$H = -\Delta_{g, \kappa} + \cV$} to a sequence
of lower dimensional Schr\"odinger operators, the majority of which is
trivial (there are infinitely many Fourier multipliers and one finite
dimensional operator). In order to study the nontrivial Schr\"odinger
operators one can apply again Theorem \ref{main} to the operators of
eq. \eqref{onestep}. In this way one can conjugate each of these
operators to Schr\"odinger operators of lower dimension. Iterating
further and further, one is finally reduced to either finite
dimensional operators or Fourier multipliers.

\begin{remark}
  \label{direz}
The Schr\"odinger operators of eq. \eqref{onestep} act on
$\T^{d'}$ and the corresponding symbols, written in coordinates adapted to
$M$ depend only on the first $d'$ variables (both $x$ and $\xi$). If
one looks at such a symbol as the symbol of an operator on the original
torus, namely as a function in $\cC^{\infty}(T^*\T^d)$, then one has
that taking derivatives with respect to the $\xi$ variables does not
improve the decay in the directions of the variables which are not
present in the symbol, namely ($\xi^{d'+1},...,\xi^d$). For this
reason we will get that some eigenvalues (these are the unstable
eigenvalues of \cite{FKT2}) have asymptotics with only a directional decay.
\end{remark}

Directional decay is captured by the following definition which avoids
the introduction of adapted coordinates.

\begin{definition}
  \label{simdirez}
Let $m \leq 0,$ and let $M \subset \Z^d$ be a {proper} submodule, we say
that $a\in \cC^{\infty}(T^*\T^d)$ is a symbol of order $m$ in the
direction  $M$ if $\forall N_1, N_2 \in \N^{d}$ there exists a
constant $C_{N_1, N_2}>0$ such that
\begin{equation}
  \label{sidireq}
 \| d_x^{N_1} d_{\csi}^{N_2} a(x, \csi)\| \leq
C_{N_1, N_2} \langle (\csi + \kappa)_M \rangle_{g}^{m-\delta N_2} \quad \forall
x \in \T^d,\ \forall \csi \in \R^d\,.
\end{equation}
In this case we will write $a\in\simb m_M$. 
\end{definition}

\begin{definition}
\label{sim.1}
Given a module $M\subset \Z^d$,  a sequence of symbols 
$m_j\in \simb {-2j\delta}_{M}$, $j\geq0$, depending only on $\xi$ and a function
$m(\xi)$, possibly defined only on $\Z^d$ or on a subset $E$ of $\Z^d$, we
write 
\begin{equation}
\label{sim.eq}
m\simm{M} \sum_{j}m_j\ ,
\end{equation} 
if for any $N$ there exists $C_N$ s.t.
\begin{equation}
\label{sim.eq.2}
\left|m(\xi)-\sum_{j=0}^{N}m_j(\xi)\right|\leq\frac{
  C_N}{\langle (\xi+\kappa)_M\rangle_{g}^{(N+1)2 \delta}}\ .
\end{equation}
\end{definition}
\begin{definition}
  \label{ammissibile}
A sequence of modules
\begin{equation}
  \label{seq}
\Z^d\supset M^{(1)}\supset...\supset M^{(r)}\ ,\quad {\rm
  dim}M^{(j)}=d_j\ ,
\end{equation}
will be said to be admissible if
\begin{equation}
  \label{dj}
d_r\leq d_{r-1}<d_{r-2}<...<d_{1}\leq d\ ,
\end{equation}
and either $d_r=d_{r-1}$ or $d_r=0$ (namely the sequence ends when
either the last module coincides with the previous one or it consists
of $\{0\}$).

The number $r$ will be called the length of the sequence.

We will denote by $\mass$ the set of all admissible sequences of modules.
\end{definition}
We also denote $ \vec M:=(M^{(1)},...,M^{(r)}) $.

Let now $\vec M\in\mass$, then for any $j$ consider a module
${\left(M^{(j)}\right)^{(c)}}$ complementary to $M^{(j)}$ in $M^{(j-1)}$, namely a module
such that
$$
M^{(j)}+{\left(M^{(j)}\right)^{(c)}}=M^{(j-1)} \ ,\quad M^{(j)}\cap
{\left(M^{(j)}\right)^{(c)}}=\left\{0\right\}\ ,
$$
then the above construction forces to use also subsets
$$
\widetilde M^{(j)}\subset {\left(M^{(j)}\right)^{(c)}} \ .
$$
We denote
$$
\vec
M\widetilde{\ }:=(\widetilde M^{(1)},...,\widetilde M^{(k)})\ ,
$$ then the sequence of normalizations that one performs is determined
by the couple $(\vec M,\vec\beta)$ with $\vec\beta\equiv
(\beta_1,...,\beta_k)\in\vec M\widetilde{\ }$.

\begin{theorem}
  \label{speci_spettri}
  \change{Given $\cV \in \ops{0}$ and $H$ as in \eqref{g},} {there exists a bijective map
    \begin{equation}
      \label{mappa g}
\Z^d\ni\xi\mapsto \lambda_\xi\in \sigma(H)\ ,
    \end{equation}
(the eigenvalues being counted with multiplicity)    with the following properties:}
\begin{itemize}
  \item[(i)] {  there exists a constant $a\in(0,1)$, s.t. $\forall
    \tN\in\N$ there exists a constant $C_{\tN}$ s.t. the eigenfunction
    $\phi_\xi$ corresponding to $\lambda_\xi$ fulfills
    \begin{equation}
      \label{atima_soba}
\norma{\phi_\xi}_{H^{-\tN}}\leq \frac{C_{\tN}}{\lambda_\xi^{a\tN}}\ ,
    \end{equation}}

\item[(ii)]  {There exists a partition $$\Z^d=\bigcup_{\vec
  M\in\mass}\bigcup _{\vec\beta\in \vec M\widetilde {\ }}W_{\vec
  M,\vec \beta}\ , $$
and for any ${(\vec M, \vec \beta)}$ with $\vec M \in \mass$ and $\vec \beta \in \vec M\widetilde{\ } $ 
 there exists a sequence of $x$ independent symbols $\{m^{(j)}_{\vec
   M,\vec\beta}\}_{j \in \N},$ $m^{(j)}_{\vec M,\vec\beta} \in
 S^{-2\delta j, \delta}_{M^{(r-1)}}\ \forall j,$  with the following property. If $\xi \in W_{\vec
  M,\vec \beta}$, then
$\lambda_\xi$
    admits the asymptotic expansion
    \begin{equation}
      \label{asyfinal}
\lambda_\xi\mathop{\sim}^{M^{(r-1)}}\norgs{\xi+\kappa}^2+ \sum_{j \in \N} m^{(j)}_{\vec M,\vec\beta}(\xi)\ ,
    \end{equation}
    where $r$ is the length of the sequence $\vec M$. The operator $H$
    does not have other eigenvalues. Furthermore, the constants $C_N$
    of \eqref{sim.eq.2} are uniform with respect to the choice of the
    pair $(\vec M,\vec\beta)\,.$ } \end{itemize}
\end{theorem}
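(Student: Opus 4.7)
\medskip

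\noindent\textbf{Proof plan.}
The strategy is an iterative use of the Structure Theorem combined with a clustered quasimode argument.
First I would iterate Theorem \ref{main}. Starting from $H$, applying the theorem produces the decomposition $\Z^d=\bigcup_{M,\beta}W_{M,\beta}$; on the trivial blocks $W_{\{0\},\beta}$ and $W_{\Z^d,\{0\}}$ the restricted operator $\widetilde H$ is, respectively, a Fourier multiplier and a finite dimensional matrix of dimension $\leq n_*$, both of which are ``terminal''. On every non trivial block $W_{M,\beta}$, item 3.2 says that $\widetilde H_{\cW_{M,\beta}}$ is, after the Gauge transformation $\gmap{\beta}$, a Schr\"odinger operator $H^{(1)}_{M,\beta}$ on $\T^{d'}$ with $d'=\dim M$, a pseudodifferential perturbation in $\ops{0}$, Floquet parameter $\kappa'=\{(\beta+\kappa)_M\}$ and metric $g^*\!\!\restriction_{M}$, and with seminorms uniformly controlled by the seminorms of $\cV$ and by the constants of the metric. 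One can therefore apply Theorem \ref{main} again to each such $H^{(1)}_{M,\beta}$, partitioning $W_{M,\beta}$ further. Since the dimension strictly decreases at each step (unless one already arrives at the full-dimensional terminal block), the iteration terminates after at most $d$ steps and produces exactly the admissible sequences $\vec M\in\mass$ and parameters $\vec\beta\in\vec M\widetilde{\ }$, together with the final partition $\Z^d=\bigcup_{\vec M,\vec\beta}W_{\vec M,\vec\beta}$.

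Collecting the conjugations of the iteration, we obtain $\mathcal{U}H\mathcal{U}^{-1}=\widetilde{\mathcal{H}}+\mathcal{R}$, where $\widetilde{\mathcal{H}}$ leaves each $\cW_{\vec M,\vec\beta}$ invariant and, on such a block, is either a Fourier multiplier (when $d_r\leq 1$ and one terminates on a $1$-dimensional Schr\"odinger operator whose symbol is $\xi$-dependent only) or a finite-dimensional operator of dimension bounded by $n_*$. In the Fourier multiplier case, the symbol of $\widetilde{\mathcal{H}}$ restricted to $\cW_{\vec M,\vec\beta}$ is of the form $\norgs{\xi+\kappa}^2+\sum_{j}m^{(j)}_{\vec M,\vec\beta}(\xi)$ with $m^{(j)}_{\vec M,\vec\beta}\in S^{-2j\delta,\delta}_{M^{(r-1)}}$, because at each step the remainder of Theorem \ref{main} is smoothing in the relevant subdirection, and composition of these directional regularities yields regularity only in $M^{(r-1)}$ (this is the source of the directional expansion, as explained in Remark~\ref{direz}). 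For terminal finite dimensional blocks of bounded size one diagonalizes $\widetilde{\mathcal{H}}_{\cW_{\vec M,\vec\beta}}$ explicitly, obtaining $\leq n_*$ eigenvalues each of which still admits \eqref{asyfinal} because the off--diagonal entries decay sufficiently fast in the direction $M^{(r-1)}$.

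The second ingredient is the quasimode argument reconstructing $\sigma(H)$ from $\sigma(\widetilde{\mathcal{H}}+\mathcal{R})$. I would partition $\sigma(\widetilde{\mathcal{H}})$ into clusters: by Weyl's law, for $\lambda$ large the set of eigenvalues $\mu_\xi$ of $\widetilde{\mathcal{H}}$ in an interval of length $\lambda^{-\delta'}$ around $\lambda$ has cardinality which is polynomially controlled, and one can choose intervals separated by gaps comparable to $\lambda^{-\tN\rho/2}$. On each such cluster, the restriction of $\widetilde{\mathcal{H}}+\mathcal{R}$ differs from $\widetilde{\mathcal{H}}$ by an operator whose off-cluster components are controlled via the smoothing character of $\mathcal{R}$ and the Sobolev-decay properties of the block eigenfunctions; a perturbative quasimode lemma (in the spirit of \cite{BKP15} and the material of Appendix \ref{aquasi}) then produces a bijection between the eigenvalues of the full operator in the cluster and those of the normal form, with error $O(\lambda^{-\tN\rho/2})$. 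Arbitrariness of $\tN$ yields the full asymptotic \eqref{asyfinal} and the injectivity of the map $\xi\mapsto\lambda_\xi$. The Sobolev decay \eqref{atima_soba} follows by transferring the support property of the normal form eigenfunctions (which live essentially in $\cW_{\vec M,\vec\beta}$ hence at scale $\norma\xi$) through $\mathcal{U}^{-1}$, using the bound \eqref{bohr} and iterating.

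The hardest step I expect is the clustered quasimode construction in the final paragraph: one must couple the geometric partition of $\Z^d$ with the Weyl-type spectral clustering in a uniform way over infinitely many blocks of unbounded dimension (the resonant ones that have not yet reached their terminal form until the very last iteration), and one needs the constants of Appendix \ref{aquasi} to be uniform in the pair $(\vec M,\vec\beta)$; this is exactly where the uniformity statement at the end of Theorem \ref{main} is crucial. A delicate point will also be to exclude spurious eigenvalues, i.e.\ to prove that the candidate map $\xi\mapsto \lambda_\xi$ is surjective onto $\sigma(H)$; this will be handled by a counting argument comparing $\#\{\xi:\lambda_\xi\in I\}$ with the Weyl asymptotics of $H$ in $I$.
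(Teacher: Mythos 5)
Your plan coincides with the paper's actual proof: iterate the Structure Theorem down to Fourier multipliers and uniformly bounded finite-dimensional blocks, lift the lower-dimensional spectral data through the Gauge maps (controlling negative Sobolev norms of eigenfunctions), cluster $\sigma(\widetilde{\mathcal H})$ via a Weyl-law gap argument, and run the quasimode lemma in both directions ($H_1=\mathcal R$ and $H_1=-\mathcal R$) to obtain an exact eigenvalue count in each cluster and exclude spurious eigenvalues — which is precisely the content of Lemmas \ref{in su}, \ref{scalata finale}, Corollary \ref{spettro} and Lemma \ref{prop quasimodi}. No gap to report; the proposal is correct and follows essentially the same route as the paper.
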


\begin{figure}
\centerline{\includegraphics[scale=0.6]{./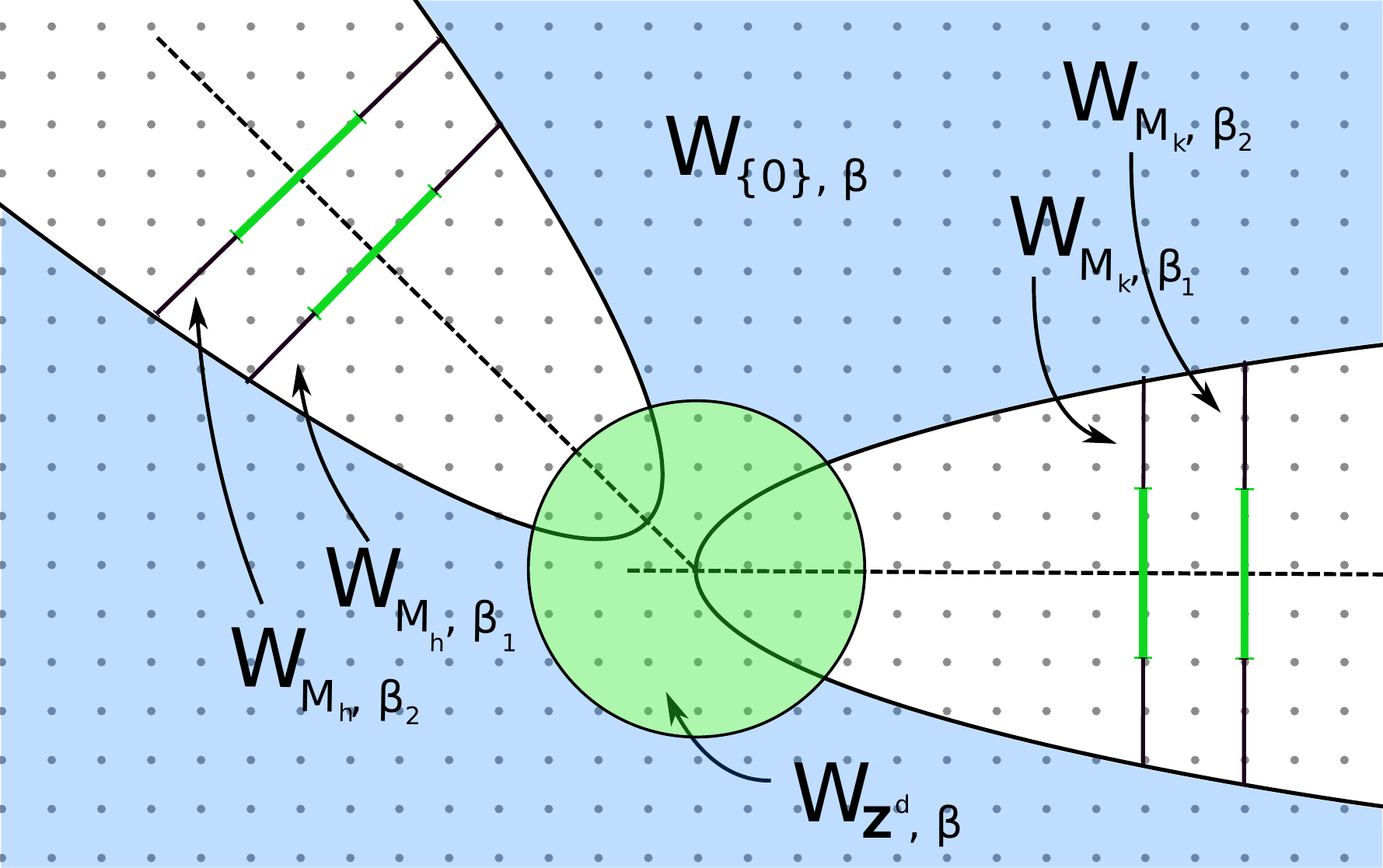}}
\caption{{ A cartoon of the block decomposition described in Theorem
    \ref{main}, in the case $d=2$, with $\kappa = 0$ and $g$ the
    Euclidean metric.
}}
\label{fig.reticolo}
\end{figure}
{The situation is illustrated in Figure \ref{fig.reticolo} in the
  case $d=2$, $\kappa = 0$ and $g$ the Euclidean metric. In such
  a case either $M = \Z^2,$ or $M = \{0\}$ or $M = M_k =
  \Span_\Z\{k\}$ for some $k \in \Z^2\,.$ In the figure only the
  resonant zones corresponding to $k=(0,1)$ and $k=(1,1)$ are plotted.
  In the blocks corresponding to
  $M = \{0\}$ (in blue) the normal form operator is a Fourier
  multiplier and here one gets the standard asymptotic expansion.  The
  block corresponding to $M = \Z^2$ (in green) has finite
  dimension. In the other blocks (white region) one can apply again
  Theorem \ref{main} getting finite dimensional sub-blocks (green
  segments) with uniformly bounded dimension plus trivial blocks (in
  the black segments). When adding the remainder one gets the
  directional asymptotics with decay in the direction of the
  segments.  }

\vskip25pt

\Large  \noindent{\bf PART II: Proofs}
\normalsize

\section{Quantum normal form} \label{sez forma normale nonris}

In this section we give a variant of the normal form construction of
\cite{noi} {and \cite{\russi} suitable for our goal.}
 From
now on we will drop the index $g$ or $g^*$ from the notation of the
scalar products and of the norms. Of course the scalar products of
vectors will be computed using the metric $g_{AB}$ and those of
covectors using the metric $g^{AB}$. In particular the scalar products
involved in the definitions of resonance are always between covectors.

{
	\begin{definition} \label{def resonant op}
	A symbol $ \displaystyle{\nform(x, \csi) = \sum_{k \in \Z^d}
          \hat{\nform}_k(\csi) e^{\ii k \cdot x}}\change{\in\simb m, m \in \R,} $ is said to be
        in (resonant) normal form if, $\forall k\in\Z^d \backslash \{ 0 \}\,,$
		\begin{multline} \label{potenziale risonante}
		\textrm{supp}( \hat{\nform}_k) \subseteq \bigg \lbrace \csi \in \R^d \ \bigg|\ \left|\scala{\csi + \kappa}{k} \right| \leq \left \langle \csi + \kappa
		 \right \rangle^{\delta} {\norm{k}^{-\tau}}\ \textrm{and} \quad \norm{k} \leq \langle
                 \csi + \kappa \rangle^{\ep}  \bigg
                 \rbrace \ .
		\end{multline}
\end{definition}
}
\begin{definition} Let $M\subset \Z^d$ be a module, then a symbol
  $\nform \in\change{\simb m}$ is said to be in \emph{normal form with respect to
    $M$} if it is in normal form and furthermore its Fourier transform
  is given by 
	\begin{equation}
	\nform(x, \csi) = \sum_{k \in M} \hat{\nform}_k (\csi) e^{\ii k \cdot x}\,.
	\end{equation}
\end{definition}
\begin{definition}
  \label{norop}
A pseudodifferential operator will be said to be in normal form
(resp. normal form with respect to a module $M$) if the
corresponding symbol is in normal form (resp. normal form with respect to a
module $M$).
  \end{definition}

\begin{lemma} \label{rmk reg 1d} In dimension one (namely if $d =1$)
  operators in normal form are smoothing, namely of class
$\ops{-\infty}$.
\end{lemma}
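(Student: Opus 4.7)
The plan is to exploit the fact that in $d=1$ the resonance support condition of Definition \ref{def resonant op} is so restrictive that each $\hat N_k$ with $k\neq 0$ is supported in a fixed bounded interval of $\csi$, and only finitely many such $k$ can contribute.

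First, for fixed $k\in\Z\setminus\{0\}$, I would rewrite the condition
$|\scala{\csi+\kappa}{k}|\leq\langle\csi+\kappa\rangle^\delta\norm{k}^{-\tau}$.
In one dimension both sides collapse to scalars, and using the metric constants ${\frak c},{\frak C}$, the inequality rearranges to
$$|\csi+\kappa|\,\leq\, c_g\,\langle\csi+\kappa\rangle^{\delta}\,|k|^{-(\tau+1)},$$
with $c_g$ depending only on the metric. Since $|k|\geq 1$ and $\delta<1$ by \eqref{legami parametri}, this forces $|\csi+\kappa|\leq C$ with $C=C(\delta,{\frak c},{\frak C})$ independent of $k$. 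Hence $\mathrm{supp}(\hat N_k)\subset\{|\csi+\kappa|\leq C\}$ for every $k\neq 0$.

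Combining with the second half of the support condition, $\norm{k}\leq\langle\csi+\kappa\rangle^\ep$, evaluated on this compact set, yields $|k|\leq C'$ uniformly, so only finitely many nonzero Fourier modes survive. For each of them $\hat N_k(\csi)$ is smooth and compactly supported in $\csi$, hence trivially belongs to $S^{-\infty,\delta}$: its $\csi$-derivatives are bounded and vanish outside a fixed compact set, so they decay faster than any negative power of $\langle\csi+\kappa\rangle$. Summing finitely many such terms (and absorbing, by the standing convention of the normal form construction of Section \ref{sez forma normale nonris}, the Fourier multiplier $\hat N_0(\csi)$ into the integrable part $h_0(\csi)$), one gets $\Op(N)\in\ops{-\infty}$.

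The only delicate point is the first step, where one needs the bound on $\mathrm{supp}(\hat N_k)$ to be uniform in $k$; this is exactly where the combination $|k|\geq 1$ together with $\delta<1$ is jointly used. In higher dimensions the argument breaks down because the scalar product $\scala{\csi+\kappa}{k}$ can vanish on entire hyperplanes, so $\mathrm{supp}(\hat N_k)$ is generally unbounded, which is precisely the source of the nontrivial resonant blocks addressed in the rest of the paper.
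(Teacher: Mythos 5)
Your proof is correct and follows essentially the same route as the paper's: in dimension one $\xi+\kappa$ and $k$ are collinear, so the small-divisor condition together with $\|k\|_{g^*}\geq{\frak c}^{1/2}$ and $\delta<1$ forces $\|\xi+\kappa\|_{g^*}$ to be bounded uniformly in $k$, hence each $\hat{N}_k$, $k\neq0$, is supported in a fixed compact set. You merely make explicit the final step (compact support in $\xi$, plus the finiteness of the surviving modes and the convention on the $k=0$ term, implies membership in $\ops{-\infty}$) that the paper leaves implicit.
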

\proof
Let ${\NForm = \Op(\nform)}
$ {be in normal form}
then, $\forall k \in \Z$
	$$
	\hat{\nform}_k(\csi) \neq 0 \Rightarrow |\scala{\csi + \kappa}{k}| \leq \langle \csi + \kappa \rangle^{\delta} {\| k\|^{-\tau} \leq \langle \csi + \kappa \rangle^{\delta} {\frak c}^{-\frac{\tau}{2}}}\,.
	$$
Since $\csi + \kappa \parallel k\,,$ it follows that
	\begin{align*}
	\norm{\csi + \kappa} & \leq {\frak c}^{-\frac{1}{2}} \norm{k} \norm{\csi + \kappa}\\
	& = {\frak c}^{-\frac{1}{2}} |\scala{\csi + \kappa}{k}| \\
	& \leq {{\frak c}^{-\frac{(1 + \tau)}{2}}} \langle \csi + \kappa \rangle^{\delta}
	\end{align*}
	which, by $\delta < 1\,,$ implies the existence of a
        constant $C$ such that $\|\csi +
        \kappa\| \leq C\,.$ 
 \qed

\begin{definition}\label{def average}
Given a symbol $a$ we define its average by
$$
\langle a \rangle (\xi) = \frac{1}{\mu_g(\T^d)}\int_{\T^d} a(x, \csi) \ d \mu_g(x)\,.
$$
If $ A = \Op(a)$, we denote $\langle A \rangle (D) = \Op(\langle a \rangle (\xi))$. 
\end{definition}

The following result is just a small modification of Theorem 5.1 of
\cite{noi} {(which in turn is a variant of Theorem 4.3 of \cite{PS10} and Theorem 9.2 of \cite{PS12}.)}.  
\begin{theorem} \label{ regolarizzaz non ris}
	Consider the operator
	$H = -\Delta_{g, \kappa}+  {\cal V}\,, $ with
${\cal V} = \Op(V) \in \change{\OPS^{\mu, \delta}\,, \mu < 2 \delta,}$ \change{and let $\rho$ as in \eqref{guadagno}.}
For all $\tN >0$ there exists a unitary transformation $U_\tN$ such
that
\begin{itemize}
\item[1)]         \begin{align}
            \label{boh}
	U_\tN - {\rm Id} \,,\ U_\tN^{-1} - {\rm Id} \in \change{\OPS^{\mu-\delta, \delta}}
        \\
\label{bah}
        U_\tN H U^{-1}_\tN= {\cal L}^{(\tN)} = \tilde{H}^{(\tN)} + {\cal R}^{(\tN)}
	\end{align}
with $\displaystyle{{\cal R}^{(\tN)} \in \change{\ops{\mu-\tN\rho}}}$, and{
\begin{equation}
  \label{HN}
\tilde{H}^{(\tN)} = -\Delta_{g, \kappa} + {\NForm}^{(\tN)}\,,
  \end{equation}
where $ {\NForm}^{(\tN)} \in \change{\OPS^{\mu, \delta}}$ is in resonant normal
form.\\ {
{Furthermore, \change{in the case $\mu \leq 0$} the families of seminorms of the operators ${\cal N}^{(\tN)}, {\cal R}^{(\tN)}, U_\tN$ only depend on the family of seminorms of the operator $\cV$ and on the constants of the metric, as well as on $\tN,$ on $d$ and on the parameters $\delta, \ep, \tau\,.$ }
}}
\item[2)] Let $E\subset\Z^d$ be a subset and let ${\cal E}$ be the
  space it generates according to \eqref{def cal e}. If ${\cal V}$
  leaves ${\cal E}$ invariant, namely $[{\cal V}, \Pi_{\cal E}] =
  0\,,$ then one has
	\begin{equation}\label{trasf commutano}
	[U_\tN\,, \Pi_{\cal E}] = 0 \,.
	\end{equation}
\end{itemize}
\end{theorem}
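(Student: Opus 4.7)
\medskip

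\noindent\textbf{Proof plan for Theorem \ref{ regolarizzaz non ris}.}

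The plan is to run a standard quantum Birkhoff-style iteration of length $\tN$, each step conjugating by the exponential of an anti-self-adjoint pseudodifferential operator in order to remove the non-resonant part of the perturbation, while keeping the normal-form part in place. At step $n$ we assume we have reached a presentation
\[
H^{(n)} = -\Delta_{g,\kappa} + \NForm^{(n)} + \cR^{(n)},\qquad \NForm^{(n)}\in\OPS^{\mu,\delta}\text{ in resonant normal form},\ \cR^{(n)}\in\OPS^{\mu-n\rho,\delta},
\]
with $\rho = 2\delta-\mu$. The initial step is $\NForm^{(0)}=0$, $\cR^{(0)}=\cV$. At step $n+1$ I would split $\cR^{(n)} = \cR^{(n)}_{\rm res}+\cR^{(n)}_{\rm nr}$ using the characteristic function of the resonant set in Definition \ref{def resonant op}, and solve the \emph{homological equation}
\[
-\ii[-\Delta_{g,\kappa},S^{(n)}] + \cR^{(n)}_{\rm nr} = 0
\]
at the level of principal symbols. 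Since $\{h_0,a\}(x,\xi) = 2\scalgs{(\xi+\kappa)}{\partial_x a(x,\xi)}$, on Fourier modes this becomes $2\ii\,\scalgs{\xi+\kappa}{k}\,\hat{s}^{(n)}_k(\xi)=\hat{r}^{(n,\mathrm{nr})}_k(\xi)$, so I would set
\[
\hat{s}^{(n)}_k(\xi) := \frac{\hat{r}^{(n,\mathrm{nr})}_k(\xi)}{2\ii\,\scalgs{\xi+\kappa}{k}}
\]
on the non-resonant part of its support (and $0$ elsewhere). The small divisor estimate $|\scalgs{\xi+\kappa}{k}|\geq \langle\xi+\kappa\rangle_g^{\delta}\norgs{k}^{-\tau}$, together with the cut-off $\norgs{k}\leq\langle\xi+\kappa\rangle_g^{\ep}$ inherited from the symbol support, gives $S^{(n)}\in\OPS^{\mu-n\rho-\delta,\delta}$, using the hypothesis $\ep(\tau+1)\leq\delta$ to absorb the derivatives that fall on the small divisor when computing symbol seminorms.

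Next I would set $U_n := \exp(\ii \Op^W(s^{(n)})^{\rm sym})$, with the symmetrization ensuring unitarity. Expanding $U_n H^{(n)} U_n^{-1}$ via the Baker--Campbell--Hausdorff formula combined with the Moyal expansion yields
\[
H^{(n+1)} = -\Delta_{g,\kappa} + \NForm^{(n)} + \cR^{(n)}_{\rm res} + \widetilde\cR^{(n+1)},
\]
where $\widetilde\cR^{(n+1)}$ collects (i) the Moyal higher-order terms, each bringing an extra factor $\langle\xi+\kappa\rangle_g^{-2\delta}$, (ii) the commutator $[\NForm^{(n)},S^{(n)}]$, and (iii) iterated commutators with $S^{(n)}$; by the composition rules of pseudodifferential calculus (Appendix \ref{pdc}) each of these lies in $\OPS^{\mu-(n+1)\rho,\delta}$. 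Then I would set $\NForm^{(n+1)}:=\NForm^{(n)}+\cR^{(n)}_{\rm res}$, which stays in normal form, and $\cR^{(n+1)}:=\widetilde\cR^{(n+1)}$. After $\tN$ iterations I define $U_\tN := U_{\tN-1}\cdots U_0$, so that \eqref{boh} and \eqref{bah}--\eqref{HN} follow; the gain \eqref{boh} at order $\mu-\delta$ comes from $U_\tN-\mathrm{Id}\in\OPS^{\mu-\delta,\delta}$ since $S^{(0)}\in\OPS^{\mu-\delta,\delta}$ dominates.

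For Item 2, I would observe that the property $[\cV,\Pi_\cE]=0$ means that the symbol of $\cV$ has Fourier support in frequencies $k$ such that the translation by $k$ preserves $E$; both the splitting resonant/non-resonant and the division by the scalar small divisor are done frequency-by-frequency, so $[S^{(n)},\Pi_\cE]=0$ for every $n$ inductively, hence $[U_n,\Pi_\cE]=0$ and \eqref{trasf commutano} follows.

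The delicate point — and in my view the main obstacle — is Item 1's last claim on uniform dependence of seminorms in the case $\mu\leq 0$. Each iteration produces constants involving the seminorms of the previous step through compositions, commutators and exponentials, so I would set up an inductive estimate $C_{N_1,N_2}(\NForm^{(n)}),\,C_{N_1,N_2}(\cR^{(n)})\leq \Phi_n(\{C_{M_1,M_2}(\cV)\}_{M_1\leq M_1(n),M_2\leq M_2(n)},\frak c,\frak C)$ with $\Phi_n$ a universal polynomial depending only on $n,d,\delta,\ep,\tau$; the boundedness $\mu\leq 0$ of the normal form part is what prevents the loss of derivatives from iterated commutators $[\NForm^{(n)},S^{(n)}]$ from growing out of control, because each such commutator produces a symbol of order $\mu+(\mu-n\rho-\delta)-\delta\leq \mu-n\rho-\rho$, and the prefactors are controlled by seminorms of $\NForm^{(n)}$ at finitely many fixed indices. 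The convergence and unitarity of $U_n$ at each step follow from $S^{(n)}\in\OPS^{\mu-n\rho-\delta,\delta}\subseteq\OPS^{-\delta,\delta}$, so $\Op^W(s^{(n)})^{\rm sym}$ is bounded and self-adjoint on $L^2$.
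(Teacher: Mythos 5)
Your plan follows the same route as the paper: an iterative quantum normal form in which, at each step, one conjugates by $e^{\ii\cG_j}$ with $\cG_j$ solving the homological equation $\{\|\xi+\kappa\|^2,G_j\}+(R^{(j)})^{(\rm nr)}=0$ by dividing the Fourier coefficients by the small divisor $2\ii(\xi+\kappa)\cdot k$, with the order bookkeeping $\rho=2\delta-\mu$ per step and the commutation with $\Pi_\cE$ obtained by tracking the Fourier support in $x$; this is exactly the paper's argument.

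The one point where your proposal, as literally written, would break down is the splitting of $\cR^{(n)}$ into resonant and non-resonant parts ``using the characteristic function of the resonant set'': multiplying $\hat r_k(\xi)$ by the indicator of $\{\xi:\ |\scalgs{\xi+\kappa}{k}|\le\langle\xi+\kappa\rangle_g^\delta\norgs{k}^{-\tau}\}$ produces a function that is discontinuous in $\xi$, so neither piece is a symbol in $S^{m,\delta}$, and the subsequent division by the small divisor does not yield a pseudodifferential operator $\cG_j$ with controlled seminorms. The paper avoids this by using the smooth cutoffs $\chi_k(\xi)=\chi\bigl(2\|k\|^\tau(\xi+\kappa,k)/\langle\xi+\kappa\rangle^\delta\bigr)$ and $\tilde\chi_k(\xi)=\chi\bigl(\|k\|/\langle\xi+\kappa\rangle^\ep\bigr)$ of Definition \ref{rmk dec symb}, which decompose $w$ into $\langle w\rangle+w^{(\rm nr)}+w^{(\rm res)}+w^{(S)}$ with all pieces genuine symbols (the extra piece $w^{(S)}$, supported at high frequencies $k$, being smoothing); on the support of $1-\chi_k$ the divisor is bounded below by $\tfrac12\langle\xi+\kappa\rangle^\delta\|k\|^{-\tau}$ and its $\xi$-derivatives are controlled, which is what makes $G_j\in S^{\mu-j\rho-\delta,\delta}$. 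With this standard modification (which also keeps the frequency-by-frequency structure needed for Item 2) the rest of your argument, including the inductive control of the seminorms for $\mu\le 0$, goes through as in the paper.
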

\begin{remark}\label{rmk proiettore}
	Assume that $\cV$ leaves invariant a subspace $\cal E$ of the form  \eqref{def cal e}. Then, by Item $2$ of Theorem \ref{ regolarizzaz non ris}, one also has
	$$
	U_\tN \Pi_{\cal E} H \Pi_{\cal E} U_\tN^{-1} = \Pi_{\cal E} \cL^{(\tN)} \Pi_{\cal E}\,.
	$$
\end{remark}


{The proof of Theorem \ref{ regolarizzaz non ris} is a small variant
  of the proof of Theorem 5.1 of \cite{noi} (and of Theorem 4.3 of \cite{PS10}),
  so here it will  only be sketched.}

The proof is obtained working
at the level of the symbols and is based on a decomposition that we
now recall. First consider an even function $\chi: \R \rightarrow [0,
  1]$ with the property that {$\chi(t) = 1$} for all $t$ with $|t| \leq
\frac{1}{2}$ and {$\chi(t) = 0$} for all $t$ with $|t| \geq 1.$ 

\begin{definition}\label{rmk dec symb}
Given $\ep, \delta > 0$ and $\tau > d-1$ as in Theorem \ref{main},  define the following functions:
\begin{align*}
\chi_k(\csi) = \chi\left(\frac{2 \|k\|^\tau (\csi + \kappa, k) }{\langle
  \csi + \kappa \rangle^\delta}\right)\ , \quad  k &\in \Z^d\backslash
\{0\},
\\
\tilde{\chi}_k(\csi) = \chi \left(\frac{\|k\|}{\langle \csi + \kappa
  \rangle^\ep}\right)\ , \quad k &\in \Z^d\backslash \{0\}\ .
\end{align*}
Correspondingly, given a symbol $w \in \change{S^{m,\delta}}$, we decompose
it as follows:
\begin{equation}\label{w a pezzi}
w = \langle w \rangle +  w^{(\rm nr)} + w^{(\rm res)} + w^{(S)}\,,
\end{equation}
where $\langle w \rangle$ is the average symbol of $w\,,$ and
\begin{equation*} 
\begin{gathered}
w^{(\rm res)}(x, \csi) = \sum_{k \neq 0} \hat{w}_k(\csi) \chi_k(\csi) \tilde{\chi}_k(\csi) e^{\ii k \cdot x}\,,\\
w^{(\rm nr)}(x, \csi) = \sum_{k \neq 0} \hat{w}_k(\csi) \left(1 - \chi_k(\csi)\right) \tilde{\chi}_k(\csi) e^{\ii k \cdot x}\,,\\
w^{(S)}(x, \csi) = \sum_{k \neq 0} \hat{w}_k(\csi) \left(1 - \tilde{\chi}_k(\csi)\right) e^{\ii k \cdot x}\,.
\end{gathered}
\end{equation*}
\end{definition}

As proved in Lemma 5.6 of \cite{noi}, the functions $\langle w
\rangle\,, w^{(\rm nr)}\,, w^{(\rm res)}$ are symbols of the same
order of $w$ while $w^{(S)} \in \simb{-\infty}$. 

\begin{proof}[Sketch of the proof of Thm \ref{ regolarizzaz non ris}]
  Following \cite{noi}, 
  Part 1 is proved  iteratively: consider an operator of the form 
$$
H_j = -\Delta_{g, \kappa} + {\NForm}^{(j)} + {\cal R}^{(j)}\,,
$$
with ${\NForm}^{(j)} \in \change{\ops{\mu}}$ in resonant normal form and ${\cal
  R}^{(j)} \in \change{\OPS^{\mu-j\rho, \delta}.}$ We look for a pseudodifferential operator
$\cG_j$ such that 
$$
e^{\ii {\cal G}_j} H_j e^{-\ii {\cal G}_j} = H_{j+1} .
$$
To this end remark that, by standard pseudodifferential calculus, one
has
$$
e^{\ii {\cal G}_j} H_j e^{-\ii {\cal G}_j} = -\Delta_{g,\kappa}{-}\im
[-\Delta_{g,\kappa};\cG_j ]+
\NForm^{(j)}+\cR^{(j)}+\text{lower\ order\ terms}\ 
$$
\change{(see also the results collected in Appendix \ref{pdc}).}
If $G_j$ is the symbol of $\cG_j$ and $R^{(j)}$ the symbol of $\cR^{(j)}$,
then the symbol of $\im
[-\Delta_{g,\kappa};\cG_j ]+\cR^{(j)}$ is, using the decomposition of
Definition \ref{rmk dec symb}, 
\begin{align*}
\left\{ \|{\xi+\kappa}\|^2;G_j  \right\}+  \langle R^{(j)} \rangle +  (R^{(j)})^{(\rm nr)} + (R^{(j)})^{(\rm res)} + (R^{(j)})^{(S)}\,.
\end{align*}
This is in normal form up to smoothing terms if $G_j$ is chosen in
such a way that the following equation holds:
$$
\{ \|\csi + \kappa\|^2,\ G_j \} + (R^{(j)})^{(\rm nr)} = 0\ .
$$
This equation is fulfilled if $G_j$ is defined by
\begin{equation}\label{g eq}
G_j(\csi, x) = - \sum_{k \neq 0} \frac{\widehat{\left((R^{(j)})^{\rm
      (nr)}\right)}_{k}(\csi) }{ 2 i (\xi+\kappa)  \cdot k} e^{\ii k \cdot x}\,.
\end{equation}
Using such a $G_j$ to generate the corresponding unitary
transformation and iterating, one gets the proof of part 1 of the
Theorem.

To prove part 2, namely the commutation relation \eqref{trasf
  commutano}, we proceed inductively. First of all we observe that, in
the case $j = 0,$ ${\NForm}^{(0)} = 0$ and
$$
[{\cal R}^{(0)}, \Pi_{\cal E}] = [{\cal V}, \Pi_{\cal E}] = 0\,.
$$
Let us now fix some $j \geq 0$ and suppose that ${\NForm}^{(j)}$ and
${\cal R}^{(j)}$ commute with $\Pi_{\cal E}$.

Given a self-adjoint operator $A,$ since ${\cal E}$ has the form
$\displaystyle{{\cal E} = \overline{\Span \{e^{\ii k \cdot x}\ |\ k \in E\}}}$,
the condition
$[A, \Pi_{\cal E}] = 0$ holds if and only if
\begin{equation}\label{v commuta}
\left(k \in E\ , \quad A_{k}^{k^\prime} \neq 0\right) \Rightarrow k^\prime \in E\,,
\end{equation}
where 
$\displaystyle{A_{k}^{k^\prime} = \frac{1}{\mu_g(\T^d)}\langle A e^{\ii k \cdot x},\ e^{\ii
    k^\prime \cdot x} \rangle}$ are the matrix elements of
$A$ with respect to the basis of the Fourier modes.  Furthermore, by
definition of Weyl quantization one has that, if $A = \Op(a)\,,$
\begin{equation}\label{m_el_weyl}
A_{k}^{k^\prime} = \hat{a}_{k^\prime - k}\left( \frac{k +
  k^\prime}{2}\right)\ .
\end{equation}
Due to definitions \eqref{w a pezzi} of the symbols $(R^{(j)})^{\rm (nr)}$ and $(R^{(j)})^{\rm (res)},$ equation \eqref{m_el_weyl} immediately implies that 
\begin{gather*}
\left(\Op((R^{(j)})^{\rm (nr)})\right)_{k}^{k^\prime} \neq 0 \,, \textrm{ or }\  \left(\Op((R^{(j)})^{\rm (res)})\right)_{k}^{k^\prime} \neq 0\,  \textrm{ for some } k, k^\prime \in \Z^d\,,\\  \Longrightarrow ({\cal R}^{(j)})_{k}^{k^\prime} \neq 0\,,
\end{gather*}
Similarly, $$
\left({\cal G}_j\right)_{k}^{k^\prime} \neq 0 \,  \quad  \Longrightarrow ({\cal R}^{(j)})_{k}^{k^\prime} \neq 0\,.
$$
This, together with condition \eqref{v commuta}, enables to conclude that ${\cal G}_{j}$ commutes with $\Pi_{\cal E}\,,$ and so do $\Op\left((R^{(j)})^{\rm (res)}\right)$ and $\Op\left((R^{(j)})^{\rm (nr)}\right)\,.$ Hence $e^{-\ii {\cal G}_j}$ commutes with $\Pi_{\cal E},$ since ${\cal G}_j$ does. The same holds for ${\NForm}^{(j+1)},$ since
$$
[{\NForm}^{(j+1)}, \Pi_{\cal E}] = [{\NForm}^{(j)}, \Pi_{\cal E}] + [\Op\left(\langle R^{(j)} \rangle\right), \Pi_{\cal E}] + [\Op\left((R^{(j)})^{\rm (res)}\right), \Pi_{\cal E}] = 0\,,
$$
and
$$
{\cal R}^{(j+1)}  = e^{\ii {\cal G}_j} H_j e^{-\ii {\cal G}_j} -
\left( -\Delta_{g, \kappa} + {\NForm}^{(j+1)} \right)\ .
$$
\end{proof}

\section{Geometric part} \label{sez blocchi}

In order to iterate Theorem \ref{main} we will have to work in a
subspace of $L^2$ generated by some subset 
$E\subset \Z^d$. From now on we will {fix $E \subseteq \Z^d$ and  develop all the proofs taking
$\xi$ as a variable in $E$ .
Accordingly to this, we will replace the extended blocks $E^{(s)}_{M}$ of Definition \eqref{def blocchi estesi} with $E^{(s)}_{M} \cap E$ , which we still denote by $E^{(s)}_{M}\,.$ We will do the same for the blocks $B^{(s)}_{M}$ and for the zones $Z^{(s)}_{M}$ of Definitions \ref{def res zones}, \ref{def blocks}.}

\subsection{Properties of the extended blocks $E^{(s)}_{M}$, non
  overlapping of resonances.}

We show here that the extended blocks 
$E^{(s)}_{M}$ form a partition {of $E $} and prove some properties which are
needed in order to show that they are left invariant by an operator in
normal form. 
As in the proof of the classical Nekhoroshev Theorem, the following
Lemma plays a fundamental role.

\begin{lemma}\label{giorgilli metrica g}
	Let $s \in \{1\,, \dots\,, d\}$ and let $\{ u_1\,, \dots u_s
        \}$ be linearly independent vectors in $\R^d\,.$ Let $w \in
        \Span{\{ u_1\,, \dots u_s\}}$ be any vector. If $\alpha\,, N$
        are such that
	\[
	\begin{gathered}
	\norm{u_j} \leq N \quad \forall j = 1\,, \dots s\,,\\
	| \scal{w}{u_j} | \leq \alpha \quad \forall j = 1\,, \dots s\,,
	\end{gathered}
	\]
	then
	\[
	\norm{w} \leq \frac{s N^{s-1} \alpha }{\textrm{Vol}_g\{u_1\,| \cdots\,| u_s \}}\,.
	\]
\end{lemma}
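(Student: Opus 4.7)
The plan is to prove the estimate by expanding $w$ in the dual basis of $\{u_1,\dots,u_s\}$ inside the subspace $W:=\Span\{u_1,\dots,u_s\}$, and then bounding the norms of the dual vectors in terms of lower-dimensional volumes via Hadamard's inequality.

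First I would introduce the dual system $\{u^1,\dots,u^s\}\subset W$ uniquely characterized by $\scal{u^k}{u_j}=\delta_{kj}$. A direct check (writing $w=\sum c_j u_j$ and pairing with $u^k$) shows that every $v\in W$ satisfies $v=\sum_k \scal{v}{u_k}\,u^k$. Applying this to $w$ and using the triangle inequality together with the hypothesis $|\scal{w}{u_k}|\le \alpha$ gives
\[
\norm{w}\;\le\;\alpha\sum_{k=1}^{s}\norm{u^k}\,,
\]
so the problem reduces to estimating $\norm{u^k}$.

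Next I would use the geometric characterization: by its defining orthogonality relations, $u^k$ is the component of some vector in $W$ orthogonal to $\Span\{u_j:j\neq k\}$, normalized so that $\scal{u^k}{u_k}=1$. This forces
\[
\norm{u^k}\;=\;\frac{1}{\mathrm{dist}\bigl(u_k,\Span\{u_j:j\neq k\}\bigr)}\;=\;\frac{\mathrm{Vol}_g\{u_1,\dots,\widehat{u_k},\dots,u_s\}}{\mathrm{Vol}_g\{u_1|\cdots|u_s\}}\,,
\]
the second equality being the classical base$\times$height formula for the volume of a parallelepiped. Then Hadamard's inequality applied to the $(s-1)$-dimensional parallelepiped in the numerator yields
\[
\mathrm{Vol}_g\{u_1,\dots,\widehat{u_k},\dots,u_s\}\;\le\;\prod_{j\neq k}\norm{u_j}\;\le\;N^{s-1}\,.
\]
Summing over $k=1,\dots,s$ and inserting the resulting bound into the previous display gives exactly the claimed estimate $\norm{w}\le s N^{s-1}\alpha/\mathrm{Vol}_g\{u_1|\cdots|u_s\}$.

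The main subtlety, though not really an obstacle, is justifying the volume identity for $\norm{u^k}$: one must be careful that the computation is performed entirely inside the metric $g$ restricted to $W$, so that $\mathrm{Vol}_g$ agrees with the volume induced by the ambient scalar product. Once this is noted, each step is elementary, and the constants in the final bound come out exactly as in the statement; no optimization or further refinement is needed.
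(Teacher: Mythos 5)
Your proof is correct, but it proceeds differently from the paper: the paper does not give a direct argument at all, it simply recalls the Euclidean version of the statement (Lemma 5.7 of Giorgilli's notes, restated in Appendix B as Lemma \ref{giorgilli tipo kramer}) and observes that, since every quantity involved — the norms, the pairings $\scal{w}{u_j}$, and the volume — is coordinate independent, the version for a general flat metric $g$ follows by passing to a $g$-orthonormal basis. Your argument instead reconstructs the inequality from scratch, intrinsically in the metric $g$: the expansion $w=\sum_k \scal{w}{u_k}\,u^k$ in the dual system, the identity $\norm{u^k}=\mathrm{Vol}_g\{u_j: j\neq k\}/\mathrm{Vol}_g\{u_1|\cdots|u_s\}$ via the base-times-height factorization of the Gram determinant, and Hadamard's inequality are all correct and yield exactly the constant $sN^{s-1}\alpha/\mathrm{Vol}_g\{u_1|\cdots|u_s\}$; this is in substance the standard (Cramer-type) proof of the cited lemma. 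What the paper's route buys is brevity and a clean separation between the Euclidean statement and the observation that the whole construction is metric-intrinsic (which is a recurring theme there); what yours buys is self-containedness and the fact that no change of coordinates is ever invoked. Your closing remark that $\mathrm{Vol}_g$ restricted to parallelepipeds with edges in $W$ coincides with the volume of $(W,g|_W)$ is exactly the point one needs to check, and it holds since both equal $\sqrt{\det\bigl(\scal{u_i}{u_j}\bigr)}$.
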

This is just a coordinate free formulation of Lemma 5.7 of
\cite{GioPisa}, which is recalled in the appendix as Lemma
\ref{giorgilli tipo kramer}.  By \eqref{volmin}, one also has that, if
$u_j\in\Z^d$, $\forall j=1,...,s$, then
\begin{equation}
  \label{c.storta}
\norm{w} \leq {s N^{s-1} \alpha {\frak C}^{-1}}  \,.
\end{equation}

We state now a couple of simple properties of the extended
blocks.

\begin{lemma} \label{lemma bd piccolo}
	The extended block $E^{(d)}$ is finite dimensional; in particular, there exists a positive $n_* = n_*({\frak c}, {\frak C}, \ep,{\tau},  \delta_{d-1}, C_{d-1}, D_{d-1})$ such that
	$$
	E^{(d)} \subseteq \left \lbrace \csi \in \R^d \ \left|\ \norm{\csi + \kappa} \leq  n_*\right. \right \rbrace\,.
	$$
\end{lemma}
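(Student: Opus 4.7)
The plan is to reduce the statement to a direct application of Lemma \ref{giorgilli metrica g} (in its integer form \eqref{c.storta}) and then use the balance condition \eqref{legami parametri} between $\delta$, $\ep$ and $\tau$ to extract a uniform bound in $\norgs{\csi+\kappa}$.

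First I would observe that, by Definition \ref{def blocchi estesi}, $E^{(d)} \subseteq Z^{(d)}_{\Z^d}$. Hence it suffices to bound $\norgs{\csi+\kappa}$ for $\csi \in Z_{k_1,\dots,k_d}$, for an arbitrary choice of $d$ linearly independent integer vectors $k_1,\dots,k_d$ witnessing membership in $Z^{(d)}_{\Z^d}$. For such a $\csi$, Definition \ref{def res zones} yields, for every $j=1,\dots,d$,
\begin{equation*}
\norgs{k_j} \leq D_{d-1}\,\langle \csi_{k_1} \rangle_{g}^{\ep}, \qquad |\scalgs{\csi_{k_1}}{k_j}| \leq C_{d-1}\,\langle \csi_{k_1} \rangle_{g}^{\delta_{d-1}}\,\norgs{k_j}^{-\tau},
\end{equation*}
where I have used that $\{C_s\},\{D_s\},\{\delta_s\}$ are increasing in $s$. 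Since $\norgs{k_j}^2 \geq {\frak c}$, the factor $\norgs{k_j}^{-\tau}$ is controlled by ${\frak c}^{-\tau/2}$, which gives a uniform upper bound $\alpha := C_{d-1}\,{\frak c}^{-\tau/2}\,\langle \csi_{k_1} \rangle_g^{\delta_{d-1}}$ on $|\scalgs{\csi_{k_1}}{k_j}|$.

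Next, since $\{k_1,\dots,k_d\}$ is a linearly independent set in $\Z^d$ spanning $\R^d$, the covector $w=\csi_{k_1}=\csi+\kappa+k_1/2$ trivially lies in $\Span_{\R}\{k_1,\dots,k_d\}$. Applying Lemma \ref{giorgilli metrica g} together with the integer refinement \eqref{c.storta} with $N = D_{d-1}\,\langle \csi_{k_1} \rangle_g^{\ep}$ and $\alpha$ as above, I obtain
\begin{equation*}
\norgs{\csi_{k_1}} \leq \frac{d\,N^{d-1}\,\alpha}{\frak C} \leq K\,\langle \csi_{k_1} \rangle_g^{\delta_{d-1}+(d-1)\ep},
\end{equation*}
with $K := d\,C_{d-1}\,(D_{d-1})^{d-1}\,{\frak c}^{-\tau/2}\,{\frak C}^{-1}$, depending only on the stated parameters.

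The final step is to check that the exponent $\delta_{d-1}+(d-1)\ep$ is strictly smaller than $1$. Using the recursive definition $\delta_{d-1} = \delta + (d-1)(d+\tau+1)\ep$ and the hypothesis \eqref{legami parametri}, one directly computes
\begin{equation*}
\delta_{d-1} + (d-1)\ep \leq \delta + d(d+\tau+1)\ep < 1.
\end{equation*}
Therefore the inequality $\norgs{\csi_{k_1}} \leq K\,\langle \csi_{k_1} \rangle_g^{\mu}$ with $\mu<1$ forces $\norgs{\csi_{k_1}}$ (and hence $\norgs{\csi+\kappa}$, after absorbing the term $\norgs{k_1}/2 \leq D_{d-1}\langle \csi_{k_1}\rangle_g^{\ep}/2$) to be bounded by a constant $n_*$ depending only on ${\frak c}, {\frak C}, \ep, \tau, \delta_{d-1}, C_{d-1}, D_{d-1}$. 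I do not expect any serious obstacle: the only delicate bookkeeping is to verify that the gain $(1-\mu)>0$ in the self-improving inequality is preserved by the parameter restriction \eqref{legami parametri}, which is exactly what the second step above accomplishes.
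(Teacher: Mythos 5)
Your proof is correct and follows essentially the same route as the paper: extract the $d$ resonance relations from $\csi\in Z^{(d)}_{\Z^d}$, combine them through Lemma \ref{giorgilli metrica g} with the volume bounded below by ${\frak C}$, and then use that the resulting exponent is strictly less than $1$ to close the self-improving inequality (Lemma \ref{xallaa}). The only difference is that the paper first invokes Lemma \ref{xixik} to transfer the resonance relations from $\csi_{k_1}$ to $\csi+\kappa$ before applying the volume lemma, whereas you apply it directly to $\csi_{k_1}$ and absorb the harmless shift $k_1/2$ at the end; both variants work, and your exponent check $\delta_{d-1}+(d-1)\ep\leq\delta+d(d+\tau+1)\ep<1$ via \eqref{legami parametri} is accurate.
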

\begin{proof}
	If $\csi \in E^{(d)},$ in particular there exist $\{{k}_1, \dots, {k}_{d}\} \subset \Z^d$ linear independent vectors such that
	$$
	\begin{gathered}
	\norm{k_1} \leq D_0 \langle \csi_{k_1} \rangle^{\ep}\,,\\
	\norm{{k}_j}\leq D_{j-1} \langle \csi_{k_1} \rangle ^{\ep} \leq D_{d-1} \langle \csi_{k_1} \rangle ^{\ep}\,,\\
	| \scala{\csi_{k_1}}{{k}_j}| \leq C_{d-1} \langle \csi_{k_1} \rangle ^{\delta_{d-1}}{\norm{{k}_j}^{-\tau}} \,.
	\end{gathered}
	$$ In order to eliminate the indexes $k_1$ from $\xi$, we
        apply Lemma \ref{xixik}, with $\sigmav = \eta = \csi^\kappa,$
        $l = 0,$ $h = k_j$ and $k = \frac{k_1}{2}$ to
        deduce that there exist constants
	$$
	\Cip = \Cip({\frak c}, \ep, {\tau}, \delta_{d-1}, D_{d-1}, C_{d-1})\,, \quad \Dip = \Dip({\frak c}, \ep, {\tau}, \delta_{d-1}, D_{d-1}, C_{d-1})
	$$
	such that
	$$
	|\scala{\csi^\kappa}{k_j}| \leq \Cip \langle \csi^\kappa \rangle^{\delta} \norm{k_j}^{-\tau}\,, \quad \norm{k_j} \leq \Dip \langle \csi^\kappa \rangle^\ep\,.
	$$
	Recalling that ${\frak c}$ is such that, for all $h \in \Z^d,$ $\norm{h}^2 \geq {\frak c}\,$
	and using Lemma \ref{giorgilli metrica g},  and
        Eq. \eqref{c.storta} we have 
	\[
	\norm{\csi^\kappa} \leq d {\frak c}^{-\tau/2}{{\frak C}^{-1}} {C}^\prime\ {(D^\prime)}^{d-1} \langle \csi^\kappa \rangle^{d \ep + \delta}\,,
	\]
	which, applying Remark \ref{B.11.1} with $a = d \ep + \delta < 1$, implies the existence of a constant $n_* = n_*(\delta, \ep, \tau, {C}^\prime, {D}^\prime, {\frak c}, {\frak C})$ such that $\norm{ \csi^\kappa} < \bar{N}\,.$
\end{proof}
{
\begin{lemma}\label{is big}
	{If $E = \Z^d,$} the set $E^{(0)}$ is of density one at infinity, namely
	$$
	\lim_{R \rightarrow \infty} \frac{ \sharp \left(E^{(0)} \cap B_R (0) \right) }{\sharp \left(\Z^d \cap B_R(0)\right) } = 1\,.
	$$
\end{lemma}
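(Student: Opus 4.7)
The plan is to estimate the complement $\Z^d\setminus E^{(0)}$ inside a large ball $B_R(0)$ and show that it has cardinality $o(R^d)$, from which density one follows since $\sharp(\Z^d\cap B_R(0))\sim c_g\, R^d$. Unwinding Definition \ref{def res zones}.1, a point $\xi \in \Z^d$ lies in $\Z^d\setminus E^{(0)}$ if and only if there exists $k\in\Z^d\setminus\{0\}$ such that
\[
\norgs{k}\leq \langle\xi_k\rangle_g^{\ep}\qquad\text{and}\qquad |\scalgs{\xi_k}{k}|\leq \langle\xi_k\rangle_g^{\delta}\,\norgs{k}^{-\tau}.
\]
First I would observe that, for $\xi\in B_R(0)$ (with $R$ large) and any $k$ satisfying the first inequality above, one has $\langle \xi_k\rangle_g\leq C(1+\norgs{\xi}+\norgs{k})$, and this, combined with $\norgs{k}\leq \langle\xi_k\rangle_g^{\ep}$ and $\ep<1$, yields $\langle\xi_k\rangle_g\leq C' R$ and in particular $\norgs{k}\leq C'' R^{\ep}$, for constants depending only on $g$, $\kappa$, $\ep$.

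Next I would perform, for each fixed $k\neq 0$ with $\norgs{k}\leq C''R^{\ep}$, a slab count. The set of $\xi\in B_R(0)$ fulfilling the second inequality is contained in the slab
\[
S_k:=\Bigl\{\xi\in B_R(0)\ :\ |\scalgs{\xi+\kappa+k/2}{k}|\leq (C'R)^{\delta}\norgs{k}^{-\tau}\Bigr\},
\]
which, after division by $\norgs{k}$, is a neighborhood of a hyperplane orthogonal (w.r.t.\ $g^*$) to $k$ of thickness $\lesssim R^{\delta}\norgs{k}^{-\tau-1}$. A standard lattice point count then gives
\[
\sharp(S_k\cap\Z^d)\leq C_1\,R^{d-1}\bigl(1+R^{\delta}\norgs{k}^{-\tau-1}\bigr).
\]
Summing over $k$ with $0<\norgs{k}\leq C'' R^{\ep}$ I would use two facts: the number of such $k$ is $O(R^{d\ep})$, and the series $\sum_{k\neq 0}\norgs{k}^{-\tau-1}$ converges because $\tau>d-1$ (so $\tau+1>d$). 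This gives
\[
\sharp\bigl(B_R(0)\cap(\Z^d\setminus E^{(0)})\bigr)\ \leq\ C_2\bigl(R^{d-1+d\ep}+R^{d-1+\delta}\bigr).
\]

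Finally, the hypotheses $\delta+d(d+\tau+1)\ep<1$ of \eqref{legami parametri} imply both $\delta<1$ and $d\ep<1$, so the exponent $\max(d-1+\delta,\,d-1+d\ep)$ is strictly less than $d$. Dividing by $\sharp(\Z^d\cap B_R(0))\sim c_g R^d$ and letting $R\to\infty$ we obtain the claimed density one at infinity. The main (and only really delicate) point in this outline is the slab count: one has to take care of the case in which the slab is so thin that the leading term $R^{d-1+\delta}\norgs{k}^{-\tau-1}$ underestimates the number of lattice points, which is why I keep the extra summand $R^{d-1}$ in the bound above; the convergence of $\sum\norgs{k}^{-\tau-1}$, guaranteed by $\tau>d-1$, is what then prevents an uncontrolled growth when one sums over $k$.
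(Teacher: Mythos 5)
Your argument is correct, but it takes a genuinely different route from the paper. The paper's proof is a reduction: after observing that $\Z^d\setminus E^{(0)}=\bigcup_{\dim M=1}Z^{(1)}_M$, it applies Lemma \ref{xixik} to trade the shifted variable $\xi_k=\xi+\kappa+k/2$ for $\xi$ itself (at the cost of enlarged constants $C',D'$), and then simply quotes Proposition 5.9 of \cite{noi}, where the actual counting is done. You instead carry out the counting from scratch: a slab estimate $\sharp(S_k\cap\Z^d)\leq C R^{d-1}(1+R^{\delta}\norgs{k}^{-\tau-1})$ for each fixed $k$, followed by summation over the $O(R^{d\ep})$ admissible $k$'s, using $\tau+1>d$ to make $\sum_k\norgs{k}^{-\tau-1}$ converge and \eqref{legami parametri} to check that both exponents $d-1+\delta$ and $d-1+d\ep$ are strictly below $d$. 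Your preliminary step (showing $\langle\xi_k\rangle_g\leq C'R$ via $\ep<1$, in the spirit of Lemma \ref{xallaa}) plays the role that Lemma \ref{xixik} plays in the paper. What your approach buys is self-containedness — no appeal to an external reference — and an explicit quantitative bound $O(R^{d-1+\max(\delta,d\ep)})$ on the exceptional set, which is sharper information than mere density zero; what the paper's approach buys is brevity and reuse of machinery (Lemma \ref{xixik}) that is needed elsewhere anyway. The one point you flag as delicate, keeping the additive $R^{d-1}$ term in the slab count for very thin slabs, is indeed the right thing to worry about, and your treatment of it is correct.
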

\begin{proof}
	We exploit the fact that a set is of density one at infinity
        if and only if its complementary set is of density zero, and
        we analyze the complementary set of $E^{(0)}$. Recall
        that ${E^{(0)} = Z^{(0)}}$ so that, by Definition \ref{def res zones}, its complementary set is
        \begin{align*}
          \Z^d \backslash E^{(0)}&= \bigcup_{ M \textrm{ of dim. }1 }
          Z^{(1)}_M \\
 &         = \{ \csi \in \Z^d \ |\ \exists k \in \Z^d \textrm{ s. t. }  |\scala{\csi_k}{k}|\leq \langle \csi_{k} \rangle^{\delta}\norm{k}^{-\tau}\,,\ \norm{k}\leq \langle \csi_{k} \rangle^{\ep}\}\,.
	        \end{align*}
	
By Lemma \ref{xixik} there exists constants $C^\prime, D^\prime$ depending only on $\delta, \ep, \tau, {\frak c}, {\frak C}$ such that
	$$
	\begin{gathered}
	\Z^d \backslash E^{(0)} \subseteq \{ \csi \in \Z^d \ |\ \exists k \in \Z^d \textrm{ s. t. }  |\scala{\csi}{k}|\leq C^\prime \langle \csi \rangle^{\delta}\norm{k}^{-\tau}\,,\ \norm{k}\leq D^\prime \langle \csi \rangle^{\ep}\}\,.
	\end{gathered}
	$$
	But the latter is the complementary set to
	$$
	\Omega = \{ \csi \in \Z^d\ |\ |\scala{\csi_k}{k}|> C^\prime \langle \csi_{k} \rangle^{\delta}\norm{k}^{-\tau} \quad \forall k \in \Z^d\ \textrm{ s. t. } \norm{k}\leq D^\prime \langle \csi_{k} \rangle^{\ep}\, \}\,.
	$$
Then Proposition 5.9 of \cite{noi} gives the result.
\end{proof}
}

We now analyze the other blocks.

First remark that, if $s^\prime \neq s,$ then two extended blocks
$E^{(s)}_{M}$ and $E^{(s^\prime)}_{M^\prime}$
are disjoint. Then we have to prove that two different
extended blocks of the same dimension do not intersect. To this end   a
further geometric analysis is required. 

\begin{lemma} \label{lemma piccole proiez}
If $\csi \in Z^{(s)}_M$ then there exists a positive constant $K$ depending only on ${\frak c}, {\frak C}, d, \ep, {\tau}, \delta_{s-1}, C_{s-1}, D_{s-1},$ such that 
\begin{equation}
  \label{proi.1}
\norm{(\xik)_M} \leq K \langle \xik \rangle^{\delta_{s-1}+d\epsilon}\,.
\end{equation}
\end{lemma}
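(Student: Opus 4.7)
The plan is to apply Lemma \ref{giorgilli metrica g} to the vector $w=(\xik)_M$, taking as $u_1,\dots,u_s$ the vectors $k_1,\dots,k_s\in M$ whose existence is guaranteed by $\csi\in Z^{(s)}_M$. Since $\dim M=s$, the $k_j$'s span $\Span_\R M$, hence $(\xik)_M\in\Span_\R\{k_1,\dots,k_s\}$ and $\scala{(\xik)_M}{k_j}=\scala{\xik}{k_j}$ for every $j$; this is exactly the scalar product we need to bound.

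The first step would be to rephrase the defining inequalities of $Z^{(s)}_M$, which in Definition \ref{def res zones} are formulated in terms of $\csi_{k_1}=\xik+\tfrac{k_1}{2}$, as bounds involving $\xik$ directly. I would apply Lemma \ref{xixik} exactly as in the proof of Lemma \ref{lemma bd piccolo} (with $\sigmav=\eta=\xik$, $l=0$, $h=k_j$, $k=k_1/2$), obtaining constants $C',D'$ depending only on ${\frak c},\ep,\tau,\delta_{s-1},C_{s-1},D_{s-1}$ such that, for every $j=1,\dots,s$,
\[
|\scala{\xik}{k_j}|\leq C'\langle\xik\rangle^{\delta_{s-1}}\|k_j\|^{-\tau}\leq C'{\frak c}^{-\tau/2}\langle\xik\rangle^{\delta_{s-1}}, \qquad \|k_j\|\leq D'\langle\xik\rangle^{\ep},
\]
where the second inequality in the first bound follows from $\|k_j\|^2\geq {\frak c}$, see \eqref{coerc}.

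Then Lemma \ref{giorgilli metrica g}, in the integer-vector form \eqref{c.storta} which uses the lower bound $\textrm{Vol}_{g^*}\{k_1|\cdots|k_s\}\geq {\frak C}$ coming from \eqref{volmin}, immediately yields
\[
\|(\xik)_M\|\leq \frac{s(D')^{s-1}C'{\frak c}^{-\tau/2}}{{\frak C}}\,\langle\xik\rangle^{\delta_{s-1}+(s-1)\ep}\leq K\langle\xik\rangle^{\delta_{s-1}+d\ep},
\]
with $K$ depending only on ${\frak c},{\frak C},d,\ep,\tau,\delta_{s-1},C_{s-1},D_{s-1}$, which is precisely \eqref{proi.1}.

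The whole argument is essentially bookkeeping; the only mildly delicate point is the transfer of the estimates from the $\csi_{k_1}$-centered inequalities of Definition \ref{def res zones} to $\xik$-centered ones, but this is already handled by the ancillary Lemma \ref{xixik} that was used in the preceding Lemma \ref{lemma bd piccolo}. The coercivity constants ${\frak c}$ and ${\frak C}$ of the metric play exactly the same two roles as in that lemma: the first controls the negative power $\|k_j\|^{-\tau}$ uniformly in $j$, while the second provides a uniform positive lower bound on the volume of the parallelepiped spanned by any $s$ linearly independent integer vectors, and is what allows Lemma \ref{giorgilli metrica g} to be turned into an effective quantitative estimate.
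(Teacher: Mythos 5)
Your proposal is correct and follows essentially the same route as the paper's proof: both transfer the resonance inequalities from $\csi_{k_1}$ to $\xik$ via Lemma \ref{xixik} and then apply Lemma \ref{giorgilli metrica g} together with the volume lower bound $\textrm{Vol}_{g^*}\{k_1|\cdots|k_s\}\geq{\frak C}$ from \eqref{volmin}. The only cosmetic difference is that you track the exponent as $\delta_{s-1}+(s-1)\ep$ before relaxing to $d\ep$, while the paper bounds it by $d\ep$ directly.
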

\begin{proof}
Since $\csi \in Z^{(s)}_{M}\,,$ there exist $\{{k}_1, \dots, {k}_s\} \subset M$ linearly independent vectors such that for all $j = 1, \dots, s$
\begin{equation}
  \label{p.p.1}
 \begin{gathered}
|\scala{\left(\csi_{{k}_1}\right)_M}{{k}_j}| = |\scala{\csi_{{k}_1}}{{k}_j}|\leq C_{j-1} \langle \csi_{{k}_1} \rangle^{\delta_{j-1}} {\| {k}_j\|^{-\tau}}\,, \quad
\norm{{k}_j} \leq D_{j-1} \langle \csi_{{k}_1} \rangle^{\ep}\,.
\end{gathered}
\end{equation}
Then, by Lemma \ref{xixik} one can substitute in the above
formulae $\xik$ to $\xi_{ k_1}$; precisely, there exist two
positive constants $\Cip,\ \Dip = \Cip, \Dip({\frak c}, \ep,
\delta_{s-1}, C_{s-1}, D_{s-1}),$ such that, 
\begin{align*}
\begin{gathered}
|\scala{(\xik)_M}{{k}_j}| =|\scala{\xik}{{k}_j}| \leq \Cip
\langle \xik 
\rangle^{\delta_{s-1}} {\norm{{k}_j}^{-\tau} \leq C^\prime {\frak c}^{-\tau/2} \langle \xik 
\rangle^{\delta_{s-1}} }\,,\\
\norm{{k}_j} \leq \Dip \langle \xik \rangle^{\ep}\,.
\end{gathered}
\end{align*}
By Lemma \ref{giorgilli metrica g}, there exists $C = C(d)$ such that
$$
\begin{aligned}
\norm{(\xik)_M} &\leq C(d) \frac{ (\Dip)^{d}\langle \csi^\kappa
	\rangle^{d \ep} }{\textrm{Vol}_g({k}_1 | \cdots |
	{k}_s)} \Cip {{\frak c}^{-\tau/2}} \langle \xik \rangle^{\delta_{s-1}}\,,
\end{aligned}
$$
and therefore, {recalling that $\textrm{Vol}_g({k}_1 | \cdots |
{k}_s) \geq {\frak C}$ (see the definition of $\frak C$ as in \eqref{volmin}),} the thesis holds. 
\end{proof}

By definition, the points belonging to a block ${B}^{(s)}_M$ are
resonant only with vectors $k \in M$. A priori,
this property does not hold true for points in the extended block
$E^{(s)}_M\,.$ So we need an estimate of the distance between
$E^{(s)}_M$ and $B^{(s)}_M\,.$
\begin{lemma} \label{lemma diam}
	Let $\delta_{s-1} + d\ep < 1\,$ and $M$ with $dim M=s$; if
        $\zeta \in E^{(s)}_M$ then there exists $\csi \in
        B^{(s)}_{M}$ and a positive constant $F$ depending only on
        ${\frak c}, {\frak C}, d, \ep, {\tau}, \delta_{s-1}, C_{s-1}, D_{s-1}$
        such that
	\begin{equation}
          \label{diam.1}
\norm{\csi - \zeta} \leq F \langle \xik \rangle^{\delta_{s-1} + \ep
  d}\ ,\quad \norm{\csi - \zeta} \leq F \langle \zetak \rangle^{\delta_{s-1} + \ep
  d}
        \end{equation}
\end{lemma}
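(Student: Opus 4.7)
The plan is to exploit directly the definition of the extended block, namely the inclusion $E^{(s)}_M \subseteq \{B^{(s)}_M + M\} \cap Z^{(s)}_M$, and then control the ``size'' of the translation in $M$ by means of Lemma \ref{lemma piccole proiez}.

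First I would pick $\zeta \in E^{(s)}_M$. By Definition \ref{def blocchi estesi}, there exist $\csi \in B^{(s)}_M$ and $m \in M$ with $\zeta = \csi + m$, so $m = \zeta - \csi = \zeta^\kappa - \csi^\kappa$. Since $m \in M$, its orthogonal projection on $M^\perp$ vanishes, so
\[
m = m_M = (\zeta^\kappa)_M - (\csi^\kappa)_M,
\]
which yields the bound
\[
\norm{\zeta - \csi} = \norm{m} \leq \norm{(\zeta^\kappa)_M} + \norm{(\csi^\kappa)_M}.
\]
Next, using $B^{(s)}_M \subseteq Z^{(s)}_M$ and $E^{(s)}_M \subseteq Z^{(s)}_M$, I apply Lemma \ref{lemma piccole proiez} to both $\csi$ and $\zeta$ to obtain
\[
\norm{\zeta - \csi} \leq K\bigl(\langle \csi^\kappa \rangle^{\delta_{s-1} + d\ep} + \langle \zeta^\kappa \rangle^{\delta_{s-1} + d\ep}\bigr),
\]
with the constant $K$ depending only on the declared quantities.

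The second (and only delicate) step is to convert this two-sided bound into a one-sided one in either of the two variables, which amounts to proving that $\langle \csi^\kappa\rangle$ and $\langle \zeta^\kappa\rangle$ are comparable up to a multiplicative constant. From $\zeta = \csi + m$ we get the elementary inequality $|\langle \zeta^\kappa\rangle - \langle \csi^\kappa\rangle| \leq \norm{m}$; substituting the bound above and using the hypothesis $\delta_{s-1} + d\ep < 1$ together with a standard Young-type absorption argument (of the same flavor as Remark \ref{B.11.1} already invoked in Lemma \ref{lemma bd piccolo}) yields a constant $F'$ with
\[
\langle \zeta^\kappa\rangle \leq F' \langle \csi^\kappa\rangle,\qquad \langle \csi^\kappa\rangle \leq F' \langle \zeta^\kappa\rangle.
\]
Plugging either of these into the previous inequality produces both estimates \eqref{diam.1} simultaneously, with a final constant $F$ depending only on ${\frak c}, {\frak C}, d, \ep, \tau, \delta_{s-1}, C_{s-1}, D_{s-1}$.

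The conceptually substantial input is Lemma \ref{lemma piccole proiez} (which already uses the volume lower bound $\frak C$ and Lemma \ref{giorgilli metrica g} to control $\norm{(\xi^\kappa)_M}$); the rest is bookkeeping. The only place where one has to be slightly careful is the absorption step that establishes the comparability of $\langle \xi^\kappa \rangle$ and $\langle \zeta^\kappa\rangle$: without the assumption $\delta_{s-1}+d\ep<1$ the inequality would not close up, so this is exactly where the assumption of the lemma is used. Everything else is a direct unfolding of Definitions \ref{def blocks}--\ref{def blocchi estesi}.
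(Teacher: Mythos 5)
Your proposal is correct and follows essentially the same route as the paper: write $\zeta=\xi+m$ with $\xi\in B^{(s)}_M$, $m\in M$, bound $\norm{\zeta-\xi}\leq\norm{(\zeta^\kappa)_M}+\norm{(\xi^\kappa)_M}$ via Lemma \ref{lemma piccole proiez}, and close with an absorption argument based on Lemma \ref{xallaa} that uses $\delta_{s-1}+d\ep<1$. The only (immaterial) difference is that the paper absorbs $\norm{\zeta-\xi}^a$ directly, whereas you first establish the comparability of $\langle\xi^\kappa\rangle$ and $\langle\zeta^\kappa\rangle$ and then substitute.
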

\begin{proof}
 If $\zeta \in E^{(s)}_{M},$ then in particular $\zeta \in
 Z^{(s)}_M$ and there exists a point $\csi \in B_{M}^{(s)}$ such that
 $\zeta = \csi + \upsilon,$ with $\upsilon \in M.$ In particular,
 $(\csi)_M^{\bot} = (\zeta)_M^{\bot}\,,$ hence one has
	\begin{align*}
	\norm{\csi - \zeta} = \norm{\left(\csi - \zeta\right)_{M}}
	\leq \norm{(\xik)_M|} + \norm{(\zetak)_M}\,.
	\end{align*}
	Since $\csi \in Z^{(s)}_M$ and $\zeta \in Z^{(s)}_M$, due to
        Lemma \ref{lemma piccole proiez}, there exists $K$, such that
	\begin{equation} \label{stime proiez}
	\norm{(\xik)_{M}} \leq K \langle \xik \rangle^{d \ep +
          \delta_{s-1}}\,, \quad \norm{(\zetak)_{M}} \leq K
        \langle \zetak \rangle^{d \ep + \delta_{s-1}}\,.
	\end{equation}
Exploiting Remark \ref{B.11.1} with $a=\delta_{s-1}+\epsilon d$, one gets 
$$
\langle\zetak\rangle^a=\langle\xi+\kappa+\zeta-\xi\rangle^a\leq
K'(\langle\xik\rangle^a +\norm{\zeta-\xi}^a)
$$
and, exploiting Lemma \ref{xallaa}, we immediately get
$$
\norm{\zeta-\xi}\leq F\langle\xik\rangle^a\ .
$$
Inverting the role of $\xi$ and $\zeta$ one gets the other
estimate.
\end{proof}

The next two lemmata ensure that, if the parameters $C_{j}, D_j$ are
suitably chosen for all $j$, an extended block $E^{(s)}_M$ is far from
every resonant zone associated to a lower dimensional module
$M^\prime\,$ which is not contained in $M.$ 
\begin{lemma} \label{lemma che bei blocchi}[Non overlapping of resonances]
	For all $s = 1, \dots d-1$ there exist positive constants $\bar{C}_s$ and $\bar{D}_s$, depending only on ${\frak c}, {\frak C}, d, C_{s-1}, D_{s-1}, \ep, \delta_{s-1}, {\tau}\,,$
	such that the following holds: suppose that $M$ and $M^\prime$
        are two distinct resonance modules of respective  dimensions  $s$ and $s^\prime$ with $s^\prime \leq s$ and  $M^\prime \nsubseteq M.$ If
	$$
	C_s > \bar{C}_s\,,\quad  D_s >\bar{D}_s\,,
	$$
	then
	$$
E^{(s)}_{M} \cap Z^{(s^\prime)}_{M^\prime} = \emptyset\,.
	$$
\end{lemma}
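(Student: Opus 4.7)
Suppose by contradiction that there exists $\zeta\in E^{(s)}_{M}\cap Z^{(s^\prime)}_{M^\prime}$. The strategy is to produce, via a point close to $\zeta$ that lies in the non-extended block $B^{(s)}_{M}$, a set of $s+1$ linearly independent integer vectors which witness that this nearby point belongs to $Z^{(s+1)}_{M^{\prime\prime}}$ for some $M^{\prime\prime}$ of dimension $s+1$; this will contradict the very definition of $B^{(s)}_{M}$.

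First, since $\zeta\in E^{(s)}_{M}\subseteq Z^{(s)}_{M}$, by Definition \ref{def res zones} there exist linearly independent vectors $k_1,\dots,k_s\in M$ for which $\zeta$ satisfies the resonance and size bounds with constants $C_0,\dots,C_{s-1},D_0,\dots,D_{s-1}$ and exponents $\delta_0,\dots,\delta_{s-1}$. On the other hand, from $\zeta\in Z^{(s^\prime)}_{M^\prime}$ we obtain $s^\prime$ linearly independent vectors $k^\prime_1,\dots,k^\prime_{s^\prime}\in M^\prime$ satisfying the corresponding bounds; since $M^\prime\not\subseteq M$ and $M=\Z^d\cap \Span_\R M$ (as $M$ is a submodule), at least one of the $k^\prime_j$, which we rename $k^\prime$, lies in $M^\prime\setminus M$, hence $k^\prime\notin \Span_\R\{k_1,\dots,k_s\}$. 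Thus $\{k_1,\dots,k_s,k^\prime\}$ is a family of $s+1$ linearly independent vectors, and it generates a submodule $M^{\prime\prime}$ of dimension $s+1$.

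Next, I will transfer all the bounds from the base point $\zeta$ to a point $\xi\in B^{(s)}_{M}$ produced by Lemma \ref{lemma diam}, which satisfies $\|\xi-\zeta\|\lesssim \langle\zetak\rangle^{\delta_{s-1}+d\ep}$ and also $\|\xi-\zeta\|\lesssim \langle\xik\rangle^{\delta_{s-1}+d\ep}$. Using Lemma \ref{xixik} I first rewrite every resonance inequality $|\scala{\zeta_{k_1}}{k_j}|\leq\ldots$ (and the analogous one for $k^\prime$) in terms of $\zetak$, paying only a price depending on $\frak c,\ep,\tau$ and on the previously fixed constants. Then, writing
\[
|\scala{\xik}{k_j}|\leq |\scala{\zetak}{k_j}|+\|\xi-\zeta\|\,\|k_j\|,
\]
and using the estimates $\|k_j\|\leq D^\prime\langle\zetak\rangle^{\ep}$ together with the distance bound from Lemma \ref{lemma diam}, the correction term is $O\!\bigl(\langle\xik\rangle^{\delta_{s-1}+(d+1)\ep}\bigr)$. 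By the parameter constraint \eqref{legami parametri} (which implies $\delta_{s-1}+(d+1)\ep\leq \delta_s$), this is absorbed into a bound of the form $\widehat C\langle\xik\rangle^{\delta_s}\|k_j\|^{-\tau}$; the corresponding bound holds for $k^\prime$ as well, starting from the weaker conditions coming from $Z^{(s^\prime)}_{M^\prime}$ since $s^\prime\le s$. A further invocation of Lemma \ref{xixik}, in the opposite direction, rewrites everything in terms of $\xi_{k_1}$, again with multiplicative constants depending only on the data.

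Choosing now $\bar C_s,\bar D_s$ large enough (depending only on $\frak c,\frak C,d,\ep,\tau,\delta_{s-1},C_{s-1},D_{s-1}$) so as to dominate all the constants produced in the previous step, one concludes that as soon as $C_s>\bar C_s$ and $D_s>\bar D_s$, the point $\xi$ satisfies the defining bounds of $Z_{k_1,\dots,k_s,k^\prime}$ (in that order), hence $\xi\in Z^{(s+1)}_{M^{\prime\prime}}$. This contradicts $\xi\in B^{(s)}_{M}$, which by Definition \ref{def blocks} is disjoint from every $Z^{(s+1)}_{M^{\prime\prime}}$, proving the claim.

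The main obstacle is the bookkeeping in the second paragraph: one has to propagate several resonance conditions from $\zeta$ to $\xi$ keeping track of the interplay between the normalizations $\xi_{k_1}$, $\zeta_{k_1}$ and the unshifted $\xi^\kappa,\zeta^\kappa$, while exploiting the hierarchy $\delta_0<\dots<\delta_{s-1}<\delta_s$ and the smallness condition \eqref{legami parametri} precisely enough so that all error terms are swallowed by the margin between $\delta_{s-1}$ and $\delta_s$.
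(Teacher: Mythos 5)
Your overall strategy is the same as the paper's: argue by contradiction, pass from $\zeta$ to a nearby point $\xi\in B^{(s)}_{M}$ via Lemma \ref{lemma diam}, exhibit $s+1$ independent resonances at $\xi$, and contradict the definition of $B^{(s)}_{M}$. However, there is a genuine gap in the way you build the $(s+1)$-tuple. You take the $s$ vectors $k_1,\dots,k_s\in M$ witnessing $\zeta\in Z^{(s)}_{M}$ and transfer \emph{all} $s+1$ resonance relations from $\zeta$ to $\xi$. As you yourself note, each transferred inequality only survives in the degraded form $\widehat C\langle\xi_{k_1}\rangle^{\delta_s}\|k_j\|^{-\tau}$. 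But membership of $\xi$ in $Z_{k_1,\dots,k_s,k'}$ requires, for each slot $j=1,\dots,s$, the inequality with the \emph{already fixed} constants $C_{j-1},D_{j-1}$ and exponent $\delta_{j-1}$ — not $C_s,\delta_s$. Since $\delta_{j-1}<\delta_s$ and the $C_{j-1}$ cannot be enlarged at this stage (only $C_s,D_s$ are free in the statement of the lemma), your transferred bounds for $j\le s$ are strictly weaker than what the definition of the resonant zone demands, so you have not shown $\xi\in Z^{(s+1)}_{M''}$ and the contradiction does not close. The error term $\|\xi-\zeta\|\,\|k_j\|\sim\langle\cdot\rangle^{\delta_{s-1}+(d+1)\ep}$ already swamps the required bound $C_0\langle\cdot\rangle^{\delta_0}\|k_1\|^{-\tau}$ in the first slot.

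The paper avoids this entirely: since $\xi\in B^{(s)}_{M}\subset Z^{(s)}_{M}$, the point $\xi$ comes with its \emph{own} witnesses $l_1,\dots,l_s\in M$ satisfying the exact inequalities with the exact constants and the correct base point $\xi_{l_1}$. One then only needs to transfer the single relation involving the vector $k_{\bar\jmath}\in M'\setminus M$ from $\zeta$ to $\xi$ (one application of Lemma \ref{xixik} with $\sigmav=\zeta+\kappa$, $\eta=\xi+\kappa$, $h=k_{\bar\jmath}/2$, $\ell=l_1/2$), and this transferred relation lands precisely in the $(s+1)$-th slot of $Z_{l_1,\dots,l_s,k_{\bar\jmath}}$, where the constants $C_s,D_s$ and the exponent $\delta_s\ge\delta_{s-1}+\ep(d+\tau+1)$ are exactly the quantities the lemma is allowed to choose large. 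Replacing your $k_1,\dots,k_s$ (witnesses for $\zeta$) by $l_1,\dots,l_s$ (witnesses for $\xi$) repairs the argument; the rest of your bookkeeping is then needed only for the single relation with $k'$.
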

\begin{proof}
Assume by contradiction
that there exists $\zeta \in E^{(s)}_{M} \cap
Z^{(s^\prime)}_{M^\prime},$ then there exists $\xi\in B^{(s)}_M$
s.t. \eqref{diam.1} holds.

Since $\zeta \in Z^{(s')}_{M'}$, there exist $s'$
integer vectors, $k_1,...,k_{s'}\in M'$ {\it among which at least one does not
  belong to $M$} s.t.
\begin{equation}
  \label{bei.1}
\left|\scala{\zeta_{k_1}}{k_j}\right|\leq
C_{j-1}\langle\zeta_{k_1}\rangle^{\delta_{j-1}} {\norm{k_j}^{-\tau}}\ ,\quad \norm{k_j}\leq
D_{j-1}\langle\zeta_{k_1}\rangle ^{\epsilon}\ . 
\end{equation}
Let $k_{\bar\j }$ be the vector which does not belong to $M$; the idea
is to show that the resonance relation of $\zeta$ with $k_{\bar\j }$
implies an analogous relation for $\xi$, but this will be in
contradiction with the fact that $\xi\in B^{(s)}_M$ (which contains
vectors which are {\it only} resonant with $M$).

To start with remark that, since $\xi\in B^{(s)}_M\subset Z^{(s)}_M$,
there exist $l_1,...,l_s\in M$, linearly independent, s.t.
\begin{equation}
  \label{bei.4}
\left|\scala{\xi_{l_1}}{l_j}\right|\leq
C_{j-1}\langle\xi_{l_1}\rangle^{\delta_{j-1}} {\norm{l_j}^{-\tau}}\ ,\quad \norm{l_j}\leq
D_{j-1}\langle\xi_{l_1}\rangle ^{\epsilon}\ . 
\end{equation}
We now apply Lemma \ref{xixik} with $h:=k_{\bar\j }/2$, $\ell:=l_1/2$,
$\sigmav:=\zeta+\kappa$, $\eta:=\xi+\kappa$. So, \eqref{xixi.5}
implies
$$
\left|\scala{\xi_{l_1}}{k_{\bar\j }}\right|\leq
K'{\langle\xi_{l_1}\rangle^{\delta_{s-1}+\epsilon(d+\tau+1)} \norm{k_{\bar{\j}}}^{-\tau}}\ ,\quad
{\norm{k_{\bar{\j}}}\leq D'\langle\xi_{l_1}\rangle
^{\ep}}\ .
$$
But, if $C_{s}>K'$, $D_s>D'$ and
${\delta_s\geq \delta_{s-1}+\epsilon(d+\tau+1)}$, this means that $\xi$ is also
resonant with $k_{\bar\j } $, and thus it belongs to $Z^{(s+1)}_{M''}$
with $M'':=span_{\Z}(M,k_{\bar\j })$, but this contradicts the fact
that $\xi\in B^{(s)}_M$.
\end{proof}

\begin{lemma} \label{lemma speriamo sia vero}[Separation of
    resonances]   There exist positive constants $\tilde{C}_s$ and $\tilde{D}_s$ depending only on ${\frak c}, {\frak C}, d, \ep, {\tau,} \delta_{s-1}, C_{s-1}, D_{s-1}$ such that, if
	$$
	C_s > \tilde{C}_s\,, \quad D_s >\tilde{D}_s,
	$$
        then the following holds true. Let $\zeta \in E^{(s)}_{M}$ for
        some $M$ of dimension $s= 1, \dots, d-1,$ and let $k^\prime$
        be such that 
	\begin{gather*}
	\norm{k^\prime} \leq \langle \zeta_{k^\prime} \rangle^\ep\,,
	\end{gather*}
	then $\forall M^\prime \not\subset M$ s. t.  $s':=\dim M^\prime {\leq} s$ one
        has 
	$$
	\zeta + k^\prime \notin Z^{(s')}_{M^\prime} \quad \,.
	$$
\end{lemma}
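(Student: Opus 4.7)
The plan is to argue by contradiction, exactly in the spirit of Lemma \ref{lemma che bei blocchi}, with the only extra difficulty of keeping track of the shift by $k'$. Suppose that $\zeta + k' \in Z^{(s')}_{M'}$ for some $M' \not\subset M$ of dimension $s' \leq s$. Then by Definition \ref{def res zones} there exist linearly independent vectors $h_1,\dots,h_{s'} \in M'$ such that, for every $j=1,\dots,s'$,
\begin{equation*}
|\scala{(\zeta+k')_{h_1}}{h_j}| \leq C_{j-1} \langle (\zeta+k')_{h_1}\rangle^{\delta_{j-1}} \norm{h_j}^{-\tau},\qquad \norm{h_j}\leq D_{j-1}\langle (\zeta+k')_{h_1}\rangle^{\ep},
\end{equation*}
and since $M' \not\subset M$ at least one of these vectors, call it $h_{\bar\j}$, does not lie in $M$. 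Since $M$ is a submodule, $h_{\bar\j}\notin\mathrm{span}_\R M$ as well, so $h_{\bar\j}$ is linearly independent from any basis of $M$.

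Next, apply Lemma \ref{lemma diam}: $\zeta \in E^{(s)}_M$ yields $\xi \in B^{(s)}_M$ with $\norm{\xi-\zeta}\leq F\langle\zetak\rangle^{\delta_{s-1}+\ep d}$, and since $\xi\in B^{(s)}_M\subset Z^{(s)}_M$ there exist linearly independent $l_1,\dots,l_s\in M$ verifying \eqref{bei.4}. The goal is to transport the resonance of $\zeta+k'$ with $h_{\bar\j}$ into a resonance of $\xi$ with $h_{\bar\j}$. To do so, first use Lemma \ref{xixik} (with $\sigmav=\zeta+\kappa+k'$, $\eta=\zeta^{\kappa}$, $l=h_1/2+k'/2$, $h=h_{\bar\j}$ — or just expand directly) to rewrite the resonance estimate in terms of $\zeta^\kappa$, paying an error $\lesssim (\norm{k'}+\norm{h_1})\,\norm{h_{\bar\j}}$. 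The hypothesis $\norm{k'}\leq\langle\zeta_{k'}\rangle^\ep$ together with $\norm{h_j}\leq D_{j-1}\langle(\zeta+k')_{h_1}\rangle^\ep$ guarantees that all these quantities are controlled by $\langle \zetak\rangle^{\ep}$ (up to constants depending only on ${\frak c},{\frak C},d,\ep,\tau$ and the previously fixed $C_{s-1},D_{s-1}$). Then pass from $\zeta^\kappa$ to $\xi^\kappa$ using the estimate of Lemma \ref{lemma diam}, picking up an additional error $\leq F\langle\zetak\rangle^{\delta_{s-1}+\ep d}\norm{h_{\bar\j}}$. Finally, use Lemma \ref{xixik} once more to shift to $\xi_{l_1}$, and note that $\langle\zetak\rangle,\langle(\zeta+k')_{h_1}\rangle,\langle\xi^\kappa\rangle,\langle\xi_{l_1}\rangle$ are all comparable up to constants (by Lemma \ref{xixik} and Remark \ref{B.11.1}).

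Combining these steps yields constants $\Cip,\Dip$ depending only on ${\frak c},{\frak C},d,\ep,\tau,\delta_{s-1},C_{s-1},D_{s-1}$ such that
\begin{equation*}
|\scala{\xi_{l_1}}{h_{\bar\j}}|\leq \Cip\langle\xi_{l_1}\rangle^{\delta_{s-1}+(d+\tau+1)\ep}\norm{h_{\bar\j}}^{-\tau},\qquad \norm{h_{\bar\j}}\leq \Dip\langle\xi_{l_1}\rangle^{\ep}.
\end{equation*}
Recalling $\delta_s=\delta_{s-1}+(d+\tau+1)\ep$, if we take $\tilde C_s>\Cip$ and $\tilde D_s>\Dip$, then the list $l_1,\dots,l_s,h_{\bar\j}$ is linearly independent (by the earlier observation) and certifies $\xi\in Z_{l_1,\dots,l_s,h_{\bar\j}}\subset Z^{(s+1)}_{M''}$, where $M'':=\mathrm{span}_\Z(M\cup\{h_{\bar\j}\})$ has dimension $s+1$. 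This contradicts $\xi\in B^{(s)}_M$, proving the claim.

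The main obstacle is purely bookkeeping: making sure that the accumulated errors (from the shift $k'$, from Lemma \ref{xixik}, and from the diameter estimate $\norm{\xi-\zeta}$) stay below $\langle\xik\rangle^{\delta_s}$, so that they can be absorbed into the fresh constants $\tilde C_s,\tilde D_s$ without exceeding the exponent $\delta_s$ allowed by Definition \ref{def res zones}. The inequality $\delta_s=\delta_{s-1}+(d+\tau+1)\ep$ together with the hypothesis $\ep(\tau+1)\leq\delta$ (which in particular keeps $\delta_s<1$ uniformly in $s\leq d-1$ thanks to \eqref{legami parametri}) is exactly tuned for this absorption to go through.
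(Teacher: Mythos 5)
Your proposal is correct and follows essentially the same route as the paper: argue by contradiction, pull back to a nearby $\xi\in B^{(s)}_M$ via Lemma \ref{lemma diam}, transfer the resonance with the vector $k_{\bar\j}\notin M$ to $\xi$ via Lemma \ref{xixik} (the paper does this in a single application with $\sigmav=\zeta+\kappa+k'$, $\eta=\xi+\kappa$, rather than your chained version, but the key point in both is bounding $\norm{k'}$ by $K''\langle\zetak\rangle^{\ep}$ so the hypothesis $\norm{\eta-\sigmav}\leq F\langle\eta\rangle^{\delta_{s-1}+\ep d}$ of that lemma holds), and conclude that $\xi\in Z^{(s+1)}_{M''}$, contradicting $\xi\in B^{(s)}_M$.
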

\begin{proof}
	The proof is very similar to that of Lemma \ref{lemma che bei
          blocchi}. Assume by contradiction that $\zeta + k^\prime
        \in Z^{(s')}_{M^\prime}$ for some $M^\prime \neq M.$ It follows
        that there exist $s$ integer vectors, $k_1,...,k_{s'}\in M'$ {\it
          among which at least one does not belong to $M$} s.t.
\begin{equation}
  \label{bei.11}
\left|\scala{\zeta_{k_1}+k'}{k_j}\right|\leq
C_{j-1}\langle\zeta_{k_1}+k'\rangle^{\delta_{j-1}}{\norm{k_j}^{-\tau}}\ ,\quad \norm{k_j}\leq
D_{j-1}\langle\zeta_{k_1}+k'\rangle ^{\epsilon}\ . 
\end{equation}
Let $k_{\bar\j }$ be the vector which does not belong to $M$. By
\eqref{diam.1} there exists $\xi\in B^{(s)}_M$
s.t. $\norm{\xi-\zeta}\leq F\langle\xik\rangle^{\delta_{s-1}+\epsilon
  d}$. Since in particular 
$\xi\in Z_M^{(s)}$ there exist $l_1,...,l_s\in M$, linearly independent, s.t.
\begin{equation}
  \label{bei.41}
\left|\scala{\xi_{l_1}}{l_j}\right|\leq
C_{j-1}\langle\xi_{l_1}\rangle^{\delta_{j-1}}{\norm{l_j}^{-\tau}}\ ,\quad \norm{l_j}\leq
D_{j-1}\langle\xi_{l_1}\rangle ^{\epsilon}\ . 
\end{equation}
We now apply Lemma \ref{xixik} with $h:=k_{\bar\j }/2$, $\ell:=l_1/2$,
$\sigmav:=\zeta+\kappa+k'$, $\eta:=\xi+\kappa$. The only nontrivial
assumption of Lemma \ref{xixik} to verify is the first of
\eqref{xixi.2}. One has
$$
\norm{\xi-\zeta-k'}\leq \norm{\xi-\zeta}+\norm{k'}\leq F\norm{\xik
}^{{\delta_{s-1}+\epsilon
  d} } +\norm{k'}\ . 
$$
To estimate $\norm{k'}$ we proceed as follows:
$$
\norm{k'}\leq D_0\left\langle\zeta+\kappa+\frac{k'}{2}\right\rangle^\epsilon \leq
D_0 K\left(\langle\zeta+\kappa\rangle^\epsilon
+\frac{1}{2^\epsilon}\langle k'\rangle^\epsilon \right) \ ,
$$
where we used eq. \eqref{tri.1}. Using Lemma \ref{xallaa}, we get
$\norm{k'}\leq K''\langle\zeta+\kappa\rangle^\epsilon$ and therefore
$$
\norm{\xi-\zeta-k'}\leq K\norm{\xik
}^{{\delta_{s-1}+\epsilon
  d} }
$$
Thus \eqref{xixi.5}
implies
$$
\left|\scala{\xi_{l_1}}{k_{\bar\j }}\right|\leq
K'{\langle\xi_{l_1}\rangle ^{{\delta_{s-1}+(d+\tau + 1)\epsilon}}\norm{k_{\bar{\j}}}^{-\tau}}\ ,\quad
\norm{l_1}\leq D'\langle\xi_{l_1}\rangle
^\epsilon\ .
$$
But, if $C_{s}>K'$, $D_s>D'$, this means that $\xi$ is also
resonant with $k_{\bar\j } $, and thus it belongs to $Z^{(s+1)}_{M''}$
with $M'':=span_{\Z}(M,k_{\bar\j })$, and this contradicts the fact
that $\xi\in B^{(s)}_M$.
\end{proof}

The following theorem summarizes the result of this subsection

\begin{theorem} \label{cor partizione}
Under the hypotheses of Theorem  \ref{teo dieci piccoli blocchi}, the
blocks $E^{(0)},\ E^{(d)},\ \{E^{(s)}_M\}_{s, M}$ are a partition of
$E$. {Furthermore $E^{(d)}$
	has dimension less then $n_*<\infty$, {with $n_*$ only depending on ${\frak c}, {\frak C}, \delta, \ep, \tau\,$} and, if $E = \Z^d,$  $E^{(0)}$ is of density 1 at infinity.}
\end{theorem}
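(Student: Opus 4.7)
The plan is to assemble the pieces already proved in Subsection~4.1. Since Remark~\ref{lemma ricoprimento} gives the covering property $E\subseteq E^{(0)}\cup E^{(d)}\cup\bigcup_{s,M} E^{(s)}_M$, and Lemmas~\ref{lemma bd piccolo} and \ref{is big} supply the bound on $|E^{(d)}|$ and the density-one statement for $E^{(0)}$ when $E=\Z^d$, the only remaining task is pairwise disjointness of the extended blocks. I will organize this by dimensional comparison.

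First, $E^{(0)}=Z^{(0)}$ is disjoint from every $E^{(s)}_M$ with $s\geq 1$ (including $E^{(d)}$): indeed $E^{(s)}_M\subseteq Z^{(s)}_M$, and Remark~\ref{keneso} gives $Z^{(s)}_M\cap Z^{(0)}=\emptyset$. Second, for two blocks of different positive dimensions $1\leq s<s'\leq d$, the very definition
\[
E^{(s')}_{M'}=\{B^{(s')}_{M'}+M'\}\cap Z^{(s')}_{M'}\cap\bigcap_{j=1}^{s'-1}\bigl(E^{(s'-j)}\bigr)^c
\]
places $E^{(s')}_{M'}$ inside the complement of $E^{(s)}=\bigcup_M E^{(s)}_M$, so $E^{(s)}_M\cap E^{(s')}_{M'}=\emptyset$; in particular the top block $E^{(d)}$ is disjoint from every intermediate $E^{(s)}_M$. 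Third, for two extended blocks of the same dimension $s\in\{1,\dots,d-1\}$ attached to distinct modules $M\neq M'$, the equality of dimensions forces $M'\not\subseteq M$, so Lemma~\ref{lemma che bei blocchi} applies and gives $E^{(s)}_M\cap Z^{(s)}_{M'}=\emptyset$; since $E^{(s)}_{M'}\subseteq Z^{(s)}_{M'}$, we conclude that $E^{(s)}_M\cap E^{(s)}_{M'}=\emptyset$.

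The only delicate point is that Lemma~\ref{lemma che bei blocchi} requires the constants $C_s$, $D_s$ defining the resonant zones in Definition~\ref{def res zones} to exceed the thresholds $\bar C_s$, $\bar D_s$ produced inductively by that lemma. This is where the standing assumption ``under the hypotheses of Theorem~\ref{teo dieci piccoli blocchi}'' enters: one fixes $C_0,D_0$ first, then chooses $C_s>\bar C_s({\frak c},{\frak C},d,C_{s-1},D_{s-1},\ldots)$ and $D_s>\bar D_s(\cdots)$ inductively in $s$, which is consistent because each $\bar C_s,\bar D_s$ depends only on data from strictly smaller indices. With this choice, the three cases above exhaust all pairs of distinct blocks, so the covering of Remark~\ref{lemma ricoprimento} is in fact a partition, and the finite cardinality of $E^{(d)}$ and density-one statement for $E^{(0)}$ follow verbatim from the cited lemmas. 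The main obstacle, which is the non-trivial geometric fact that extended blocks sitting on different modules do not overlap despite being ``fattened'' by Minkowski addition of their modules, has already been absorbed into Lemma~\ref{lemma che bei blocchi} via the non-overlapping-of-resonances argument.
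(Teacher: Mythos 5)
Your proof is correct and follows essentially the same route as the paper's: the covering comes from the definitions, disjointness across different dimensions comes from the complements built into Definition~\ref{def blocchi estesi} (plus Remark~\ref{keneso} for the $s=0$ case), disjointness within the same dimension is exactly Lemma~\ref{lemma che bei blocchi}, and the cardinality and density statements are Lemmas~\ref{lemma bd piccolo} and~\ref{is big}. Your write-up is in fact slightly more careful than the paper's (which compresses all the cross-dimensional cases into ``by definition of the extended blocks''), and your remark on the inductive choice of the constants $C_s, D_s$ correctly identifies where the hypotheses of Theorem~\ref{teo dieci piccoli blocchi} are used.
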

\begin{proof}
Let $M_1$ and $M_2$ be two submodules of respective dimension $s_1$ and $s_2\,.$ If $s_1 > s_2,$ by definition of the extended blocks one has
$ \displaystyle{E^{(s_1)}_{M_1} \cap E^{(s_2)}_{M_2} = \emptyset\,.}$
Let then $s_1 = s_2\,:$ by Lemma \ref{lemma che bei blocchi},
$$
E^{(s_1)}_{M_1} \cap Z^{(s_2)}_{M_2} = \emptyset\,,
$$
hence, being $\displaystyle{E^{(s_2)}_{M_2} \subseteq Z^{(s_2)}_{M_2}\,,}$ it follows that $\displaystyle{E^{(s_1)}_{M_1}}$ and $\displaystyle{E^{(s_2)}_{M_2}}$ have no intersection.
\end{proof}

\subsection{Invariance of the sets $E^{(s)}_M$.}

Consider now an operator of the form
\begin{gather} 
{\cal L} = \widetilde{H} + {\cal R} \,,\\
\widetilde{H} := -\Delta_{g, \kappa} + {\NForm}\,,\quad
\cR\in\change{\ops {\mu-\tN\rho}}
\label{mult + nform}
\end{gather}
with $\NForm$ in resonant normal form. Since a Fourier multiplier like $-\Delta_{g,\kappa}$,
        leaves invariant any set of the form \eqref{def cal e}, we
        focus on ${\NForm}$ only.

Remark that, in order to study if a set is invariant, we have to analyze
the indices $\csi, \zeta \in E\subset \Z^d$ s.t.
	$$
	\langle \NForm [e^{\ii \csi \cdot x}]\,, e^{\ii \zeta \cdot x} \rangle \neq 0\,.
	$$ 

\begin{lemma}
  \label{supporto}
  Let $\NForm = \Op(\nform),$ $\nform (x, \xi)=\sum_{k\in
    \Z^d}\hat \nform _k(\xi)e^{\ii k\cdot x},$ be a normal form
  operator; let $M$ be a submodule with dim$M\geq 1$, then
  \begin{equation} 
    \label{sopporto}
 \xi\in E^{(s)}_M\quad \Longrightarrow \NForm [e^{\ii {\csi} \cdot x}] =\sum_{k\in
    M}\hat \nform_k \left( \xi + \frac{k}{2}\right)e^{\ii (k + {\csi})\cdot x}\ .
  \end{equation}
\end{lemma}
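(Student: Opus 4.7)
The plan is to apply the Weyl quantization formula from Definition \ref{oppseudo} directly to $u(x)=e^{\ii\xi\cdot x}$ (whose only nonzero Fourier coefficient is $\hat u_\xi=1$), obtaining
\begin{equation*}
\NForm[e^{\ii\xi\cdot x}] = \sum_{h\in\Z^d} \hat\nform_h\bigl(\xi + h/2\bigr)\, e^{\ii(\xi+h)\cdot x}.
\end{equation*}
The claim then reduces to showing that $\hat\nform_h(\xi+h/2)=0$ whenever $h\notin M$. The term $h=0$ is already of the desired form since $0\in M$, so I only need to worry about $h\neq 0$.

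Next I would use the resonant normal form condition of Definition \ref{def resonant op}: if $\hat\nform_h(\xi+h/2)\neq 0$ for some $h\neq 0$, then the evaluation point $\eta := \xi+h/2$ must lie in the support of $\hat\nform_h$, which gives
\begin{equation*}
|\scala{\eta+\kappa}{h}| \leq \langle \eta+\kappa\rangle^\delta \norm{h}^{-\tau}, \qquad \norm{h} \leq \langle \eta+\kappa\rangle^\epsilon.
\end{equation*}
Recognizing that $\eta+\kappa=\xi_h$ in the notation of Definition \ref{csi traslate}, and recalling that $C_0=D_0=1$ and $\delta_0=\delta$ in Definition \ref{def res zones}, these inequalities are exactly the defining ones of $Z_{h}$. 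Therefore $\xi\in Z_{h}\subseteq Z^{(1)}_{M_h}$, where $M_h := \R h \cap \Z^d$ is the smallest submodule of $\Z^d$ containing $h$ (a one-dimensional submodule since $h\neq 0$).

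The crux is to derive a contradiction when $h\notin M$. In that case $M_h \not\subseteq M$, since $h \in M_h$ but $h\notin M$. Assume first $1\leq \dim M\leq d-1$: I would then invoke Lemma \ref{lemma speriamo sia vero} with $\zeta=\xi$, $k'=0$ (which trivially satisfies $\norm{k'}\leq\langle\zeta_{k'}\rangle^\epsilon$), $M'=M_h$, and $s'=1\leq \dim M$, concluding that $\xi=\xi+0\notin Z^{(1)}_{M_h}$, contradicting the conclusion of the previous paragraph. The case $\dim M=d$ is immediate: $M=\Z^d$ automatically contains $h$.

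The main (mild) obstacle is simply verifying the compatibility between the support condition of Definition \ref{def resonant op} and the defining inequalities of $Z_h$ (i.e., the role of the choices $C_0=D_0=1$), and checking that Lemma \ref{lemma speriamo sia vero} applies with $k'=0$. The substantive geometric content has already been extracted in the separation of resonances lemma.
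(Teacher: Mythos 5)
Your proposal is correct and follows essentially the same route as the paper: apply the Weyl quantization formula to $e^{\ii\xi\cdot x}$, observe that a nonzero coefficient $\hat\nform_h(\xi+h/2)$ with $h\neq 0$ forces the resonance relations defining $Z_h\subseteq Z^{(1)}_{M_h}$ (the paper also uses the identification $\xi+h/2+\kappa=\xi_h$), and derive a contradiction with $\xi\in E^{(s)}_M$ when $h\notin M$. The only cosmetic difference is that the paper invokes Lemma \ref{lemma che bei blocchi} directly, whereas you invoke Lemma \ref{lemma speriamo sia vero} with $k'=0$, which yields the same conclusion.
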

\proof         By the definition of Weyl quantization one has 
\begin{gather*}
{\NForm}  [e^{\ii \csi \cdot x}] = \sum_{k \in \Z^d} \hat{\nform}_k\left(\csi + \frac{k}{2}\right) e^{\ii (\csi + k) \cdot x}\,.
\end{gather*}
In particular, given $\csi \in \Z^d\,,$  
$$
\langle {\NForm} [e^{\ii \csi \cdot x}]\,, e^{\ii (\csi + k) \cdot x} \rangle  \neq 0
$$
implies that, either $k = 0,$ or 
$$
\left(\csi + \frac{k}{2}\right) \in \textrm{supp}(\hat{\nform}_k)\,.
$$
Assume now by contradiction that $\exists k\not \in M$ s.t. $\hat \nform_k \left(\xi + \frac{k}{2}\right)\not=0$; since $N$ is in normal form this implies in
particular 
$$
\left|\scala{ \xi_k}{k}\right|\leq \langle\xi_k\rangle^\delta\ ,\quad
\norm{k}\leq \langle\xi_k\rangle^\epsilon\ ,
$$
which means that, defining $M':={\rm span}_{\Z}k$, that $\xi\in
Z^{(1)}_{M'}$, with $M'\not\subset M$. This conclusion however is in
contradiction with the conclusion of Lemma \ref{lemma che bei blocchi}.\qed

The main result of this subsection is the following theorem.

\begin{theorem} \label{teo dieci piccoli blocchi} Let $E\subset\Z^{d}$
  and let $\cE\subset L^2(\T^d)$ be the corresponding subset of $L^2$.  
There exists a choice of the constants $C_1\,, \dots, C_{d-1}\,,$
$D_{1}\,, \dots, D_{d-1}$ in Definition \ref{def res zones}
and in Equation \eqref{potenziale risonante} 
such that $\forall s,M$ the set $\cE^{(s)}_{M}$ is left
                  invariant by an operator $\NForm$ in normal form,
                  namely: if $\zeta \in E^{(s)}_{M}$ and $ \langle
                  {\NForm [e^{\ii \zeta \cdot x}], e^{\ii \xi \cdot x
                    }\rangle \neq 0\,,} $ then $\xi \in
                    E^{(s)}_{M}\,.$ Furthermore, in such a
                    case one has
		\begin{equation} \label{piano affine}
		\zeta - \xi \in M\,.
		\end{equation}
{Furthermore, the constants $C_1, \dots, C_{d-1}$ and $D_1, \dots,
  D_{d-1}$ depend on the parameters $d, \ep, \delta, \tau, {\frak c},
  {\frak C}\,$} only.
\end{theorem}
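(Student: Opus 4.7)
The plan is to reduce everything to the already-established non-overlapping lemmata. First, by Lemma \ref{supporto}, if $\zeta\in E^{(s)}_{M}$ and $\langle\NForm[e^{\ii\zeta\cdot x}],e^{\ii\xi\cdot x}\rangle\neq 0$, then $k:=\xi-\zeta\in M$, which already gives \eqref{piano affine}. The normal-form support condition \eqref{potenziale risonante} furthermore implies that the corresponding $k$ satisfies the small-divisor bound $|\scala{\zeta+\kappa+k/2}{k}|\leq\langle\zeta+\kappa+k/2\rangle^{\delta}\norm{k}^{-\tau}$ together with $\norm{k}\leq\langle\zeta+\kappa+k/2\rangle^{\epsilon}$.

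It then remains to check the three defining conditions for $\xi\in E^{(s)}_M$. The membership $\xi\in B^{(s)}_M+M$ is immediate from $\zeta\in B^{(s)}_M+M$ and $k\in M$. For $\xi\in Z^{(s)}_M$, I would take linearly independent witnesses $k_1,\dots,k_s\in M$ certifying $\zeta\in Z^{(s)}_M$, write $\xi_{k_1}=\zeta_{k_1}+k$, and decompose $\scala{\xi_{k_1}}{k_j}=\scala{\zeta_{k_1}}{k_j}+\scala{k}{k_j}$. The first term is controlled by the very definition of $Z^{(s)}_M$; for the second, Cauchy--Schwarz together with $\norm{k},\norm{k_j}=O(\langle\zeta_{k_1}\rangle^{\epsilon})$ gives $|\scala{k}{k_j}|=O(\langle\zeta_{k_1}\rangle^{2\epsilon})$. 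Applying Lemma \ref{xixik} to pass between $\langle\zeta_{k_1}\rangle$ and $\langle\xi_{k_1}\rangle$, the telescope $\delta_s=\delta_{s-1}+(d+\tau+1)\epsilon$ together with $\epsilon(\tau+1)\leq\delta$ from \eqref{legami parametri} leaves enough room to absorb these extra terms, provided $C_{s-1}$ and $D_{s-1}$ are chosen sufficiently large in terms of $C_{s-2},D_{s-2}$ and of the metric constants $\frak c,\frak C$.

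It remains to rule out $\xi\in E^{(r)}_{M'}$ for every $r<s$ and every submodule $M'$ of dimension $r$. If $M'\not\subset M$, Lemma \ref{lemma speriamo sia vero} applied with $k'=k\in M$ gives directly $\xi\notin Z^{(r)}_{M'}$, and a fortiori $\xi\notin E^{(r)}_{M'}$. If instead $M'\subsetneq M$, I would argue by contradiction: suppose $\xi\in E^{(r)}_{M'}\subseteq B^{(r)}_{M'}+M'$, so that $\xi=\xi^{*}+m'$ with $\xi^{*}\in B^{(r)}_{M'}$ and $m'\in M'$. Combining with $\zeta=\zeta^{*}+m$, $\zeta^{*}\in B^{(s)}_M$, one obtains $\xi^{*}=\zeta^{*}+v$ with $v:=m+k-m'\in M$, and two applications of Lemma \ref{lemma diam} together with the bound on $\norm{k}$ keep $\norm{v}=O(\langle\zeta+\kappa\rangle^{\delta_{s-1}+d\epsilon})$. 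The transfer-of-resonance mechanism used in the proofs of Lemmata \ref{lemma che bei blocchi}--\ref{lemma speriamo sia vero} (again via Lemma \ref{xixik}) then shows that $\xi^{*}$ inherits $s$ linearly independent resonance relations in $M$; hence $\xi^{*}\in Z^{(s)}_M$, and by Remark \ref{rmk zone inscatolate} this forces $\xi^{*}\in Z^{(r+1)}_{M''}$ for some $M''\subset M$ of dimension $r+1$, contradicting $\xi^{*}\in B^{(r)}_{M'}$.

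The main technical obstacle is verifying that the inequality $|\scala{\xi_{k_1}}{k_j}|\leq C_{s-1}\langle\xi_{k_1}\rangle^{\delta_{s-1}}\norm{k_j}^{-\tau}$ closes for every $s\in\{1,\dots,d-1\}$, which forces a consistent inductive choice of the constants $\{C_s,D_s\}$ in terms of the lower-index ones and of $\frak c,\frak C$; once this is in place all the remaining work is essentially bookkeeping around Lemma \ref{xixik}, and the uniformity of the constants in the final statement follows from the fact that every estimate depends only on $d,\delta,\epsilon,\tau,\frak c,\frak C$.
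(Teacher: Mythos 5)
Your reduction of \eqref{piano affine} to Lemma \ref{supporto} and your treatment of the modules $M'\not\subset M$ via Lemma \ref{lemma speriamo sia vero} are correct and match the paper. But the two remaining steps have genuine gaps, both for the same structural reason: transferring a resonance relation from a point to a nearby one unavoidably degrades the constants and the exponent to the \emph{next} level of Definition \ref{def res zones}, so such a transfer can never re-certify membership in the \emph{same} zone. Concretely, for $\xi\in Z^{(s)}_{M}$ the $j$-th condition in \eqref{r.z.1} must hold with the fixed data $C_{j-1},D_{j-1},\delta_{j-1}$ — in particular for $j=1$ with $C_0=D_0=1$. Your decomposition $\scala{\xi_{k_1}}{k_j}=\scala{\zeta_{k_1}}{k_j}+\scala{k}{k_j}$ adds an error of size $O(\langle\zeta_{k_1}\rangle^{2\epsilon})$ to a main term that may already saturate the bound with constant $1$, and enlarging $C_{s-1}$ is irrelevant to the $j=1$ condition. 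The paper never attempts this direct verification: it argues by contradiction using that the extended blocks cover $\Z^d$, so $\xi\notin Z^{(s)}_{M}$ would place $\xi$ in some $E^{(s')}_{M'}$ with $M'\neq M$, and then excludes each alternative ($s'=s$ by Lemma \ref{lemma speriamo sia vero}; $s'>s$ by Remark \ref{rmk zone inscatolate} reduced to the previous case; $s'<s$ as below).

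The case $M'\subsetneq M$, $\dim M'=r<s$, fails for the same reason: the relations you transfer to $\xi^{*}$ through Lemma \ref{xixik} come out with exponent $\delta_{s-1}+(d+\tau+1)\epsilon=\delta_s>\delta_r$ and with constants determined by $C_{s-1},D_{s-1}$, so they certify neither $\xi^{*}\in Z^{(s)}_{M}$ (which requires exponents only up to $\delta_{s-1}$ and the fixed lower-index constants) nor $\xi^{*}\in Z^{(r+1)}_{M''}$ (whose $(r+1)$-th relation requires exponent $\delta_r$ and constant $C_r$, and demanding $C_r$ larger than a quantity depending on $C_{s-1}>C_r$ is circular). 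The paper's mechanism here is different and essential: since the matrix element $\langle\NForm[e^{\ii\zeta\cdot x}],e^{\ii\xi\cdot x}\rangle$ is nonzero exactly when $\langle\NForm[e^{\ii\xi\cdot x}],e^{\ii\zeta\cdot x}\rangle$ is, the inductive invariance of the lower-dimensional extended blocks would force $\zeta\in E^{(r)}_{M'}$, contradicting $\zeta\in E^{(s)}_{M}$ and the mutual disjointness of the extended blocks. To close your argument you need to organize the whole proof as an induction on $s$ (base case $s=0$, where $\NForm$ acts as a Fourier multiplier) and replace both problematic steps by this covering-plus-duality argument.
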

\begin{proof} Take $\zeta\in E^{(s)}_M$, assume that $\xi$ is such that
\begin{equation}
  \label{xizeta}
\langle e^{\ii\xi\cdot x};\NForm [e^{\ii\zeta\cdot x}]\rangle\not =0\ .
\end{equation}
First we remark that, by Lemma \ref{supporto}, one has
$$
\NForm [e^{\ii\zeta\cdot x}]=\sum_{k\in M}\hat
\nform_k\left(\zeta+\frac{k}{2}\right) e^{\ii(\zeta+k)\cdot x}\ ,
$$
so, in particular
$$
\eqref{xizeta}\ \Longrightarrow\ \xi-\zeta\in M
$$
and also
\begin{equation}
  \label{xizeta.1}
\xi=\zeta+k\ ,\quad \norm{k}\leq \langle\zeta_k\rangle^\epsilon\ .
\end{equation}

We now proceed in proving that \eqref{xizeta} also implies $\xi\in
E^{(s)}_M$. 

First, if $M=\{0\}$, then, by the very definition of normal form, $\cN$
acts as a Fourier multiplier on $E^{(0)}$, and thus in particular it is
diagonal and leaves it invariant. Furthermore, $E^{(0)}$ decomposes into
invariant subspaces. Each one of these subspaces is just a single point of $\Z^d={\Mc}$.

In order to prove the result for higher values of $s$, we first remark
that
\begin{equation*}
E^{(s)}_M=\left( \left\{B^{(s)}_M +M \right\}\cap Z^{(s)}_M
\right)\setminus \left(\bigcup_{r<s} E^{(s)}\right)\ .
\end{equation*}

From \eqref{xizeta.1} it follows that $\xi\in E^{(s)}_M+M\subset
B^{(s)}_M+M  $. We are going to prove by induction on $s$ that $\xi\in
Z^{(s)}_M$ and that it also belongs to the complement of $\bigcup_{r<s}
E^{(s)}$.

We know the result is true for $s=0$. By induction we have that if
$\zeta\in E^{(s-1)}_M$ then $\xi\in E^{(s-1)}_M$, and therefore also
$\xi\in Z^{(s-1)}_M$; we prove now that if
$\zeta\in E^{(s)}_M$ then $\xi\in Z^{(s)}_M$. Assume by
contradiction that this is not true. Since the sets $\{E^{(\tilde
  s)}_{\tilde M}\}_{\tilde{s}, \tilde{M}}$ form a partition, then there exists $s'$, and $M'\not=M$
s.t. $\xi\in E^{(s')}_{M'}\subset Z^{(s')}_{M'}$.

There are three cases
\begin{itemize}
\item[1)] $s'=s$. Then, by \eqref{xizeta.1}, one can apply Lemma
  \ref{lemma speriamo sia vero}, which implies
  $$
\xi\not\in Z^{(s)}_{M'}\ , \quad \text{{\rm
    unless}} \ M=M'\ .
$$
Thus this case is not possible.

\item[2)] $s'>s$. By Remark \eqref{rmk zone inscatolate}, and item 1),
  this implies $\xi\in Z^{(s)}_M$, against the contradiction assumption.

\item[3)] $s'<s$. Just remark that \eqref{xizeta} is equal to
\begin{equation}
  \label{xizeta.4}
\langle e^{\ii\xi\cdot x};\NForm [e^{\ii\zeta\cdot x}]\rangle=\langle \NForm
[e^{\ii\xi\cdot x}];e^{\ii\zeta\cdot x}\rangle\not =0\ ,
\end{equation}
but the inductive assumptions says that $E^{(s')}_{M'}$ is
invariant for $s'<s$, thus \eqref{xizeta.4} implies $\zeta\in
E^{(s')}_{M'}$ which is impossible since the extended blocks form a
partition. 
\end{itemize}

Thus we have $\zeta\in E^{(s)}_M$ then $\xi\in \left\{B^{(s)}_M+M
\right\}\cap Z^{(s)}_M$. Then by induction, using \eqref{xizeta.4},
$\xi\in E^{(s')}_{M'}$, $s'<s$, implies $\zeta\in E^{(s')}_{M'}$
and
thus $\zeta\in E^{(s)}_{M}$ implies $\xi\not\in E^{(s')}_{M'}$,
$\forall s'<s$, and this concludes the proof. 
\end{proof}

By equation \eqref{piano affine}, each extended block is foliated in
equivalence classes left invariant by an operator in normal form. {\it
  We
define the sets $W_{M,\beta}$ of Theorem \ref{main} to be such
equivalence classes}. We are now going to show that they are labeled
by $\beta$ in a subset of ${\Mc}$. First remark that, if $\xi\in E^{(s)}_M$, there
exists $W_{M,\beta}$ s.t. $\xi\in W_{M,\beta}$ and then one has
$$
W_{M,\beta}\subset \xi+M\ . 
$$
Introduce now a basis adapted to $M$, then, since
$\Z^d=M+{\Mc}$, for any equivalence class there exists $\beta\in {\Mc}$
s.t. $W_{M,\beta}\subset \beta+M$. Conversely, given $\beta\in  {\Mc}$ we
define
$$
W_{M,\beta}:= (\beta+M)\cap E^{(s)}_M\ ,
$$
which is possibly empty. Following Definition \ref{tildee},
$\widetilde M$ is the subset of the $\beta$'s
 s.t. $W_{M,\beta}$ is not empty.

We have thus established the following Corollary.

\begin{corollary}\label{cor invarianza}
	\change{Given $\ep, \delta \in \R^+$ and
	$\tau > d - 1$ fulfilling \eqref{legami parametri},
	a
	Floquet parameter $\kappa$ and a flat metric $g$, there exists a partition $\{ W_{M, \beta} \}_{ M \subseteq \Z^d, \beta \in  {\Mc}}$  of $\Z^d$
	which is left invariant by any operator in normal form. Furthermore,  all the sets $W_{M, \beta}$ have finite cardinality, and they satisfy the following:
	\begin{itemize}
	\item[(i)] the set $E_{\{0\}}:=\bigcup_{\beta}W_{\{0\}, \beta}$
	has density one at infinity and $\sharp W_{\{0\}, \beta} = 1$ for any $\beta \in \Z^d$.
	\item[(ii)] If $M = \Z^d$, $\Mc = \{0\}$ and $W_{\Z^d,\{0\}}$ has cardinality bounded
	by an integer $n_*$ 
	which depends on the constants of the metric and on $d,\ \delta,\ \ep,
	{\tau}$ only
	\item[(iii)]
	If $M$ is a proper module, for any $\beta \in {\Mc}$ one has: $\csi^{\bot}_M = (\csi')^\bot_M$  $\forall \csi, \csi' \in W_{M, \beta}$, and
	\begin{equation}
	\| \csi_M\| \leq K \langle \csi \rangle^{\delta + d(d + \tau + 1)\ep} \quad \forall \csi \in W_{M,\beta}\,,
	\end{equation}
	where $K$ is a positive constant depending only on the constants of the metric and on $d,\ \delta,\ \ep,
	{\tau}.$
	\end{itemize}
	}
\end{corollary}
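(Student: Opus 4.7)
The plan is to deduce the corollary by synthesizing the partition property established in Theorem \ref{cor partizione}, the invariance property of Theorem \ref{teo dieci piccoli blocchi}, and the geometric estimate of Lemma \ref{lemma piccole proiez}. First, I would take the partition $\Z^d=E^{(0)}\cup E^{(d)}\cup\bigcup_{s,M}E^{(s)}_M$ given by Theorem \ref{cor partizione} and refine each extended block into the finer sets $W_{M,\beta}:=(\beta+M)\cap E^{(s)}_M$, with $\beta$ ranging over $\Mc$. Since $\Z^d=M+\Mc$ as a direct sum, every $\xi\in E^{(s)}_M$ lies in exactly one translate $\beta+M$ with $\beta\in\Mc$, so the collection $\{W_{M,\beta}\}$ is indeed a partition of $\Z^d$ refining the partition of Theorem \ref{cor partizione}. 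Invariance under an operator $\NForm$ in normal form follows immediately from Theorem \ref{teo dieci piccoli blocchi}: if $\zeta\in E^{(s)}_M$ and $\langle\NForm[e^{\ii\zeta\cdot x}],e^{\ii\xi\cdot x}\rangle\neq 0$, then $\xi\in E^{(s)}_M$ and $\xi-\zeta\in M$, so $\xi$ and $\zeta$ belong to the same $W_{M,\beta}$.

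For item (i), take $M=\{0\}$: then $\Mc=\Z^d$ and the equivalence classes modulo $\{0\}$ are singletons, so $\sharp W_{\{0\},\beta}=1$ for every $\beta$. The density--one statement follows from Theorem \ref{cor partizione} together with the identification $E_{\{0\}}=E^{(0)}$. For item (ii), the case $M=\Z^d$ forces $\Mc=\{0\}$ and $W_{\Z^d,\{0\}}=E^{(d)}$; the bound on its cardinality is then exactly Lemma \ref{lemma bd piccolo}, with $n_*$ depending only on $d,\delta,\epsilon,\tau$ and the constants $\frak c,\frak C$ of the metric, as required.

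For item (iii), let $M$ be a proper module of dimension $s\in\{1,\dots,d-1\}$ and $\beta\in\widetilde M$. If $\xi,\xi'\in W_{M,\beta}$ then $\xi-\xi'\in M$, hence $(\xi-\xi')_{M^\perp}=0$, giving the first claim. For the norm estimate, $W_{M,\beta}\subseteq E^{(s)}_M\subseteq Z^{(s)}_M$, so Lemma \ref{lemma piccole proiez} gives a constant $K'$, depending only on $\frak c,\frak C,d,\epsilon,\tau,\delta_{s-1},C_{s-1},D_{s-1}$, with
\[
\|(\xi^\kappa)_M\|\leq K'\langle\xi^\kappa\rangle^{\delta_{s-1}+d\epsilon}.
\]
From the recursion $\delta_{s+1}=\delta_s+(d+\tau+1)\epsilon$ and $\delta_0=\delta$ one has $\delta_{s-1}\leq\delta+(d-1)(d+\tau+1)\epsilon$, so $\delta_{s-1}+d\epsilon\leq\delta+d(d+\tau+1)\epsilon$, while $\|\xi_M\|\leq\|(\xi^\kappa)_M\|+\|\kappa\|$ and $\langle\xi^\kappa\rangle\leq C_\kappa\langle\xi\rangle$ absorb the shift by $\kappa$ into a new constant $K$ of the same type. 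The condition \eqref{legami parametri} ensures that the exponent stays $<1$, so that the statement is nontrivial; moreover, by Lemma \ref{lemma che bei blocchi} and Lemma \ref{lemma speriamo sia vero}, the parameters $C_j,D_j$ can be chosen so that they depend only on $d,\delta,\epsilon,\tau,\frak c,\frak C$, which propagates to $K$ and $K'$.

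I do not expect a serious obstacle: the content of the corollary is essentially bookkeeping on top of Theorems \ref{cor partizione}, \ref{teo dieci piccoli blocchi} and Lemmas \ref{lemma bd piccolo}, \ref{lemma piccole proiez}. The only mildly delicate point is tracking the exponent in (iii): one must verify that summing the geometric estimate $\delta_{s-1}+d\epsilon$ obtained for each $s\leq d-1$ is uniformly dominated by the claimed $\delta+d(d+\tau+1)\epsilon$, and that the transition from bounds in $\xi^\kappa$ to bounds in $\xi$ only modifies the multiplicative constant.
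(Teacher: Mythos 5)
Your proposal is correct and follows essentially the same route as the paper: the authors also define $W_{M,\beta}:=(\beta+M)\cap E^{(s)}_M$ and obtain the corollary by combining Theorem \ref{cor partizione}, the invariance statement \eqref{piano affine} of Theorem \ref{teo dieci piccoli blocchi}, Lemma \ref{is big}, Lemma \ref{lemma bd piccolo}, and Lemma \ref{lemma piccole proiez}, with the same exponent bookkeeping $\delta_{s-1}+d\ep\leq\delta+d(d+\tau+1)\ep$. The only detail left implicit in your write-up, the finite cardinality of every $W_{M,\beta}$ for proper $M$, follows at once from your item (iii) together with the constancy of $\xi_{M^\perp}$ on $W_{M,\beta}$ and Lemma \ref{xallaa}.
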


\subsection{Dimensional reduction}\label{rid}

We analyze now the restriction of $\widetilde H$ to each invariant
set. \change{We point out that, as observed in Remark \ref{rmk m}, this is the only point of the whole Section \ref{sez blocchi} where the boundedness condition $\mu\leq 0$ is needed. We then} consider 
\begin{equation} \label{decomp h}
\begin{gathered}
\widetilde{H}_{M, \beta}\equiv \Pi_{{\cal W}_{M, \beta}} \left(-\Delta_{g, \kappa} + \NForm_M \right)  \Pi_{{\cal W}_{M, \beta}}\,, 
\end{gathered}
\end{equation}
with 
\begin{equation} \label{vm}
\NForm_M = \Op(\nform_M)\,, \quad \nform_M(x, \csi) = \sum_{ k \in M} \hat{\nform}_k (\csi) e^{\ii k \cdot x}\,,
\end{equation}
in normal form.

{Given $\xi\in W_{M,\beta}$, let $\tilde{\csi}$ and $\kappa^\prime$ be defined as in \eqref{betainz}, namely
	$$
	\tilde{\csi} = \csi - \inte{(\xi+\kappa)_M}\,, \quad \kappa'=\left\{ (\xi+\kappa)_M   \right\}\,, 
	$$ and recall that, as pointed out in Remark \ref{rmk costanti}, one has $\tilde{\csi} = \tilde{\beta}\,.$ Thus, defining
\begin{equation}
  \label{dec.3}
\zeta:=\inte{(\xi+\kappa)_M}\ , \quad \ell^2:=\norma{(\tilde
  \beta+\kappa)_{M^{\perp}}}^2\ , 
\end{equation}
one has
\begin{align}
  \label{intern}
  \xi=\zeta+\tilde \beta\ ,\quad (\xi+\kappa)_M=\zeta+\kappa'\ ,
  \\
  (\xi+\kappa)_{M^\perp}=  (\tilde \beta+\kappa)_{M^\perp}\ ,
  \\
    \label{intern.2}
  \norm{\xi+\kappa}^2=\norm{\zeta+\kappa'}^2+\ell^2\ .
\end{align}
}
\begin{remark}
  \label{trasla}
Consider the translation $W_{M,\beta}\ni\xi\mapsto \zeta=\xi-\tilde
\beta\in W^t_{M,\beta}\subset M$; {as pointed out in Remark \ref{Gauge},} its quantization is the
Gauge transformation $U_{\beta}=e^{-i\tilde \beta\cdot x }$. By
standard pseudodifferential calculus, given a symbol 
$a(x,\xi)$ one has that the symbol of ${U_{\beta}
\Op(a)U_{\beta}^{-1}}$ is
\begin{equation}
  \label{trasl.1}
a^{trasl}(x,\zeta):=a(x,\zeta+\tilde \beta)\ ,
\end{equation}
which, if $a$ is in normal form, is a function on
$T^*\T^{s}$.
\end{remark}

Precisely, we have the following lemma
\begin{lemma} \label{prop riduzione}
	With the above notations, assume that $\nform_M\in\simb m $
        with $m\leq 0$, is in normal form with respect to $M$, then,
        {in coordinates adapted to $M$}, one has
        \begin{equation}
          \label{rida}
{U_{ \beta}}  \left.(-\Delta_{g, \kappa}+\NForm_M)\right|_{\cW_{M,\beta}}
{U_{ \beta}^{-1}}= {\left.\left(-\Delta_{g, \kappa'}+\NForm'_M+\ell^2\right)\right|_{\cW^t_{M,\beta}}}\ ,
        \end{equation}
        where $-\Delta_{g, \kappa'}$ is the Laplacian (in $s$
        dimensions) with respect to the restriction of the metric to
        $M$ and
        $$
N'_{M}(x,\zeta)=N_M(x,\zeta+\tilde \beta)\ 
$$
is of class $\simb m$ (as a symbol on $\T^s$), with seminorms bounded
by the seminorms of $\NForm_M$. 
\end{lemma}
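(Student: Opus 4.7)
The plan is to verify the identity \eqref{rida} directly on the spanning family $\{e^{\ii\zeta\cdot x}\}_{\zeta\in W^t_{M,\beta}}$ of $\cW^t_{M,\beta}$, using the fact (already recorded in Remark \ref{trasla}) that conjugation by $U_\beta=e^{-\ii\tilde\beta\cdot x}$ shifts momentum by $\tilde\beta$, and then to check that the translated symbol $\nform'_M$ satisfies the required $s$-dimensional bounds.

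First I would conjugate the Laplacian. Since $U_\beta e^{\ii\zeta\cdot x}=e^{\ii(\zeta-\tilde\beta)\cdot x}$, a direct computation gives $U_\beta D_A U_\beta^{-1}=D_A+\tilde\beta_A$, whence
\begin{equation*}
U_\beta(-\Delta_{g,\kappa})U_\beta^{-1}=g^{AB}(D_A+\tilde\beta_A+\kappa_A)(D_B+\tilde\beta_B+\kappa_B).
\end{equation*}
Applied to $e^{\ii\zeta\cdot x}$ with $\zeta\in W^t_{M,\beta}\subset M$, this produces the eigenvalue $\norgs{\zeta+\tilde\beta+\kappa}^2$. Using the orthogonal decomposition together with $(\tilde\beta+\kappa)_M=\{(\beta+\kappa)_M\}=\kappa'$ from Definition \ref{ortz} (see also Remark \ref{decompoz}), this eigenvalue splits as $\norgs{\zeta+\kappa'}^2+\ell^2$. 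In coordinates $(z^1,\dots,z^d)$ adapted to $M$, every $e^{\ii\zeta\cdot x}$ with $\zeta\in M$ depends only on $(z^1,\dots,z^s)$, and $\norgs{\zeta+\kappa'}^2$ is precisely the eigenvalue on such an exponential of the $s$-dimensional Laplacian built from the restriction of $g^*$ to $\Span_\R M$ with Floquet parameter $\kappa'$. This accounts for the $-\Delta_{g,\kappa'}+\ell^2$ piece of \eqref{rida}.

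Next I would treat $\NForm_M$. By the same momentum-shift computation (which is a straightforward consequence of the Fourier-series definition \eqref{def weyl toro} of Weyl quantization applied to a multiplication operator, along the lines of the results collected in Appendix \ref{pdc}), one has
\begin{equation*}
U_\beta\Op(\nform_M)U_\beta^{-1}=\Op\bigl(\nform_M(x,\,\cdot+\tilde\beta)\bigr)=\Op(\nform'_M).
\end{equation*}
Because $\nform_M$ is in normal form with respect to $M$, all its $x$-Fourier modes $k$ lie in $M$, and in the adapted coordinates each $e^{\ii k\cdot x}$ with $k\in M$ depends only on $(z^1,\dots,z^s)$; hence $\nform'_M$ is independent of $(z^{s+1},\dots,z^d)$ and genuinely defines a function on $T^*\T^s$.

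Finally I would check the symbol estimate. The elementary bound $\norgs{\zeta+\tilde\beta+\kappa}^2=\norgs{\zeta+\kappa'}^2+\ell^2\geq\norgs{\zeta+\kappa'}^2$ for $\zeta\in\Span_\R M$ gives $\langle\zeta+\kappa'\rangle_g\leq\langle\zeta+\tilde\beta+\kappa\rangle_g$. Combined with Definition \ref{def weyl berti} applied to $\nform_M$, and with the observation that the multilinear norm in Definition \ref{def norma} can only decrease when the test vectors and covectors are restricted to $\Span_\R M$, this yields
\begin{equation*}
\| d^{N_1}_z d^{N_2}_\zeta\nform'_M(z,\zeta)\|\leq C_{N_1,N_2}(\nform_M)\,\langle\zeta+\tilde\beta+\kappa\rangle_g^{m-\delta N_2}.
\end{equation*}
The key (and only delicate) point is that, in order to replace the right-hand-side weight by the smaller $\langle\zeta+\kappa'\rangle_g$, one needs $m-\delta N_2\leq 0$ for every $N_2\in\N$; this is the only place where the hypothesis $m\leq 0$ enters, and it is exactly the obstruction highlighted in Remark \ref{rmk m}. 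Under that hypothesis $\nform'_M\in\simb m$ on $\T^s$ with Floquet parameter $\kappa'$, and its seminorms are bounded by those of $\nform_M$, which completes the proof.
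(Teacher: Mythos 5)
Your proposal is correct and follows essentially the same route as the paper's proof: conjugation by the Gauge transformation $U_\beta$ shifts momentum by $\tilde\beta$, the Laplacian splits as $-\Delta_{g,\kappa'}+\ell^2$ via \eqref{intern.2}, the normal-form condition forces $\nform'_M$ to depend only on the adapted coordinates, and the seminorm bound follows because restricting the test vectors to $\Span_\R M$ decreases the multilinear norm while the non-positive exponent $m-\delta N_2$ lets one replace $\langle\zeta+\tilde\beta+\kappa\rangle_g$ by $\langle(\zeta+\tilde\beta+\kappa)_M\rangle_g=\langle\zeta+\kappa'\rangle_g$. Your identification of $m\leq 0$ as the single place where boundedness is needed matches exactly the paper's computation and the discussion in Remark \ref{rmk m}.
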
 
\begin{proof} First remark that, by \eqref{intern.2} the
  transformation of the Laplacian is $-\Delta_{g, \kappa'}+\ell^2$.

We come to the transformation of $\NForm_M$. We observe that, since
it is in normal form with respect to $M$ its symbol has the structure
$$
\nform_M(x,\xi)=\sum_{k\in M}\hat{\nform}_k(\xi)e^{\im k\cdot x}\ .
$$
Furthermore, introducing a basis $\bv^A$ adapted to $M$, and denoting
by $\bu_A$ its dual basis, one has, for $k\in M$,
$$
k\cdot x=\sum_{a=1}^{d'}x^ak_a 
$$ (since the coordinates $k_A$, $A=d'+1,...,d$ of a vector in $M$
vanish). Thus one gets that the symbol $N'_M$ of the transformed
operator is
$$
N_M'(\zeta,\hat z)=\sum_{k\in\Z^{d'}}\hat{\nform}_{k_{a}\bv^a}(\zeta+ \perpz
\beta)e^{\im x^ak_a}=\nform_M (\zeta'+\perpz\beta,\hat x)\ ,\quad \hat x:=(x^1,...,x^{d'})
$$

Remark that, denoting $M_R:= {\rm
  span}_{R}(\bv_1,...,\bv_{d'})$ and $M_{R}^*:= {\rm
  span}_{R}(\bu_1,...,\bu_{d'})$, one has 
\begin{align*}
\norma{d^{N_2}_{\hat{x}} d^{N_1}_{\zeta'} \nform^\prime_M(\hat{x}, \zeta')
}=\sup_{\begin{subarray}\norma{\|h^{(j)}}\|=1,\ h^{(j)}\in
    M^*_R\\ \norma{k^{(j)}}=1,\  k^{(j)}\in M_R 
  \end{subarray}}
\left|d^{N_2}_{\hat{x}} d^{N_1}_{\zeta'} \nform_M^\prime(\zeta',\hat{z}) \left[h^{(1)}\,, \cdots, h^{(M)}, k^{(1)}\,, \cdots, k^{(N)}\right]\right|\\
\leq
\sup_{\begin{subarray}\norma{\|h^{(j)}\|}=1,\ h^{(j)}\in
    \R^d\\ \norma{k^{(j)}}=1,\  k^{(j)}\in \R^d 
  \end{subarray}}
\left|d^{N_2}_{{x}} d^{N_1}_{\xi} \nform_M (\zeta'+\perpz \beta,\hat{z}) \left[h^{(1)}\,, \cdots, h^{(M)}, k^{(1)}\,, \cdots, k^{(N)}\right]\right|\\
=
\norma{d^{N_2}_{{x}} d^{N_1}_{\xi} \nform_M (\hat {x},\zeta'+\perpz\beta )
}\leq C\langle \zeta'+\perpz\beta+\kappa  \rangle^{m-N_1\delta}\leq
C\langle (\zeta'+\perpz\beta+\kappa)_M  \rangle^{m-N_1\delta}
\\
=C\langle \zeta'+\kappa' \rangle^{m-N_1\delta}\ .
\end{align*}
which is the thesis.
\end{proof}
In order to deduce the spectral result, the following corollary will
be useful
\begin{corollary}
  \label{speca}
  Let $\norma{\zeta+\kappa'}_{}^2+m(\zeta)$ be an eigenvalue of
  $\left.(-\Delta_{g, \kappa'}+\NForm'_M)\right|_{\cW_{M,\beta}^t}$
  with eigenfunction $\phi^{(\zeta)}.$ Then
  $\norma{\xi+\kappa}_{}^2+m(\xi-\tilde\beta) $ is an eigenvalue of 
$\left.(-\Delta_{g,
    \kappa}+\NForm_M)\right|_{\cW_{M,\beta}}$ with eigenfunction
  $\psi^{(\xi)}:=e^{\im \tilde \beta\cdot x}\phi^{(\zeta)}$.  
\end{corollary}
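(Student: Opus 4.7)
The corollary is a direct consequence of Lemma \ref{prop riduzione} combined with the decomposition of the norm in \eqref{intern.2}, so the plan is essentially to unwind the conjugation. First I would observe that, by Lemma \ref{prop riduzione}, the Gauge transformation $U_\beta = e^{-\ii \tilde\beta \cdot x}$ satisfies
\[
U_\beta\,(-\Delta_{g,\kappa}+\NForm_M)\big|_{\cW_{M,\beta}}\,U_\beta^{-1} \;=\; \bigl(-\Delta_{g,\kappa'}+\NForm'_M+\ell^2\bigr)\big|_{\cW^t_{M,\beta}},
\]
so that the operators on $\cW_{M,\beta}$ and $\cW^t_{M,\beta}$ are unitarily equivalent up to the constant shift $\ell^2$.

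Next I would apply this identity to the eigenfunction $\phi^{(\zeta)}$. By hypothesis,
\[
\bigl(-\Delta_{g,\kappa'}+\NForm'_M\bigr)\phi^{(\zeta)} \;=\; \bigl(\norgs{\zeta+\kappa'}^2+m(\zeta)\bigr)\phi^{(\zeta)},
\]
and therefore, adding the scalar $\ell^2$ on both sides,
\[
\bigl(-\Delta_{g,\kappa'}+\NForm'_M+\ell^2\bigr)\phi^{(\zeta)} \;=\; \bigl(\norgs{\zeta+\kappa'}^2+\ell^2+m(\zeta)\bigr)\phi^{(\zeta)}.
\]
By \eqref{intern.2} one has $\norgs{\zeta+\kappa'}^2+\ell^2 = \norgs{\xi+\kappa}^2$, and since $\zeta = \xi-\tilde\beta$ this eigenvalue is exactly $\norgs{\xi+\kappa}^2+m(\xi-\tilde\beta)$.

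Finally, applying $U_\beta^{-1} = e^{\ii \tilde\beta\cdot x}$ to both sides and using the conjugation formula above, one obtains
\[
(-\Delta_{g,\kappa}+\NForm_M)\big|_{\cW_{M,\beta}}\bigl(U_\beta^{-1}\phi^{(\zeta)}\bigr) \;=\; \bigl(\norgs{\xi+\kappa}^2+m(\xi-\tilde\beta)\bigr)\,U_\beta^{-1}\phi^{(\zeta)},
\]
which, since $U_\beta^{-1}\phi^{(\zeta)} = e^{\ii\tilde\beta\cdot x}\phi^{(\zeta)} = \psi^{(\xi)}$, is precisely the claimed statement. The only minor point to check is that $U_\beta$ maps $\cW^t_{M,\beta}$ onto $\cW_{M,\beta}$, which is Remark \ref{Gauge}; apart from this bookkeeping there is no real obstacle, since the statement is just the translation back of the reduction already performed.
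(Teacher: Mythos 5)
Your proof is correct and is exactly the intended argument: the paper states this as an immediate corollary of Lemma \ref{prop riduzione} (with no written proof), and your unwinding of the conjugation by $U_\beta$ together with the identity \eqref{intern.2} is precisely that deduction. The bookkeeping point about $U_\beta^{-1}$ mapping $\cW^t_{M,\beta}$ onto $\cW_{M,\beta}$ is indeed covered by Remark \ref{Gauge} and Definition \ref{trasl}, so nothing is missing.
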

\begin{remark}
  \label{mulo}
By \eqref{intern}, in the particular case where $\phi^{(\zeta)}=e^{\im
  \zeta\cdot x}$, one has $\psi^{(\xi)}=e^{\im \xi\cdot x}$.
\end{remark}

\section{A spectral result by quasi-modes}\label{quasiquasi}

In this section we prove Theorem \ref{speci_spettri}.

The key quasimode argument we are going to use is a variant of that
used in \cite{BKP15} (see Proposition 5.1) and is the following one

\begin{lemma}[Quasi-mode argument] \label{prop quasimodi}
	Let $H = H_0 + H_1$ be a self-adjoint operator on the Hilbert
        space ${\cal H}$ such that $H$ and $H_0$ have pure point
        spectrum. Suppose that $\lambda^{(0)}_{1} \leq \dots \leq
        \lambda^{(0)}_{M}$ are $M$ eigenvalues of $H_0$ counted with
        multiplicity, such that $\exists D>0$ with
	\begin{equation} \label{ho dei gaps}
	(\lambda^{(0)}_1 - D\,,\ \lambda^{(0)}_1)\,, \quad (\lambda^{(0)}_M\,,\  \lambda^{(0)}_M + D)
	\end{equation}
which contain no eigenvalues of $H_0$. Denote by $\{ \psi_{k}
\}_{k=1}^{M}$ the orthonormal eigenfunctions corresponding to $\{\lambda^{(0)}_k\}_{k=1}^{M}$, and
let $\{\varepsilon_k\}_{k = 1}^M $ be such that
	\begin{equation}\label{ipotesi H1 quasi modi}
	\begin{gathered}
	\| H_1 \psi_k \| \leq \varepsilon_k \ ,\quad k = 1\,, \dots\,, M\,.
	\end{gathered}
	\end{equation}
If $D > 0$ and $\delta \in (0, 1)$ are such that 
	\begin{equation}\label{ho dei grossi gaps}
	\begin{gathered}
	D^2 \geq \frac{16}{\pi \delta^2} M^3 \left( \max_{k} \varepsilon_k \right) \left(|\lambda^{(0)}_{M} - \lambda^{(0)}_{1}| + D\right)\,,
	\end{gathered}	
	\end{equation}
then there are at least $M$ (not necessarily distinct) eigenvalues of
$H$ in the interval
	$$
	(\lambda^{(0)}_{1} - \delta D\,,\ \lambda^{(0)}_{M} + \delta D)\,. 
	$$
\end{lemma}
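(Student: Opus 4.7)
The plan is a dimension-counting argument built around the spectral projector of $H$. Let $\Pi$ denote the spectral projection of $H$ onto the open interval $I_\delta := (\lambda_1^{(0)} - \delta D, \lambda_M^{(0)} + \delta D)$. Since $H$ has pure point spectrum, it suffices to show $\dim \Range(\Pi) \geq M$: this yields at least $M$ eigenvalues (with multiplicity) of $H$ in $I_\delta$, as desired.

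The central input is a quasimode control on each $\psi_k$. From $(H - \lambda_k^{(0)})\psi_k = H_1 \psi_k$ and the fact that $\Pi$ commutes with $H$ one deduces $(H - \lambda_k^{(0)})(I - \Pi)\psi_k = (I - \Pi) H_1 \psi_k$, hence $\|(H - \lambda_k^{(0)})(I - \Pi)\psi_k\| \leq \varepsilon_k$ by \eqref{ipotesi H1 quasi modi}. Since $\lambda_k^{(0)} \in [\lambda_1^{(0)}, \lambda_M^{(0)}]$, its distance to $\R \setminus I_\delta$ is at least $\delta D$, so functional calculus for $H$ restricted to $\Range(I - \Pi)$ upgrades this to $\|(I - \Pi)\psi_k\| \leq \varepsilon_k/(\delta D)$. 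Arguing by contradiction, suppose $\dim \Range(\Pi) = N < M$: then
$$\sum_{k=1}^M \|\Pi \psi_k\|^2 = M - \sum_{k=1}^M \|(I-\Pi)\psi_k\|^2 \geq M - \frac{\sum_k \varepsilon_k^2}{(\delta D)^2},$$
while the trace bound $\sum_k \|\Pi \psi_k\|^2 = \mathrm{tr}(P_V \Pi) \leq \dim \Range(\Pi) = N$, with $P_V$ the orthogonal projector on $V := \mathrm{span}\{\psi_1, \dots, \psi_M\}$, forces $\sum_k \varepsilon_k^2 \geq (\delta D)^2$, the sought contradiction.

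To upgrade this crude balance into the precise quantitative condition \eqref{ho dei grossi gaps}, with its $M^3$, its factor $(|\lambda_M^{(0)} - \lambda_1^{(0)}| + D)$, and the constant $16/\pi$, I would implement the same idea in the complex-analytic form used for Proposition 5.1 of \cite{BKP15}. Namely, write $\Pi = (2\pi\ii)^{-1}\oint_\Gamma (z - H)^{-1}\, dz$ along a circular contour $\Gamma \subset \C$ centered at $(\lambda_1^{(0)} + \lambda_M^{(0)})/2$ with radius comparable to $(L + D)/2$ (where $L := |\lambda_M^{(0)} - \lambda_1^{(0)}|$), chosen to enclose all $\lambda_k^{(0)}$ while staying inside $I_\delta$. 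The second resolvent identity
$$(z - H)^{-1}\psi_k = (z - \lambda_k^{(0)})^{-1}\psi_k + (z - \lambda_k^{(0)})^{-1}(z - H)^{-1} H_1 \psi_k$$
transfers $\|H_1 \psi_k\| \leq \varepsilon_k$ into a bound on $\|\Pi \psi_k - \psi_k\|$: the arc-length of $\Gamma$ contributes a factor $\pi(L + D)$, while the distance from $\Gamma$ to $\sigma(H) \cap (\R \setminus I_\delta)$, bounded below by a constant multiple of $\delta D$, produces the $1/\delta^2$.

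The main obstacle is the quantitative bookkeeping of the interactions among the $M$ quasimodes needed to certify the linear independence of $\{\Pi\psi_k\}_{k=1}^M$, i.e.\ to control the Gram determinant $\det(\langle \Pi\psi_j, \Pi\psi_k\rangle)_{j,k=1}^M$; this is where the $M^3$ factor arises, rather than the $M^2$ that a single Cauchy--Schwarz would give. All remaining ingredients—the dimension-counting contradiction itself, the existence of an admissible contour $\Gamma$ disjoint from $\sigma(H)$ (ensured by \eqref{ho dei gaps} together with a perturbative estimate on $\sigma(H)$ near the endpoints of $I_\delta$), and the passage from $\dim \Range(\Pi) \geq M$ to $M$ eigenvalues of $H$ inside $I_\delta$—are routine consequences of the spectral theorem.
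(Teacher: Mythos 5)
Your first two paragraphs already constitute a complete proof --- and a more elementary one than the paper's --- so you underestimate them. The only missing step is the one-line check that \eqref{ho dei grossi gaps} rules out the conclusion $\sum_k\varepsilon_k^2\geq(\delta D)^2$ of your contradiction: dropping $|\lambda^{(0)}_M-\lambda^{(0)}_1|\geq 0$, the hypothesis gives $\max_k\varepsilon_k\leq\tfrac{\pi\delta^2}{16M^3}D$, whence $\sum_k\varepsilon_k^2\leq M(\max_k\varepsilon_k)^2\leq\tfrac{\pi^2\delta^2}{256M^5}(\delta D)^2<(\delta D)^2$. There is nothing to ``upgrade'': \eqref{ho dei grossi gaps} is a sufficient condition to be exploited, not a sharp constant to be reproduced, and your trace/Bessel argument even dispenses with hypothesis \eqref{ho dei gaps} altogether. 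The paper's route is genuinely different: it assumes fewer than $M$ eigenvalues of $H$ in the interval, uses pigeonhole to find subintervals $J^\pm\subset I^\pm$ of length at least $\delta D/M$ free of $\sigma(H)$ (and free of $\sigma(H_0)$ thanks to \eqref{ho dei gaps}), runs a square contour $\gamma$ through their midpoints, and shows via the resolvent identity that $P\psi_k=\psi_k+r_k$ with $\|r_k\|\leq\tfrac{\ell(\gamma)}{2\pi}\bigl(\tfrac{2M}{\delta D}\bigr)^2\varepsilon_k$, concluding that the $P\psi_k$ are linearly independent because the matrix $A_{kj}=\langle\psi_k,r_j\rangle$ satisfies $\|A\|\leq M\sup_{k,j}|A_{kj}|<1$. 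Your argument buys brevity and slightly weaker hypotheses; the paper's yields in addition that each unperturbed cluster eigenfunction is close to its projection onto the perturbed cluster.

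Two warnings about your third and fourth paragraphs, should you insist on the contour version. First, the existence of a contour whose distance to $\sigma(H)$ is ``a constant multiple of $\delta D$'' is not routine and is false for a fixed circle: nothing prevents eigenvalues of $H$ from sitting anywhere in $I^\pm$, and hypothesis \eqref{ho dei gaps} constrains $\sigma(H_0)$ there, not $\sigma(H)$. The paper only achieves distance $\delta D/(2M)$, and only under the contradiction hypothesis, via the pigeonhole step; this is the heart of the argument and cannot be waved away as a ``perturbative estimate''. Second, the linear independence of $\{P\psi_k\}$ does not require controlling a Gram determinant: the crude bound $\|A\|\leq M\sup_{k,j}|A_{kj}|$ suffices, and it is exactly this factor of $M$, combined with the squared distance $(\delta D/2M)^{-2}$, that produces the $M^3$ in \eqref{ho dei grossi gaps}.
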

\begin{proof}
	By contradiction, assume that there are less than $M$
        eigenvalues of $H$ inside the interval $\left(\lambda^{(0)}_1
        - \delta D, \lambda^{(0)}_M + \delta D\right)$, with $\delta \in (0, 1)$. In
        particular, there are less than $M$ eigenvalues in the
        intervals
	$$
	I^- = \left(\lambda^{(0)}_1 - \delta D,\ \lambda^{(0)}_1\right)\,, \quad I^+ = \left(\lambda^{(0)}_M,\ \lambda^{(0)}_M + \delta D \right)\,.
        $$ Since $I^{+}$ has length $\delta D$, there exists at
        least one interval $J^+ \subset I^+$ such that $|J^+| \geq \frac{\delta D}{M}$ which contains no eigenvalues of $H$,
        analogously for $I^-\,:$ there exists at least an interval
        $J^- \subset I^-$ containing no eigenvalues of $H$ and having
        length $|J^-| \geq \frac{\delta D}{M}$. Remark that, by hypothesis
        \eqref{ho dei gaps}, $J^\pm$ do not contain eigenvalues of
        $H_0$ either.  Consider then a square closed path $\gamma$ in
        the complex plane intersecting the real axis at the middle
        points of $J^+$ and $J^-\,.$ By construction,
	\begin{equation}\label{lontano da h0}
	\textrm{dist}\left(\gamma, \sigma(H_0)\right) \geq \frac{\delta D}{2 M}\,,
	\end{equation}
	and
	\begin{equation}\label{lontano da h}
	\textrm{dist} \left(\gamma, \sigma(H)\right) \geq \frac{\delta D}{2 M}\,.
	\end{equation}
	Moreover, the length $\ell(\gamma)$ of $\gamma,$ fulfills
	\begin{equation}\label{its a long way}
	\ell(\gamma) \leq 4 \left|\lambda^{(0)}_M - \lambda^{(0)}_1 +
        D \right|\,. 
	\end{equation}
By the contradiction assumption there are less than
$M$ eigenvalues of $H$ inside the path $\gamma\,.$ Let $M_0$ be their
number (counted with multiplicity).

Denote by 
	$R(z)= \left( H - z \mathbb{I}\right)^{-1}$ the resolvent of
$H$, and by $R_0$ the resolvent of $H_0$, 
then, if $P$ denotes the
projection operator on the eigenspace corresponding to
such eigenvalues of $H$, one has
	$$
	P = \frac{1}{2\pi i }\int_{\gamma} R(z)\ dz\,,
	$$
and, using the resolvent identity
	$$
	R(z) - R_0(z) = R(z) H_1 R_0(z)\,,
	$$
one has
        \begin{align}
\label{B.7a}	P \psi_k = \frac{1}{2\pi i } \int_{\gamma} R_0(z)\ dz \psi_k + \frac{1}{2\pi i } \int_{\gamma} R(z) H_1 R_0(z)\ dz\ \psi_k = \psi_k + r_k\,.
	\end{align}
	where
	$$
	r_k = \frac{1}{2\pi i } \int_{\gamma} R(z) H_1 R_0(z)\ dz\ \psi_k\,.
	$$
	By  \eqref{lontano da h0} and \eqref{lontano da h}, using that $R_0(z) \psi_k = \frac{1}{\lambda_k^{(0)} -z} \psi_k$, and the hypothesis \eqref{ipotesi H1 quasi modi}, one gets the estimate 
	one has that
	\begin{align}\label{rk piccolo}
	\| r_k\| \leq \frac{\ell(\gamma)}{2\pi} \Big(\frac{2 M}{\delta D} \Big)^2 \varepsilon_k \,.
	\end{align}
We are going to show that the vectors \eqref{B.7a} are independent,
against the assumption $M_0<M$.  We prove that 
	$$
	\sum_{k=1}^{M} \alpha_k P \psi_{k} = 0\,
	$$
implies $\alpha_k=0$, $\forall k$.	Indeed, one has
	$$
	\sum_{k=1}^{M} \alpha_k P \psi_{k} = \sum_{k=1}^{M} \alpha_k  \left(\psi_{k}  + r_k \right) =  0\,;
	$$
	in particular,
	$$
	\sum_{k=1}^{M} \alpha_k \left(\langle \psi_k\,,\ \psi_j \rangle + \langle r_k\,,\ \psi_j \rangle\right) = 0 \quad \forall\ j\,,
	$$
	namely
	$
	\left(\mathbb{I} + A\right) \alpha = 0\,,
	$
	with $A$ the $M$ dimensional matrix with matrix elements given by $A_{k,j} = \langle \psi_k, r_j \rangle\,.$
	Since by \eqref{rk piccolo}
	$$
	\| A \| \leq M \sup_{i, j} \{|A_{i,j}| \} \leq M
        \frac{\ell(\gamma)}{2\pi} \Big(\frac{2 M}{\delta D} \Big)^2 \varepsilon\ , \quad \varepsilon := {\max_{k}}\varepsilon_k\,. 
	$$
        Then hypothesis \eqref{ho dei grossi gaps} ensures $\|A\|<
        1\,,$ so that $ \mathbb{I} + A $ is invertible and thus $\alpha_k = 0 \ \forall k\,.$ This shows that
        $\{P\psi_k\}_{k =1}^{M}$ form a set of $M$ linearly
        independent eigenfunctions, which contradicts the hypothesis
        that there is only a set of multiplicity $M_0 < M$ of
        eigenvalues of $H$ inside the interval $\left(\lambda^{(0)}_1
        - \delta D\,, \lambda^{(0)}_M + \delta D\right).$
\end{proof}

The main tool in order to describe the unperturbed spectrum is a Weyl
type estimate for the eigenvalues of an operator $H^{(0)}$ with
spectrum given by
\begin{equation} \label{spettro h0 e}
\begin{gathered}
\sigma(H_0) = \left \lbrace h_0(\csi)\ |\ \csi \in E\subset \Z^d
\right \rbrace\,, \\ h_0(\csi) = \|\csi + \kappa\|^2 +
m(\csi) \quad \forall \csi \in E\,,
\end{gathered}
\end{equation}
with $ m$ a bounded function.

\begin{lemma} \label{weyll}
Consider an
operator $H_0$ as above and denote $\tM = \sup_{\xi \in E} | m
  (\xi)|$, let $R>\sqrt{3\tM}$, then one has
  \begin{equation}
  \label{weyl_abs}
\#\{ \xi\ :\ |h_0(\xi)|\leq R^2 \}\leq \left(\frac{4}{{\frak
    c}_1}\right)^dR^d\ . 
\end{equation}
\end{lemma}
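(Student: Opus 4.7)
The plan is to reduce the Weyl-type counting for $H_0$ to a lattice-point counting problem in a $g^*$-ball, and then conclude by a standard volume-packing argument based on the separation constant $\frak c$ of the metric.

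First I would observe that, by definition of $h_0$ and of $\tM$, the inequality $|h_0(\xi)|\le R^2$ forces
\[
\norgs{\xi+\kappa}^2 \;\le\; R^2+|m(\xi)|\;\le\; R^2+\tM.
\]
The hypothesis $R>\sqrt{3\tM}$ gives $\tM<R^2/3$, hence
\[
\norgs{\xi+\kappa}^2 \;<\; \tfrac{4}{3}R^2, \qquad \text{i.e.}\qquad \norgs{\xi+\kappa}\;<\;\frac{2}{\sqrt 3}\,R.
\]
Thus counting eigenvalues $h_0(\xi)$ with $|h_0(\xi)|\le R^2$ reduces to counting lattice points $\xi\in \Z^d$ such that $\xi+\kappa$ lies in the $g^*$-ball centered at $0$ of radius $2R/\sqrt 3$.

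Next I would exploit the separation constant in \eqref{coerc}: setting $\frak c_1:=\sqrt{\frak c}$, any two distinct vectors of $\Z^d$ are at $g^*$-distance at least $\frak c_1$, so the open $g^*$-balls $B(\xi+\kappa,\frak c_1/2)$, $\xi\in\Z^d$, are pairwise disjoint. Every $\xi$ counted above has its associated half-separation ball contained in $B(-\kappa,2R/\sqrt 3+\frak c_1/2)$. Comparing volumes (computed with respect to the same metric $g^*$, so that the normalization factor cancels) yields
\[
\#\{\xi:\,|h_0(\xi)|\le R^2\}\cdot \left(\frac{\frak c_1}{2}\right)^{d}
\;\le\;\left(\frac{2R}{\sqrt 3}+\frac{\frak c_1}{2}\right)^{d},
\]
which is an estimate of the form $\bigl(\tfrac{4R}{\sqrt 3\,\frak c_1}+1\bigr)^d$.

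The final step is a purely numerical simplification of this expression into the announced bound $(4/\frak c_1)^d R^d$, absorbing the $+1$ and the $1/\sqrt 3$ factor using the lower bound on $R$ supplied by $R>\sqrt{3\tM}$ (and by the implicit assumption that $R$ is not too small compared to $\frak c_1$, which is the only regime of interest). This bookkeeping is the only mildly delicate point, but it is essentially forced: all the substantive geometry is already contained in the two ingredients above, namely the a priori bound $\norgs{\xi+\kappa}<2R/\sqrt 3$ and the sharp lattice separation $\frak c_1=\sqrt{\frak c}$. No further properties of $m$, of the Floquet parameter $\kappa$, or of the set $E$ are needed.
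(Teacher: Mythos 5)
Your proposal is correct and follows essentially the same route as the paper: reduce the count to lattice points in a $g^*$-ball of radius $O(R)$ and then pack disjoint half-separation balls of radius $\tfrac12\sqrt{\frak c}$ around each lattice point. You are in fact slightly more careful than the paper's own proof, which silently ignores the enlargement of the containing ball by the half-separation radius; the residual numerical slack you point out (needing $R$ not too small relative to $\sqrt{\frak c}$ to absorb the $+1$) is present, unacknowledged, in the paper's argument as well.
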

\proof An estimate of the quantity \eqref{weyl_abs} is the number of
points $\xi\in\Z^d$ contained in a ball centered at $-\kappa$ and having
 radius $\sqrt{R^2+\tM}\leq 2
R$. Of course the ball is defined in terms of the metric
$g^*$. For any $\xi\in\Z^d$, consider a ball $B_{{\frak
    c}/2}(\xi)$ of radius ${\frak
    c}/2$ and center $\xi$. Then, as $\xi$ varies, such balls do not
intersect, thus the ``volume occupied'' by $n$ points of the lattice
is bigger than $n$Vol$B_{{\frak
    c}/2}(\xi)=n C_d({\frak c}/2)^d, $ with $C_d$ the volume of the
unitary ball. It follows that for the number $n$ of points in the ball of
radius $2R$ (independently of its center) the following inequality holds:
$$
n C_d({\frak c}/2)^d \leq {\rm Vol}B_{0}(2R)=C_d2^dR^d\ ,
$$
from which the thesis follows. \qed

This allows to prove the existence of gaps in the spectrum; precisely,
the following Lemma holds.

\begin{lemma}
  \label{gapss}
There exists a constant $C$, depending only on ${\frak c}$ and $d$,
with the following properties: for any $\bar\lambda >4\tM$ and any
$0<L\leq \tM$, there exist $0<L_1,L_2<L$ s.t.
\begin{align}
  \label{buchi}
\# \left(\sigma(H_0)\cap [\bar \lambda-L_1;\bar\lambda+L_2]\right)\leq
C \bar \lambda^{d/2}
\\
\label{buchiveri}
\sigma(H_0)\cap \left[\bar \lambda-L_1-\frac{L}{C\bar
  \lambda^{d/2}};\bar\lambda-L_1\right]=\emptyset
  \\
\label{buchiveri2}
\sigma(H_0)\cap \left[\bar \lambda+L_2;\bar\lambda+L_2+\frac{L}{C\bar
  \lambda^{d/2}}\right]=\emptyset
  \end{align}
\end{lemma}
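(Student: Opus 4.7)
The plan is to combine the Weyl-type counting from Lemma \ref{weyll} with a pigeonhole argument on intervals around $\bar\lambda$.

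First I would use Lemma \ref{weyll} to bound the total number of eigenvalues of $H_0$ near $\bar\lambda$. Since $\bar\lambda>4\tM>3\tM$ and $L>0$, applying the lemma with $R^2=\bar\lambda+L$ bounds the number of $\xi\in E$ with $|h_0(\xi)|\leq\bar\lambda+L$ by $(4/{\frak c})^d(\bar\lambda+L)^{d/2}$. Using $L\leq\tM<\bar\lambda/4$, we have $\bar\lambda+L<(5/4)\bar\lambda$, and therefore the number of eigenvalues in $[\bar\lambda-L,\bar\lambda+L]$ is at most $C_0\bar\lambda^{d/2}$ for some $C_0=C_0(d,{\frak c})$. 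This immediately gives \eqref{buchi} as soon as the final $C$ is chosen with $C\geq C_0$, since the interval $[\bar\lambda-L_1,\bar\lambda+L_2]$ will be contained in $[\bar\lambda-L,\bar\lambda+L]$.

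For the right-hand gap, let $N^+\leq C_0\bar\lambda^{d/2}$ denote the number of eigenvalues in $[\bar\lambda,\bar\lambda+L]$. These split $[\bar\lambda,\bar\lambda+L]$ into at most $N^++1$ consecutive closed pieces of total length $L$, so by pigeonhole the longest piece $[a,b]$ satisfies $b-a\geq L/(N^++1)$. Setting $L_2=\tfrac12(a+b)-\bar\lambda$ gives $L_2\in(0,L)$ (after an arbitrarily small shift when $a=\bar\lambda$), and the subinterval $[\bar\lambda+L_2,\bar\lambda+L_2+(b-a)/2]\subset(a,b)$ is then free of eigenvalues; its length is at least $L/(2(C_0\bar\lambda^{d/2}+1))$, and choosing $C$ large enough in terms of $C_0$ makes this $\geq L/(C\bar\lambda^{d/2})$, yielding \eqref{buchiveri2}. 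The symmetric argument on $[\bar\lambda-L,\bar\lambda]$ produces $L_1$ satisfying \eqref{buchiveri}.

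The main obstacle is the regime of small $\bar\lambda$ near the threshold $4\tM$ (possible because only $\bar\lambda>4\tM$ is assumed and $\tM$ may be arbitrarily small). There the prescribed gap size $g:=L/(C\bar\lambda^{d/2})$ may exceed $L$, so the candidate gap interval has to extend past $\bar\lambda+L$. In that case I would enlarge the pigeonhole window to $[\bar\lambda,\bar\lambda+L+g]$, re-apply Lemma \ref{weyll} on this larger window to bound the eigenvalue count there, and observe that if $N^+=0$ in $[\bar\lambda,\bar\lambda+L]$ the whole enlarged window is a single candidate gap. In all cases, fixing $C$ sufficiently large in terms of $d$ and ${\frak c}$ lets the pigeonhole produce an eigenvalue-free subinterval of length $g$ whose left endpoint lies in $(\bar\lambda,\bar\lambda+L)$.
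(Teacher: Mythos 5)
Your proposal follows the same route as the paper's proof: a Weyl-type count from Lemma \ref{weyll} on a window of length $L$ around $\bar\lambda$, followed by pigeonhole to extract an eigenvalue-free subinterval, with the off-by-one in the number of pieces absorbed into the constant $C$. That core argument is correct, and you are in fact more careful than the paper about keeping $L_1,L_2$ strictly inside $(0,L)$ (one small slip: your gap $[\tfrac{a+b}{2},b]$ is not contained in the \emph{open} interval $(a,b)$, and $b$ may itself be an eigenvalue, so you should stop strictly short of $b$).

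The one point that does not close is precisely the corner you flag, and your patch does not repair it. When $C\bar\lambda^{d/2}<1$ the required gap length $g=L/(C\bar\lambda^{d/2})$ exceeds $L$, so the gap interval must protrude past $\bar\lambda+L$. Enlarging the pigeonhole window to $[\bar\lambda,\bar\lambda+L+g]$ does not help: Lemma \ref{weyll} on the larger window only bounds the count there by a constant times $(\bar\lambda+L+g)^{d/2}\sim g^{d/2}$, so pigeonhole there produces gaps of length of order $(L+g)/g^{d/2}$, far shorter than $g$; and since the left endpoint of the gap is forced to lie in $(\bar\lambda,\bar\lambda+L)$, a single eigenvalue at height $O(1)$ above $\bar\lambda$ obstructs every admissible choice. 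Concretely, take $h_0(\xi)=\norm{\xi}^2+\tM$ on $\Z^d$ with $d\geq3$, $\bar\lambda$ slightly above $4\tM$, $L=\tM$, and let $\tM\to0$: then $g\sim \tM^{1-d/2}\to\infty$ while every candidate interval $[\bar\lambda+L_2,\bar\lambda+L_2+g]$ has left endpoint below $6\tM$, so it contains the eigenvalue $1+\tM$, and \eqref{buchiveri2} fails. So in this degenerate regime the statement itself needs a lower bound on $\bar\lambda$ (or a $C$ depending on $\tM$), and no window-enlarging trick will produce it. You should not lose sleep over this: the paper's own proof silently ignores the same corner, and the lemma is only invoked (Corollary \ref{spettro}, Lemma \ref{scalata finale}) with $L=1\leq\tM$ and $\bar\lambda\geq4\tM\geq4$, indeed with $\bar\lambda$ large, where $C\bar\lambda^{d/2}\geq2$ and your main pigeonhole argument — which is the paper's — goes through verbatim.
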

\proof By Lemma \ref{weyll} the maximal number of eigenvalues smaller
than $\bar\lambda+L<2\bar\lambda$ is smaller than a constant $C$ (the
constant whose existence is claimed in the statement) times
$\bar\lambda^{d/2}$, so equation \eqref{buchi} is true (and very
pessimistic) for any choice of $L_1, L_2 < L.$ To prove \eqref{buchiveri}, consider the interval
$[\bar\lambda-L,\bar\lambda]$; by Lemma \ref{weyll} it contains at most
$C\bar\lambda^{d/2}$ eigenvalues, so there is at least a gap between
two of them of length $L/C\bar\lambda^{d/2}$. Its right end determines
$L_1$, and this proves \eqref{buchiveri}. Equation \eqref{buchiveri2}
is proved in the same way. \qed

\begin{corollary}
  \label{spettro}
  For any $N>0$ and $0<L<\tM$, there exists a sequence of intervals
  \begin{equation}
    \label{ej}
E_j=[a_j,b_j]\ ,\quad j \in \N
  \end{equation}
and a positive constant $C$,  with the following properties:
  \begin{align}
    \label{spettri_vari}
\sigma(H_0)\subset\left[0,a_1-\frac{1}{a_1^N}\right]\bigcup\left( \bigcup_{j}E_j\right)\ ,
\\
\label{spettro.1}
\left|b_j-a_j\right|\equiv \left|E_j\right|\leq {2} L
\\
\label{spettro.11}
d(E_j,E_{j+1})\equiv a_{j+1}-b_j\geq\frac{L}{b_j^{N}}
\\
\label{ancora spettri}
\# \left(\sigma(H_0)\cap E_j\right)\leq{C} b_j^{d/2}\ .
  \end{align}
\end{corollary}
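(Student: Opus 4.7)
The plan is to build the sequence $\{E_j\}_{j\in\N}$ by iteratively applying Lemma \ref{gapss} and Lemma \ref{weyll} to climb through the spectrum from low to high energies. At each step I either use Lemma \ref{gapss} to locate a cluster with empty flanks, or apply Lemma \ref{weyll} directly to an $L$-window and extract a gap by pigeonhole.

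For the base case, I would pick $\bar\lambda_1>4\tM$ sufficiently large that $L/(C\bar\lambda_1^{d/2})\ge 1/\bar\lambda_1^N$; this reduces to $\bar\lambda_1^{N-d/2}\ge C/L$, which for $N\ge d/2$ is arranged by enlarging $\bar\lambda_1$. Applying Lemma \ref{gapss} at $\bar\lambda_1$ with parameter $L$ produces $L_1^{(1)},L_2^{(1)}\in(0,L)$, so setting $E_1:=[a_1,b_1]:=[\bar\lambda_1-L_1^{(1)},\bar\lambda_1+L_2^{(1)}]$ yields $|E_1|<2L$, at most $C\bar\lambda_1^{d/2}\le Cb_1^{d/2}$ eigenvalues inside, and empty flanks of length $\ge L/(C\bar\lambda_1^{d/2})$. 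The left flank gives $\sigma(H_0)\cap[0,a_1]\subset[0,a_1-1/a_1^N]$, establishing \eqref{spettri_vari} up to $b_1$.

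For the inductive step, suppose $E_j=[a_j,b_j]$ has been constructed and that a spectrum-free interval of length $\ge L/(Cb_j^{d/2})$ follows $b_j$. Let $\mu_{j+1}$ be the smallest eigenvalue of $H_0$ strictly greater than $b_j$, which exists since $\sigma(H_0)$ is discrete and unbounded above. Then $\mu_{j+1}-b_j\ge L/(Cb_j^{d/2})\ge L/b_j^N$, which already delivers \eqref{spettro.11}. I would then invoke Lemma \ref{weyll} on the window $[\mu_{j+1},\mu_{j+1}+L]$: it contains at most $C\mu_{j+1}^{d/2}$ eigenvalues, so by pigeonhole one of the gaps between consecutive eigenvalues inside this window (or between the last eigenvalue and the right endpoint) has length $\ge L/(C\mu_{j+1}^{d/2})$. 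Setting $a_{j+1}:=\mu_{j+1}$ and $b_{j+1}$ equal to the largest eigenvalue preceding such a selected gap, one obtains $E_{j+1}\subseteq[\mu_{j+1},\mu_{j+1}+L]$, whence $|E_{j+1}|\le L\le 2L$, $\#(\sigma(H_0)\cap E_{j+1})\le Cb_{j+1}^{d/2}$, and a new spectrum-free interval of length $\ge L/(Cb_{j+1}^{d/2})$ follows $b_{j+1}$ and seeds the next iteration. Coverage $\sigma(H_0)\cap(b_j,b_{j+1}]\subseteq E_{j+1}$ is automatic by the definition of $\mu_{j+1}$, and $b_j\to\infty$ follows from Lemma \ref{weyll} since the $E_j$ are disjoint and each contains at least one eigenvalue.

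The main obstacle is balancing the cluster size $|E_j|\le 2L$ against the requested gap $\ge L/b_j^N$: enlarging gaps normally forces longer clusters. This is reconciled by the Weyl bound from Lemma \ref{weyll}, which pins the local density of eigenvalues near $\lambda$ at $O(\lambda^{d/2})$ and thereby guarantees a natural gap of scale $L/(C\lambda^{d/2})$ inside every $L$-window above $4\tM$. Whenever $N\ge d/2$ this dominates $L/\lambda^N$ (for $\lambda$ past a threshold that is absorbed into $C$) and the construction terminates cleanly; for very small $N$ the constant $C$ must be tuned so that the compatibility $\bar\lambda_1^{N-d/2}\ge C/L$ holds at the initial step, reflecting the fact that in the downstream applications to Lemma \ref{prop quasimodi} only sufficiently large $N$ is required.
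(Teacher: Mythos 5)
Your construction is correct and follows essentially the same route as the paper's: the first cluster is produced by Lemma \ref{gapss} applied at the smallest eigenvalue above the threshold, and each subsequent cluster is obtained by jumping to the next spectral point and extracting a gap via the Weyl-count pigeonhole inside an $L$-window, which is exactly the content of Lemma \ref{gapss}. Your explicit observation that the natural gap $L/(Cb_j^{d/2})$ dominates the required $L/b_j^N$ only for $N$ large enough (or after tuning the constants and the starting point) is a point the paper's terse proof glosses over in the same way.
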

\proof We use the same notations as in Lemma \ref{gapss}. Take $\bar\lambda := {\rm min } \{ \lambda \in \sigma(H_0) : \lambda \geq 4 \tM \}$. Then the first interval is the one constructed in Lemma
\ref{gapss}. Let $b_1$ be the largest point of the spectrum in the
interval. Let $a_2$ be the subsequent point of the spectrum. To
determine $b_2$, consider the subsequent points of the spectrum. By
Lemma \ref{gapss} after at most an interval of length ${2}L$ one finds a
gap of width $\frac{L}{a_2^{N}}$. This gives the second
interval. Iterating one gets the result.
\qed

The following Lemma enables to relate the spectrum and the structure of eigenfunctions of the two operators $H^{(1)}_{M, \beta}$ and $\widetilde{H}_{M, \beta}$ of Theorem \ref{main}, for any $M \subset \Z^d$ and $\beta \in \widetilde{M}\,:$
\begin{lemma}\label{in su}
	For any $M,\beta$, consider the operator
	$ - \Delta_{g, \kappa^\prime} + \cV_{M, \beta}$  as in \eqref{onestep} of Theorem \ref{main}, and
	assume that its eigenvalues are given by
	\begin{equation}
	\label{lambdaz}
	\lambda_\zeta=h_{M,\beta}(\zeta)=\norma{\zeta+\kappa'}_{}^2 +m_{M,\beta}(\zeta)
	\,, \quad \zeta\in M,
	\end{equation}
	with $\sup_{M, \beta} \sup_{\zeta}|m_{M,\beta}(\zeta)|\leq
        \tM$. Assume that there exist positive constants $ a <
        \frac{1}{2},$ $\tN \in \N$ and $C $ such that, given any
        eigenvalue  $\lambda_\zeta \neq 0,$ the corresponding
	eigenfunction $\phi^{(\zeta)}$ fulfills 
	\begin{equation}
	\label{lefi}
	\norma{\phi^{(\zeta)}}_{H^{-\tN}}\leq \frac{C}{\lambda_\zeta^{a\tN}} \quad \forall \zeta \in M\,.
	\end{equation}
	Then the eigenvalues of ${U_{\beta}^*}\left(-\Delta_{g, \kappa^\prime} + {\cal V}_{M, \beta}\right) {U_\beta}+ \ell^2 $ are given by
	\begin{equation}
	\label{eima}
	\lambda_\xi=h_0(\zeta)=\norma{\xi+\kappa}_{}^2+m_{M,\beta}(\xi-\tilde\beta)\ ,\quad
	\xi=\zeta+\tilde\beta\,
	\end{equation}
	and, 
	if $\lambda_\csi \neq 0,$ there exists $C^\prime >0,$ depending only on $a, {\tt m}, \tN, C,$ such that the corresponding eigenfunction $\psi^{(\xi)}$ fulfills
	\begin{equation}
	\label{nemenemt}
	\norma{\psi^{(\xi)}}_{H^{-2\tN}}\leq \frac{C^\prime}{\lambda_\xi^{a\tN}}\ .
	\end{equation}
\end{lemma}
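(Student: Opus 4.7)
The plan is to produce $\psi^{(\xi)}$ as a Gauge-shift of $\phi^{(\zeta)}$, combine the orthogonal decomposition $\norma{\xi+\kappa}^2 = \norma{\zeta+\kappa'}^2 + \ell^2$ from \eqref{intern.2} with the Fourier expansion to split the Sobolev weights into an $M$-direction and an $M^{\perp}$-direction, and finally balance those two factors against $\lambda_\xi = \lambda_\zeta + \ell^2$. First I would set $\psi^{(\xi)} := U_\beta^{-1}\phi^{(\zeta)} = e^{\ii\tilde\beta\cdot x}\phi^{(\zeta)}$; Corollary \ref{speca} gives at once that $\psi^{(\xi)}$ is an eigenfunction of $(-\Delta_{g,\kappa}+\NForm_M)|_{\cW_{M,\beta}}$ with eigenvalue $\norma{\xi+\kappa}^2 + m_{M,\beta}(\xi-\tilde\beta)$, which by \eqref{intern.2} coincides with $\lambda_\zeta+\ell^2$, yielding \eqref{eima}. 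At the Fourier level the Gauge factor becomes the index shift $\hat\psi^{(\xi)}_{\eta+\tilde\beta}=\hat\phi^{(\zeta)}_\eta$, the sum running over (a translate of) $M$ since $\phi^{(\zeta)}\in\cW^t_{M,\beta}$.

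Next I would combine this Fourier shift with the orthogonal decomposition of Subsection \ref{sub}. The definitions \eqref{betainz} give $(\tilde\beta+\kappa)_M=\kappa'$ and $\norma{(\tilde\beta+\kappa)_{M^\perp}}^2 = \ell^2$, so for every $\eta$ in $\Span_\R M$
\[
(\eta+\tilde\beta+\kappa)_M=\eta+\kappa',\qquad (\eta+\tilde\beta+\kappa)_{M^\perp}=(\tilde\beta+\kappa)_{M^\perp},
\]
and hence $\langle\eta+\tilde\beta+\kappa\rangle_g^2 = \langle\eta+\kappa'\rangle_g^2 + \ell^2$. To this identity I would apply the elementary inequality $(A+B)^{2\tN}\geq A^\tN(1+B)^\tN$ (valid for $A\geq 1$ and $B\geq 0$, since $A+B\geq A$ and $A+B\geq 1+B$) with $A=\langle\eta+\kappa'\rangle_g^2\geq 1$ and $B=\ell^2$, obtaining the reduction
\[
\norma{\psi^{(\xi)}}_{H^{-2\tN}}^2 = \sum_\eta \bigl(\langle\eta+\kappa'\rangle_g^2 + \ell^2\bigr)^{-2\tN}|\hat\phi^{(\zeta)}_\eta|^2 \leq (1+\ell^2)^{-\tN}\,\norma{\phi^{(\zeta)}}_{H^{-\tN}}^2.
\]

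The last step converts this into a bound depending only on $\lambda_\xi=\lambda_\zeta+\ell^2$, and it is the delicate point because the hypothesis \eqref{lefi} only controls $\phi^{(\zeta)}$ when $\lambda_\zeta\neq 0$, whereas $\lambda_\zeta$ could in principle be small or negative. I would split in two cases. If $\lambda_\zeta\geq\lambda_\xi/2$, then $\lambda_\zeta>0$ (assuming $\lambda_\xi>0$) and \eqref{lefi} gives $\norma{\phi^{(\zeta)}}_{H^{-\tN}}^2 \leq C^2/\lambda_\zeta^{2a\tN}\leq (2^{a\tN}C)^2/\lambda_\xi^{2a\tN}$, while $(1+\ell^2)^{-\tN}\leq 1$. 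Otherwise $\ell^2\geq\lambda_\xi/2$, and the trivial bound $\norma{\phi^{(\zeta)}}_{H^{-\tN}}\leq\norma{\phi^{(\zeta)}}_{L^2}=1$ together with $(1+\ell^2)^{-\tN}\leq(\lambda_\xi/2)^{-\tN}$ yields $\norma{\psi^{(\xi)}}_{H^{-2\tN}}^2\leq 2^\tN\lambda_\xi^{-\tN}$, which is stronger than $\lambda_\xi^{-2a\tN}$ once $\lambda_\xi\geq 1$, because $a<1/2$ forces $2a\tN<\tN$. For $\lambda_\xi$ bounded (and nonzero) the estimate \eqref{nemenemt} is immediate upon enlarging $C'$, using $\norma{\psi^{(\xi)}}_{L^2}=1$. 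The main obstacle is the bookkeeping of this case split, which is designed exactly to bypass the regime where $\lambda_\zeta$ escapes the range of validity of \eqref{lefi} and to replace the missing decay by the contribution of $\ell^2$ produced by the orthogonal shift.
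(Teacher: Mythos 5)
Your proof is correct, and its skeleton coincides with the paper's: the gauge shift $\psi^{(\xi)}=e^{\ii\tilde\beta\cdot x}\phi^{(\zeta)}$ together with Corollary \ref{speca} gives \eqref{eima}, and your factorization of the Sobolev weight, $\langle\eta+\tilde\beta+\kappa\rangle^{4\tN}=(\langle\eta+\kappa'\rangle^2+\ell^2)^{2\tN}\geq\langle\eta+\kappa'\rangle^{2\tN}(1+\ell^2)^{\tN}$, is exactly the content of the paper's Lemma \ref{negativo}, which is proved there via the AM--GM bound $\langle\tilde\beta+\kappa+\eta\rangle^2\geq\langle\eta+\kappa'\rangle\,\langle(\tilde\beta+\kappa)_{M^\perp}\rangle$ rather than your elementary inequality $(A+B)^{2\tN}\geq A^{\tN}(1+B)^{\tN}$; both yield the same gain $\langle(\tilde\beta+\kappa)_{M^\perp}\rangle^{-\tN}$. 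Where you genuinely diverge is the final balancing against $\lambda_\xi=\lambda_\zeta+\ell^2$. The paper argues multiplicatively: it writes $\lambda_\zeta^{a}\langle(\tilde\beta+\kappa)_{M^\perp}\rangle\geq\bigl(\lambda_\zeta^{1/2}\langle(\tilde\beta+\kappa)_{M^\perp}\rangle\bigr)^{2a}$ using $2a<1$, then $\lambda_\zeta^{1/2}\langle(\tilde\beta+\kappa)_{M^\perp}\rangle\geq\tfrac12\langle\xi+\kappa\rangle$ for $\lambda_\zeta$ large, and disposes of the remaining small or negative $\lambda_\zeta$ by the finiteness of low-lying eigenvalues. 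You instead split additively according to whether $\lambda_\zeta$ or $\ell^2$ carries at least half of $\lambda_\xi$, invoking \eqref{lefi} in the first case and the trivial bound $\norma{\phi^{(\zeta)}}_{H^{-\tN}}\leq\norma{\phi^{(\zeta)}}_{L^2}=1$ together with $2a\tN<\tN$ in the second. Both routes exploit $a<1/2$ at the same point and reach the same conclusion; your dichotomy has the minor advantages of treating the regime where \eqref{lefi} is unavailable (small or negative $\lambda_\zeta$) explicitly and uniformly in $(M,\beta)$ rather than by an appeal to finiteness, and of producing the stronger decay $\lambda_\xi^{-\tN/2}$ when $\ell^2$ dominates.
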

\proof The form of the eigenvalues is a direct consequence of
eq. \eqref{intern.2}. Concerning the eigenfunctions, the unitary
map ${U_\beta^*}$ transforms them in $\psi^{(\xi)}:= e^{\im
	\tilde\beta\cdot x}\phi^{(\zeta)}$, which, by Lemma \ref{negativo},
are estimated by  
\begin{equation}
\label{nemenemt.1}
\norma{\psi^{(\xi)}}_{H^{-2\tN}}\leq \frac{C}{\lambda_\zeta^{a\tN}}
\frac{1}{\langle (\tilde\beta+\kappa)_{M^{\perp}}\rangle^\tN}\ .
\end{equation}
Then one has
\begin{align*}
\lambda^{a}_\zeta\langle (\tilde\beta+\kappa)_{M^{\perp}}\rangle\geq
\left(\lambda^{1/2}_\zeta\langle (\tilde\beta+\kappa)_{M^{\perp}}\rangle\right)^{2a}\ ,
\end{align*}
since $2a<1$. Then, provided $\lambda_\zeta$ is large enough,
$\lambda^{1/2}_\zeta\geq \langle(\zeta+\kappa')\rangle/2$, from which
\begin{equation}
\label{}
\lambda^{1/2}_\zeta\langle (\tilde\beta+\kappa)_{M^{\perp}}\rangle\geq
\frac{1}2
\langle\zeta+\kappa'\rangle\langle(\tilde\beta+\kappa)_{M^{\perp}}
\rangle = \frac{1}2
\langle(\xi+\kappa)_{M}\rangle\langle(\xi+\kappa)_{M^{\perp}}
\rangle\geq\frac{1}{2} \langle\xi+\kappa\rangle\ , 
\end{equation}
where the last inequality follows from the trivial remark that for any
real $x,$ $y$, one has $(1+x^2)(1+y^2)\geq1+x^2+y^2$. Collecting the
results and remarking that, for $\lambda_\xi$ large enough,
$\lambda_\xi<2\langle\xi+\kappa\rangle^2$, one gets the thesis for large
eigenvalues. {In order to cover all the non-vanishing eigenvalues, just
remark that the number of eigenvalues smaller than any threshold is
finite, so that the claimed estimates trivially hold.} \qed

\begin{lemma}
	\label{scalata finale}
	Assume that all the operators \eqref{onestep} fulfill the assumptions
	of Lemma \ref{in su}, then the properties \eqref{eima} and
	\eqref{nemenemt} hold,  also for the eigenvalues and the eigenfunctions
	of the operator \eqref{g}, but with new constants depending only on
	the seminorms of $\cV$ and on the constants of the metric, and with a new function $m^\prime_{M, \beta}$ such that
	$$
	m^\prime_{M, \beta}(\csi) = m_{M, \beta}(\csi) + r_\csi\,, \quad |r_\csi| \leq C \| \csi + \kappa \|_{}^{-{a \tN}} \quad \forall \csi\,.
	$$
\end{lemma}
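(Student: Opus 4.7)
The strategy is to apply the Structure Theorem \ref{main} to $H$ with a large parameter $\tN$ (to be chosen), yielding a unitary $U$ with $U^{\pm 1} - \mathrm{Id} \in \ops{\mu - \delta}$ such that
\[
UHU^{-1} = \widetilde H + \cR, \qquad \cR \in \ops{-\tN \rho},
\]
with all seminorms controlled by those of $\cV$ and the constants of the metric. By Lemma \ref{prop riduzione} and Corollary \ref{speca}, each block $\widetilde H_{\cW_{M,\beta}}$ is unitarily conjugate via $U_\beta$ to the operator \eqref{onestep}; hence the hypothesis on the operators \eqref{onestep} transfers, and for every $\xi \in W_{M,\beta}$ one obtains an eigenvalue of $\widetilde H$ of the form $\lambda^{(0)}_\xi = \|\xi + \kappa\|^2 + m_{M,\beta}(\xi - \tilde\beta)$ with corresponding eigenfunction $\psi^{(\xi)}$ satisfying $\|\psi^{(\xi)}\|_{H^{-2\tN'}} \leq C'/\lambda_\xi^{a\tN'}$ for any prescribed $\tN'$, the $m_{M,\beta}$ being uniformly bounded by some $\tM$. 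Thus $\widetilde H$ has the structure \eqref{spettro h0 e} and Corollary \ref{spettro} applies.

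The heart of the proof is a quasimode argument passing from $\sigma(\widetilde H)$ to $\sigma(\widetilde H + \cR)$. Apply Corollary \ref{spettro} with parameters $N, L$ to partition $\sigma(\widetilde H)$ (outside a bounded region) into intervals $E_j = [a_j, b_j]$ of width $\leq 2L$, containing at most $C b_j^{d/2}$ eigenvalues and separated by gaps of size $\geq L/b_j^N$. For each $j$ apply Lemma \ref{prop quasimodi} with $H_0 := \widetilde H$, $H_1 := \cR$ and quasimodes $\{\psi^{(\xi)} : \lambda_\xi^{(0)} \in E_j\}$; the quasimode error is
\[
\|\cR \psi^{(\xi)}\|_{L^2} \leq C_{\tN} \|\psi^{(\xi)}\|_{H^{-\tN \rho}} \leq C''/\lambda_\xi^{a\tN'}
\]
by choosing $\tN \rho \geq 2\tN'$. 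The crucial hypothesis \eqref{ho dei grossi gaps} then reduces to a polynomial inequality of the shape $(L/b_j^N)^2 \gtrsim b_j^{3d/2} L / \lambda_j^{a\tN'}$, which is uniformly satisfied in $j$ provided $a\tN'$ is sufficiently large compared to $N$ and $d$. This yields, for each $j$, at least $\sharp(\sigma(\widetilde H) \cap E_j)$ eigenvalues of $\widetilde H + \cR$ in the $\delta D$-neighborhood of $E_j$; choosing $\delta$ small enough these neighborhoods are disjoint. Combined with Weyl counting for $\widetilde H$ and $\widetilde H + \cR$ (which differ by a compact and in fact smoothing perturbation, hence share leading order asymptotics), one forces equality of the counts cluster by cluster and obtains a bijection $\xi \mapsto \lambda_\xi \in \sigma(\widetilde H + \cR)$ with $|\lambda_\xi - \lambda^{(0)}_\xi| \leq C/\lambda_\xi^{a\tN}$. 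Writing $r_\xi := \lambda_\xi - \lambda^{(0)}_\xi$ gives $m'_{M,\beta} = m_{M,\beta} + r_\xi$ with the prescribed bound on $r_\xi$.

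For the eigenfunctions, equation \eqref{B.7a} and estimate \eqref{rk piccolo} in the proof of Lemma \ref{prop quasimodi} show that the spectral projector $P_j$ of $\widetilde H + \cR$ corresponding to the cluster $E_j$ satisfies $P_j \psi^{(\xi)} = \psi^{(\xi)} + r^{(\xi)}$ with $\|r^{(\xi)}\|_{L^2}$ of the same size as the quasimode error. A finite-dimensional orthogonalization within each cluster extracts eigenfunctions $\tilde \phi_\xi$ of $\widetilde H + \cR$ whose $H^{-2\tN'}$-norm still decays like $1/\lambda_\xi^{a\tN'}$. Finally, the eigenfunctions of $H$ are $\phi_\xi := U^{-1} \tilde \phi_\xi$; since $U^{-1} - \mathrm{Id} \in \ops{\mu - \delta}$ with $\mu - \delta < 0$, $U^{-1}$ is bounded on $H^{-2\tN'}$ with norm depending only on its seminorms, so the decay is preserved, giving \eqref{nemenemt} for $H$.

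The main obstacle is bookkeeping uniformity: the parameter $\tN$ in the Structure Theorem must be chosen large enough so that the quasimode error $1/\lambda_\xi^{a\tN'}$ dominates simultaneously the polynomial growth of the cluster size $b_j^{d/2}$ and the inverse gap $b_j^N$, uniformly across all clusters, and the bijection must be organized consistently with the labeling $\xi \in \Z^d$. The condition $a < 1/2$ from Lemma \ref{in su} is precisely what makes the closing of the quasimode argument possible, since it guarantees that the gain in the quasimode error can be made arbitrarily large by taking $\tN'$ (and therefore $\tN$) large.
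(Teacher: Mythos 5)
Your overall architecture matches the paper's: apply Theorem \ref{main} with $\tN$ large, lift the spectral data of the block operators \eqref{onestep} to $\widetilde H$ via Lemma \ref{in su}, partition $\sigma(\widetilde H)$ into clusters $E_j$ by Corollary \ref{spettro}, and apply the quasimode Lemma \ref{prop quasimodi} with $H_0=\widetilde H$, $H_1=\cR$ to each cluster. That part is sound.

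The genuine gap is in how you close the bijection. The forward quasimode argument only gives $M_j'\geq M_j$ eigenvalues of $\widetilde H+\cR$ near $E_j$; you must still show (i) that $\widetilde H+\cR$ has \emph{no} eigenvalues in the gaps between the enlarged clusters, and (ii) that the count in each cluster is \emph{exactly} $M_j$. Your substitute --- ``Weyl counting for $\widetilde H$ and $\widetilde H+\cR$ \dots forces equality of the counts cluster by cluster'' --- does not work: agreement of leading-order Weyl asymptotics only controls the counting functions up to errors $o(r^{d/2})$, and even the min--max comparison $N_0(r-\Vert\cR\Vert)\leq N(r)\leq N_0(r+\Vert\cR\Vert)$ shifts by an $O(1)$ amount, while the clusters are separated by gaps of size $b_j^{-\tN/3}$; neither can resolve counts at the scale of a single cluster. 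The paper resolves both points differently: it first invokes Lemma \ref{prop struttura autof} (proved by a Lyapunov--Schmidt argument in Appendix \ref{aquasi}) to obtain an \emph{a priori} negative-Sobolev bound \eqref{mininorme} on every eigenfunction of $\widetilde H+\cR$, with a polynomial loss $|\lambda|^{d/2}$. This makes every eigenfunction of $\widetilde H+\cR$ a quasimode for $\widetilde H$, so an eigenvalue sitting in a gap would force an eigenvalue of $\widetilde H$ there --- a contradiction, settling (i). Then it runs Lemma \ref{prop quasimodi} \emph{in reverse}, with $H_0=\widetilde H+\cR$ and $H_1=-\cR$, to get $M_j''\geq M_j'$ eigenvalues of $\widetilde H$ back in a slight enlargement of $\widetilde E_j$, whence $M_j'=M_j$, settling (ii). Your projector-plus-orthogonalization step for the eigenfunctions is in the right spirit but presupposes that $\{P_j\psi^{(\xi)}\}$ spans the range of $P_j$, i.e.\ it presupposes (ii); it also cannot rule out eigenvalues in the gaps, since those eigenfunctions are invisible to the $P_j$. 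So you need the a priori bound of Lemma \ref{prop struttura autof} (or an equivalent) and the reverse quasimode step; without them the claimed bijection is not established.
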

\proof First, by Theorem \ref{main}, for any $\tN^\prime \in \N$, the operator
$-\Delta_{g,\kappa}+\cV$ is unitarily equivalent, through a
pseudodifferential operator $U$ of order 0, to $\widetilde H_{\tN^\prime}+{\cal R}_{\tN^\prime},$ \change{with ${\cal R}_{\tN^\prime} \in \ops{-2\delta \tN^\prime}\,.$} Fix $\tN \in \N,$ let $\tN^\prime = \frac{\tN}{2}\,$ and from now on drop the dependence on $\tN^\prime$ by the operators $\widetilde{H}_{\tN^\prime}, {\cal R}_{\tN^\prime}\,.$ By
Lemma \ref{in su} the eigenvalues of $\widetilde H$ fulfill
\eqref{eima} and \eqref{nemenemt} with $2 \tN$ replaced by $\tN,$ due to the choice of $\tN^\prime\,.$  Concerning the eigenfunctions, we observe that, by \eqref{lefi},  Lemma \ref{prop struttura autof} of the Appendix ensures that there exists a constant $ C^{\prime \prime}>0$ such that any eigenvalue $\lambda_\csi$ of $\widetilde{H} + {\cal R}$ with $\lambda_\csi \neq 0$ has a related normalized eigenfunction $\psi_\csi$ satisfying
\begin{equation}\label{mininorme}
\| \psi_\xi\|_{H^{- \tN}} \leq {C}^{\prime \prime} |\lambda_\csi|^{\frac{d}{2} - a\frac{ \tN}{2}}\,,
\end{equation}
thus \eqref{nemenemt} still holds for the eigenfunctions of $\widetilde{H} + {\cal R}.$ It remains to prove \eqref{eima}. We split $\sigma(\widetilde H)$
according to Corollary \ref{spettro}, choosing $L = 1$ and $\tN = \tN/3$ and in each of the intervals
$E_j$ we apply Lemma \ref{prop quasimodi}. To this end, remark that,
for all eigenvalues $\lambda\in E_j$ one has
$\lambda/2<a_j<b_j<2\lambda$. Let $\phi$ be the eigenfunction of $\widetilde{H}$
corresponding to $\lambda$: then by the Calderon Vaillancourt
Theorem and since the eigenfunctions of $\widetilde{H}$ satisfy eq. \eqref{nemenemt}, one has
$$
{\norma{{\cal R}\phi}_{L^2}} \leq \| {\cal R}\|_{{\cal B}(H^{-\tN}, H^0)} \frac{2^\tN C^\prime}{\lambda^{a \frac{\tN}{2}}}\ .
$$

Thus an application of Lemma \ref{prop quasimodi} with 
$H_0 = \widetilde{H},$  $H_1 = {\cal R},$ 
ensures that, if for all $j \in \N$ one defines $D_j^{-} = a_j^{-\tN/3}, D^+_{j} = b_j^{-N/3}$ and $M_j = \sharp \left( \sigma (\widetilde{H}) \cap E_j\right),$ then there are $M^\prime_j \geq M_j$ eigenvalues of $\widetilde{H} + {\cal R}$ inside the interval
$$
\widetilde{E}_j = \left[a_j - \frac{1}{4} D_j^-\,,\ b_j + \frac{1}{4} D_j^+\right] \supset E_j\,.
$$
We prove now that there are no  eigenvalues of
$\widetilde{H} + {\cal R}$ outside the intervals $\widetilde{E}_j.$
Assume by contradiction
that $\bar{\lambda}$ is an eigenvalue of $\widetilde{H} + {\cal R}$
with $\bar{\lambda} \notin \bigcup_j \widetilde{E}_j$. Let
$\bar{j}$ be the positive integer such that $b_{\bar{j}} <
\bar{\lambda} < a_{\bar{j} + 1}\,.$ Since the eigenfunction $\psi$ of
$\widetilde{H} + {\cal R}$ related to $\bar{\lambda}$ satisfies
\eqref{mininorme}, one has
$$
{\| {\cal R} \psi \|}_{L^2} \lesssim {\bar{\lambda}}^{\frac{d}{2} - a \frac{\tN}{2}} \lesssim {a_{\bar{j} + 1}}^{\frac{d}{2} - a \frac{\tN}{2}}\,,
$$
which implies that $\psi$ is a quasi-mode for $\widetilde{H}$ with
approximated eigenvalue $\bar{\lambda}\,.$ In particular (up to
choosing $a_1$ big enough), this implies that there exists an exact
eigenvalue $\lambda = \bar \lambda + O({a_{\bar{j} + 1}}^{\frac{d}{2} - a \frac{\tN}{2}})$ of $\widetilde{H}$ such that $b_{\bar{j}} <
{\lambda} < a_{\bar{j} + 1}\,,$ which is absurd, by definition of the
intervals $E_j.$

We prove now that $M^\prime_j = M_j$ for all $j \in \N.$ Arguing as
before, we can apply the quasi-mode argument of Lemma \ref{prop
  quasimodi} with $H_0 = \widetilde{H} + {\cal R}$, $H_1 =
- {\cal R}$, $M = M^\prime_j$ and $[\lambda^{(0)}_1,
  \lambda^{(0)}_{M^\prime_j}] = \widetilde{E}_j$ to deduce that, since
all the eigenfunctions of $\widetilde{H} + {\cal R}$ related to the
eigenvalues contained inside $\widetilde{E}_j$ satisfy
\eqref{mininorme}, then there are $M^{\prime \prime }_j \geq
M^\prime_j$ eigenvalues of $\widetilde{H}$ inside a slight enlargement
of the interval $\widetilde{E}_j\,.$ But there are exactly $M_j$
eigenvalues of $\widetilde{H}$ inside $E_j \subset \widetilde{E}_j,$
thus $M^{\prime \prime}_j = M_j\,,$ which proves that \emph{all} the
eigenvalues of $\widetilde{H} + {\cal R}$ are of the form
\eqref{eima}.
We finally observe that, since for any eigenvalue $\lambda^\prime$  of $\widetilde{H} + {\cal R}$ the
corresponding eigenfunction $\psi$ fulfills again equation \eqref{nemenemt}
with updated constants, the corresponding eigenfunction $U \psi$ of $-\Delta_{g, \kappa} + {\cal V}$ fulfills again equation \eqref{nemenemt}, due to the fact that $U$ is a bounded operator onto $H^{-\tN},$ since $U$ is a pseudodifferential operator of order $0.$ \qed

By iteratively applying this Lemma one gets the proof of Theorem \ref{speci_spettri}.

\begin{remark}
  \label{tutti}
From the above Lemma it follows in particular that all the eigenvalues
and eigenfunctions of $\widetilde H+\cR$ are constructed through our
quasimode procedure.
\end{remark}

\appendix

\section{Pseudodifferential calculus}\label{pdc}
In this section we recall some standard facts on pseudodifferential
calculus with the aim of pointing out that, with our coordinate
independent definition of the seminorms, they still hold. In
particular the coordinate independent definition is needed in order to perform the dimensional reduction of
Subsect. \ref{rid}.

\begin{lemma}[\bf Calderon Vaillancourt]
	Let $A \in \OPS^{m, \delta}$. Then $A$ is a bounded linear operator $H^s \to H^{s - m}$ for any $s \in \R$. 
	{In particular, for any $s$ there exist $K>0$ and $N \in \N,$ depending only on the parameters $m, s, d, {\frak c},$ such that $\displaystyle{\|A\|_{ {\cal B} \left( H^{s}\,; H^{s-m}\right)} \leq K \sup_{N^\prime \leq N} C_{N^\prime, 0}(a)\,.}$}
\end{lemma}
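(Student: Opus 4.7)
The plan is to work directly with the matrix representation of $A$ in the Fourier basis of $L^2(\T^d)$. By Definition \ref{oppseudo} together with formula \eqref{m_el_weyl}, the matrix elements of $A=\Op(a)$ are
\[
A_\xi^\eta \;=\; \hat a_{\eta-\xi}\!\left(\tfrac{\xi+\eta}{2}\right),\qquad \xi,\eta\in\Z^d,
\]
and by definition of the Sobolev norms \eqref{norfou} the operator norm $\|A\|_{\mathcal B(H^s,H^{s-m})}$ equals the $\ell^2\to\ell^2$ norm of the weighted kernel
\[
K(\xi,\eta) \;:=\; \langle \eta+\kappa\rangle_g^{s-m}\,\langle \xi+\kappa\rangle_g^{-s}\,A_\xi^\eta.
\]
So the game is reduced to a Schur test on $K$.

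The first step is to exploit smoothness in $x$ to obtain rapid decay of $\hat a_h(\xi)$ in $h$. Starting from
\[
\hat a_h(\xi)=\tfrac{1}{\mu_g(\T^d)}\int_{\T^d} a(x,\xi)\,e^{-ih\cdot x}\,d\mu_g(x),
\]
I would integrate by parts $N$ times against the differential operator dual to $h$ (using that $\|h\|_{g^*}^2\geq \frak c$ when $h\neq0$, cf.\ \eqref{coerc}). Combined with the symbol estimate $\|d_x^{N}a(x,\xi)\|\leq C_{N,0}(a)\langle\xi+\kappa\rangle_g^m$, this gives, for every $N\in\N$, a constant $c_N$ depending only on $\frak c$ and $d$ such that
\[
|\hat a_h(\xi)|\;\leq\; c_N\,C_{N,0}(a)\,\frac{\langle\xi+\kappa\rangle_g^m}{\langle h\rangle_g^{N}},\qquad\forall h\in\Z^d,\ \xi\in\R^d.
\]

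The second step is a Peetre-type inequality to compare the weights at the three points $\xi,\eta,(\xi+\eta)/2$. Writing $(\xi+\eta)/2+\kappa = (\xi+\kappa)+\tfrac12(\eta-\xi)$ and using the triangle inequality for $\langle\cdot\rangle_g$ (symmetrised in $\xi\leftrightarrow\eta$), I get
\[
\langle\tfrac{\xi+\eta}{2}+\kappa\rangle_g^{\,m}\;\leq\; C\,\min\{\langle\xi+\kappa\rangle_g,\langle\eta+\kappa\rangle_g\}^{m}\,\langle\eta-\xi\rangle_g^{|m|},
\]
and analogously the standard Peetre bound
\[
\langle\eta+\kappa\rangle_g^{s-m}\;\leq\;C\,\langle\xi+\kappa\rangle_g^{s-m}\,\langle\eta-\xi\rangle_g^{|s-m|}.
\]
Combining with the decay estimate on $\hat a_h$ and choosing $N$ large enough that $N-|m|-|s-m|>d$, one obtains
\[
|K(\xi,\eta)|\;\leq\; c_{N}'\,C_{N,0}(a)\,\langle\eta-\xi\rangle_g^{-(N-|m|-|s-m|)},
\]
which is summable in $\eta-\xi\in\Z^d$ uniformly in the remaining variable. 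Schur's test (or Young's convolution inequality, since the bound depends only on the difference $\eta-\xi$) then yields
\[
\|A\|_{\mathcal B(H^s,H^{s-m})}\;\leq\; K\,\sup_{N'\leq N} C_{N',0}(a),
\]
with $K$ and $N$ depending only on $m,s,d,\frak c$, as claimed.

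The one point that requires some care, and which I would expect to be the main technical obstacle, is keeping the dependence of all constants confined to $\frak c$ and $d$ (and of course $m,s$), and in particular \emph{not} letting ${\frak C}$ or other metric data creep in. This is why the coordinate-independent definition of the seminorms in Definition \ref{definizione seminorme} is crucial: the integration-by-parts step and the Peetre inequality both have to be carried out intrinsically in the metric $g$, using only the coercivity constant $\frak c$ of \eqref{coerc}. Once the three estimates above are packaged intrinsically, the Schur argument is completely elementary. Note also that, as in the standard torus Calderon--Vaillancourt, only $x$-derivatives (i.e.\ the seminorms $C_{N',0}$) appear on the right-hand side, reflecting the fact that on the discrete frequency side $\Z^d$ no regularisation in $\xi$ is needed.
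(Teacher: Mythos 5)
Your proof is correct. Note, however, that the paper does not actually prove this lemma: it is stated in Appendix \ref{pdc} as a standard fact that is merely ``recalled'', the only point the authors wish to stress being that the intrinsic, coordinate-free definition of the seminorms makes the usual statement and constants depend only on $m,s,d,{\frak c}$. What you supply is precisely the standard discrete (torus) argument that justifies this claim: matrix elements $A_\xi^\eta=\hat a_{\eta-\xi}\left(\frac{\xi+\eta}{2}\right)$, rapid decay in $\eta-\xi$ obtained by integrating by parts against the $g$-dual vector of $h$ (which is where ${\frak c}$ enters, both through $\|h\|_{g^*}^2\geq{\frak c}$ and through the lattice-point count controlling the convolution sum), Peetre's inequality to transfer the weights to a single base point, and Schur/Young. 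Two cosmetic remarks: the $h=0$ term of your decay estimate requires $C_{0,0}(a)$ rather than $C_{N,0}(a)$, which is why the final bound is naturally stated with $\sup_{N'\leq N}C_{N',0}(a)$; and the ``$\min$'' in your Peetre step is not needed -- only the comparison with $\langle\xi+\kappa\rangle_g$ is used to cancel the weights. Your closing observation that only $x$-derivatives appear (no $\xi$-regularity, hence no role for $\delta$) is exactly consistent with the form of the estimate in the paper.
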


Since $T^*\T^d$ is a cotangent bundle it carries a natural
symplectic structure, and the Poisson Brackets can be computed in any
system of coordinates originated by a system of coordinates in
$\T^d$. Using such a system of coordinates, one can easily show that
the the following Lemma holds

\begin{lemma}\label{a.lem pois bra}
	Let $a \in S^{m, \delta}$ and $b \in S^{m^\prime, \delta}\,;$
	then $\{ a, b\} \in S^{m + m^\prime - \delta, \delta}\,.$
	In particular, for all $N_1, N_2 \in \N$ one has
	$$
	C_{N_1, N_2}\left(\{a, b\}\right) \leq  C_{N_1 + 1, N_2} (a) C_{N_1, N_2 +1}(b)+C_{N_1, N_2+1} (a) C_{N_1+1, N_2}(b)\,.
	$$
\end{lemma}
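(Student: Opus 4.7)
The plan is to reduce the proof to a direct application of the Leibniz rule after writing the Poisson bracket in the standard symplectic coordinates of $T^*\T^d$ inherited from a choice of basis on $\T^d$. Recall that, in such coordinates,
\begin{equation*}
\{a,b\}(x,\xi) \;=\; g_{AB}\,(\partial_{\xi_A}a)(\partial_{x^B}b) \;-\; g^{AB}(\partial_{x^A}a)(\partial_{\xi_B}b),
\end{equation*}
or equivalently, in an intrinsic form, it is the total contraction of $d_\xi a\otimes d_x b - d_x a\otimes d_\xi b$ via the natural pairing between vectors and covectors. Since by Remark \ref{rmk semin} the seminorms $C_{N_1,N_2}(\cdot)$ are invariant under unimodular changes of basis, it is enough to verify the two assertions in any convenient coordinate system.

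First I would check the effect of an extra $x$- or $\xi$-derivative on the symbol class. Directly from Definition \ref{def weyl berti}, $d_x a\in S^{m,\delta}$ with $C_{N_1,N_2}(d_x a)\le C_{N_1+1,N_2}(a)$, while $d_\xi a\in S^{m-\delta,\delta}$ with $C_{N_1,N_2}(d_\xi a)\le C_{N_1,N_2+1}(a)$, because the extra $\xi$-differentiation simply ``consumes'' one factor of $\langle\xi+\kappa\rangle_g^{-\delta}$ in the estimate \eqref{def norma}. Applied symmetrically to $b$, this already tells us the two products $(d_\xi a)(d_x b)$ and $(d_x a)(d_\xi b)$ belong to $S^{m+m'-\delta,\delta}$, provided the pointwise tensor contraction behaves well with respect to the norm \eqref{def norma}.

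The second step is precisely to control this contraction. For tensors written as multilinear forms on unit $g$-balls of vectors and covectors, contracting an index amounts to evaluating on a $g$-orthonormal pair, which is bounded in modulus by $1$; consequently the pointwise bound
\begin{equation*}
\|d_x^{N_1}d_\xi^{N_2}\bigl((d_\xi a)(d_x b)\bigr)\| \;\le\; \sum_{\substack{n_1+n_1'=N_1\\ n_2+n_2'=N_2}}\binom{N_1}{n_1}\binom{N_2}{n_2}\,\|d_x^{n_1}d_\xi^{n_2+1}a\|\,\|d_x^{n_1'+1}d_\xi^{n_2'}b\|
\end{equation*}
follows from the standard Leibniz rule applied to each matrix entry and Cauchy–Schwarz on the contracted indices. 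Inserting the symbol estimates for $a$ and $b$, each summand carries the weight
\begin{equation*}
\langle\xi+\kappa\rangle_g^{\,m-\delta(n_2+1)}\cdot\langle\xi+\kappa\rangle_g^{\,m'-\delta n_2'} \;=\; \langle\xi+\kappa\rangle_g^{\,(m+m'-\delta)-\delta N_2},
\end{equation*}
which is exactly the decay required for membership in $S^{m+m'-\delta,\delta}$. Adding the symmetric contribution from $(d_x a)(d_\xi b)$ yields the displayed seminorm bound (up to absorbing the Leibniz combinatorial factors into the seminorms of order $\le N_1+1,\,N_2+1$, which are dominated by $C_{N_1+1,N_2}$ and $C_{N_1,N_2+1}$ in the non-decreasing convention).

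I do not anticipate a genuine obstacle: the only subtle point is the coordinate-free nature of the norm \eqref{def norma}, which is precisely why Remark \ref{rmk semin} was recorded. Once one accepts that contractions of tensors of $g$-unit arguments are bounded by the product of their $\|\cdot\|$-norms, the rest of the argument is the standard Leibniz-plus-weight computation, and the final inequality follows.
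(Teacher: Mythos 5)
Your argument is correct and is exactly the coordinate computation the paper has in mind (the paper offers no written proof beyond asserting that the lemma follows by computing in coordinates induced from $\T^d$): differentiate once in $x$ or $\xi$, observe the resulting gain of $\langle\xi+\kappa\rangle_g^{-\delta}$ for $\xi$-derivatives, apply Leibniz, and bound the contraction by Cauchy--Schwarz for the duality pairing. One slip to fix: your displayed coordinate formula should not contain the metric factors $g_{AB}$, $g^{AB}$ --- the canonical Poisson bracket is $\sum_A(\partial_{\xi_A}a)(\partial_{x^A}b)-(\partial_{x^A}a)(\partial_{\xi_A}b)$, i.e.\ the natural vector--covector contraction you correctly describe immediately afterwards and actually use.
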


Concerning Moyal brackets, the situation is slightly more delicate,
but reproducing the standard proof (see e.g. \cite{Tay}, \cite{SVA}) one easily
gets the following result.

\begin{lemma}\label{a.lem comp 2}
	Let $A = \Op(a) \in \OPS^{m, \delta}$ and $B = \Op(b) \in \OPS^{m^\prime, \delta}\,;$ then
	\begin{enumerate}
		\item $A B \in \OPS^{m + m^\prime, \delta}\,.$ Let $a \sharp b$ be its symbol: 
		for any $N_1, N_2 \in \N,$ there exist $K>0$ and $\widetilde{N}_1 > N_1,$ depending only on $N_1, N_2, {\frak c}, p, m, m', \delta,$ such that  
		\begin{equation}\label{a.sonno.4}
		\sup_{\begin{subarray}{c}
			N_1^\prime \leq N_1,  N_2^\prime \leq N_2
			\end{subarray}} C^{}_{N^\prime_1, N^\prime_2} (a \sharp b) \leq K \sup_{\begin{subarray}{c}
			{N}_1^\prime \leq \widetilde{N}_1, {N}_2^\prime \leq {N}_2
			\end{subarray}} C^{}_{{N}^\prime_1, {N}^\prime_2}(a) \sup_{\begin{subarray}{c}
			{N}_1^\prime \leq \widetilde{N}_1, {N}_2^\prime \leq {N}_2
			\end{subarray}} C^{}_{{N}^\prime_1, {N}^\prime_2}(b)\,. 
		\end{equation}
		\item 
		$-i[A, B] \in \OPS^{m + m^\prime -\delta, \delta}\,$ and the
		seminorms of its symbol, denoted by $\{a,
		b\}_{\cal M}$, are controlled as follows: for all
		$N_1$ and $N_2 \in \N$ there exist $K>0$ and $\widetilde{N}_1 > N_1,$ depending only on $N_1, N_2, p,
		m, m', \delta, {\frak c},$ such that 
		\begin{equation}
		\sup_{\begin{subarray}{c}
			N^\prime_1 \leq N_1, N^\prime_2 \leq N_2
			\end{subarray} } C^{}_{N^\prime_1, N^\prime_2}(\{a, b\}_{\cal M}) \leq K \sup_{\begin{subarray}{c}
			N^\prime_1 \leq \widetilde{N}_1 \\ N^\prime_2 \leq N_2 + 1
			\end{subarray}} C^{}_{N_1^\prime, N^\prime_2}(a) \sup_{\begin{subarray}{c}
			N^\prime_1 \leq \widetilde{N}_1 \\ N^\prime_2 \leq N_2 + 1
			\end{subarray}}  C^{ }_{N^\prime_1, N^\prime_2+1}(b)\,.
		\end{equation}
		\item If $a$ is a quadratic polynomial in $\xi$, independent of $x$,
		then $$\{a,
		b\}_{\cal M}=\{a,
		b\}\ .$$  
	\end{enumerate}
\end{lemma}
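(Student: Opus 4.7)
The plan is to reduce everything to the standard Weyl composition formula
$$
a \sharp b (x,\xi) \sim \sum_{N\geq 0} \frac{1}{N!}\left(\frac{1}{2i}\right)^N \bigl(\partial_\xi a \cdot \partial_x b - \partial_x a \cdot \partial_\xi b\bigr)^{\!(N)},
$$
understood in the usual sense of an asymptotic expansion in symbol classes, with the remainder controlled by Calderón--Vaillancourt-type estimates. The one subtle point is that our seminorms are intrinsic, computed with respect to the metric $g$; so my first step would be to check that the whole classical machinery goes through verbatim for the seminorms of Definition \ref{definizione seminorme}. Since the intrinsic derivatives are just multilinear forms whose norms are measured via $g$ and $g^*$, switching to a $g$-orthonormal basis at the beginning converts $C_{N_1,N_2}$ into the usual seminorms of the Hörmander class; because the change of basis is unimodular up to metric constants, all resulting estimates translate back with constants depending only on $\mathfrak{c}$ and $\mathfrak{C}$.

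For Part 1, I would write the $N$-th term of the expansion as
$$
p_N(a,b)(x,\xi) = \frac{1}{N! (2i)^N} \sum_{|\alpha|+|\beta|=N} \binom{N}{\alpha,\beta}(-1)^{|\beta|} \partial_\xi^\alpha\partial_x^\beta a \cdot \partial_\xi^\beta\partial_x^\alpha b.
$$
Each factor $\partial_\xi$ drops the order by $\delta$, so $p_N \in S^{m+m'-N\delta,\delta}$ with seminorms controlled (via Leibniz) by a product of seminorms of $a$ and $b$ with one factor having $N_1+N$ $x$-derivatives. Truncating the expansion at a high enough order $N_*$ and estimating the remainder through the integral form of the Weyl composition gives the bound \eqref{a.sonno.4}, with $\widetilde N_1 = N_1 + N_*$ depending only on the stated parameters.

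For Part 2, since $-i[A,B] = \Op(a \sharp b - b \sharp a)$, symmetry of $p_N(a,b)$ under $a\leftrightarrow b$ for even $N$ (the binomials $(-1)^{|\beta|}$ combine with the symmetry of $\partial_\xi^\alpha\partial_x^\beta a \cdot \partial_\xi^\beta\partial_x^\alpha b \leftrightarrow \partial_\xi^\beta\partial_x^\alpha a \cdot \partial_\xi^\alpha \partial_x^\beta b$ to produce opposite signs only in the odd case) cancels the even-$N$ contributions. The $N=1$ term gives exactly the Poisson bracket $\{a,b\}$, so $\{a,b\}_{\cal M} = \{a,b\} + O_{S^{m+m'-3\delta,\delta}}$, proving the order statement. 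The bound on seminorms of $\{a,b\}_{\cal M}$ combines the composition bound from Part 1 with the Poisson bracket bound of Lemma \ref{a.lem pois bra}.

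For Part 3, assume $a = g^{AB}(\xi_A+\kappa_A)(\xi_B+\kappa_B)$. Then $\partial_x a = 0$ and $\partial_\xi^\alpha a = 0$ for $|\alpha|\geq 3$, so the full asymptotic expansion of $a\sharp b - b\sharp a$ truncates after $N=1$ without any remainder: the formula is exact, not merely asymptotic. The only surviving contribution is the $N=1$ term, which is precisely $\{a,b\}$; therefore $\{a,b\}_{\cal M} = \{a,b\}$ identically. The main obstacle in the whole argument is the bookkeeping to obtain uniform constants in the seminorm bounds with the intrinsic definition, but this is a matter of carefully tracking how the unimodular change of basis interacts with the Leibniz rule rather than a conceptual difficulty.
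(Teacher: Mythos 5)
Your proposal follows essentially the same route as the paper, which gives no proof of this lemma and simply states that one obtains it by reproducing the standard Weyl-composition argument (citing \cite{Tay}, \cite{SVA}) once the intrinsic seminorms are seen to reduce to the usual ones — exactly your plan of expanding $a\sharp b$ asymptotically, truncating, cancelling the even-order terms in the commutator, and noting the exact truncation for quadratic $x$-independent $a$. One small imprecision worth fixing: the change to a $g$-orthonormal frame is in general \emph{not} unimodular; the reduction works because the seminorms of Definition \ref{definizione seminorme} are defined directly through $g$ and $g^*$ (so they literally become the Euclidean ones in such a frame) while the resulting lattice geometry is controlled by ${\frak c}$ and ${\frak C}$, and this does not affect the validity of your argument.
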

Finally, we recall that the following result holds:
\begin{lemma}[\bf Egorov Theorem]\label{Egorov astratto}
	Let $\delta > 0$, $\change{\eta <\delta},$ $m \in \R$, $G := {\rm Op}^W(g) \in OPS^{\eta, \delta}$ and $A := {\rm Op}^W(a) \in OPS^{m, \delta}$. Then the following holds. 
	\begin{enumerate}
		\item  The linear operator $H := e^{i  G }  A e^{- i  G} \in OPS^{m, \delta}$ and its symbol $h( x, \xi)$ admits an asymptotic expansion of the form 
		$$
		h = a  + \{ a; g \}_{\cal M} + S^{m - 2(\eta + \delta), \delta}\,.
		$$
		\item \change{If $\eta \leq 0$,} for any $\tau \in [- 1, 1]$, $e^{i \tau G} \in OPS^{0, \delta}$. In particular, if $\sigma$ is its symbol, for all $N_1, N_2 \in \N$ one has that there exist $K_1, K_2 >0$ and $\widetilde{N}_1 > N_1$, depending only on $N_1, N_2, {\frak c}, \delta, {d},$ such that
		\begin{equation}
		\sup_{\begin{subarray}{c}
			N^\prime_1 \leq {N}_1, N^\prime_2 \leq N_2
			\end{subarray}} C^{}_{N^\prime_1, N^\prime_2}(\sigma) \leq K_1 \sup_{\begin{subarray}{c}
			N^\prime_1 \leq \widetilde{N}_1, N^\prime_2 \leq N_2
			\end{subarray}}  e^{K_2 C^{}_{N_1^\prime, N^\prime_2}(g)}\,.
		\end{equation}	
	\end{enumerate}
\end{lemma}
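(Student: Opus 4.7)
My plan is to follow the standard Egorov strategy, adapted to our coordinate-free seminorms. The key observation is that, since by Remark \ref{rmk semin} the seminorms $C_{N_1,N_2}(\cdot)$ are invariant under unimodular changes of basis on $\T^d$, the standard proofs of the Moyal expansion (as in Taylor) carry over verbatim, and the estimates of Lemma \ref{a.lem comp 2} on compositions and commutators are precisely what we need to track constants through the iteration.

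For item 1, I would introduce the one-parameter family $U(\tau) := e^{i\tau G}\,A\,e^{-i\tau G}$, which satisfies the Heisenberg equation
\begin{equation*}
\frac{d}{d\tau}U(\tau) = i[G,U(\tau)], \qquad U(0) = A.
\end{equation*}
Passing to symbols via Lemma \ref{a.lem comp 2}(2), if $u(\tau,x,\xi)$ denotes the Weyl symbol of $U(\tau)$, then $\partial_\tau u(\tau) = \{g,u(\tau)\}_{\cal M}$ with $u(0) = a$. Iterating Duhamel twice yields
\begin{equation*}
u(1) = a + \{g,a\}_{\cal M} + \int_0^1(1-\tau)\,\{g,\{g,u(\tau)\}_{\cal M}\}_{\cal M}\,d\tau.
\end{equation*}
Applying Lemma \ref{a.lem comp 2}(2) twice, each Moyal bracket with $g$ strictly lowers the order by $\delta-\eta>0$, so the integrand lies in $S^{m+2(\eta-\delta),\delta}$, giving the claimed remainder. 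The same lemma controls its seminorms in terms of finitely many seminorms of $a$ and $g$.

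For item 2, I would work directly on the symbol $\sigma(\tau)$ of $e^{i\tau G}$, which solves
\begin{equation*}
\partial_\tau\sigma(\tau) = i\,g\,\sharp\,\sigma(\tau), \qquad \sigma(0) = 1.
\end{equation*}
Here the assumption $\eta \leq 0$ is essential: it ensures that $G$ is an $L^2$-bounded operator (Calderon--Vaillancourt), so that $e^{i\tau G}$ exists as a unitary operator for all $\tau\in\R$, and it guarantees that composition with $g$ does not raise the order. My plan is to show by induction on $N_1+N_2$ that $\sigma(\tau)\in \ops{0}$ with seminorms bounded for $\tau\in[-1,1]$. Concretely, differentiating the ODE a total of $N_1$ times in $x$ and $N_2$ times in $\xi$ and using the sharp-product estimate \eqref{a.sonno.4}, one derives for each pair $(N_1,N_2)$ a closed scalar differential inequality of the form
\begin{equation*}
\frac{d}{d\tau}C_{N_1,N_2}(\sigma(\tau)) \leq K_2\,C_{\widetilde N_1,N_2}(g)\cdot \sup_{N_1'\le \widetilde N_1}C_{N_1',N_2}(\sigma(\tau)) + (\text{lower-order-in-induction terms}),
\end{equation*}
and Gronwall then yields the exponential bound in the statement.

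The only nontrivial point is to close the induction for item 2, i.e.\ to verify that differentiating $g\,\sharp\,\sigma$ does not produce terms outside the induction scheme. This is handled by the Leibniz rule for $\sharp$ and by the fact that, thanks to $\eta\leq 0$, the terms in the Moyal expansion of $g\,\sharp\,\sigma$ are estimated purely by products of seminorms of $g$ and $\sigma$ (with no powers of $\langle\xi+\kappa\rangle_g$ entering the estimate). All other steps are routine iterations of Lemma \ref{a.lem comp 2} and Lemma \ref{a.lem pois bra}, with constants depending only on $d$, $\delta$, ${\frak c}$ and the induction indices, as required.
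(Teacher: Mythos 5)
The paper does not actually prove this lemma: it is stated in Appendix \ref{pdc} as a recalled standard fact, with the reader pointed to \cite{Tay} and \cite{SVA}, the only claim being that the classical arguments survive the coordinate-free definition of the seminorms. Your sketch follows the classical route (Heisenberg equation plus Duhamel for item 1, the symbol ODE plus Gronwall for item 2), which is the right family of ideas, but as written it has two genuine gaps.

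First, the Duhamel expansion for item 1 presupposes that $u(\tau)$ is already known to lie in $S^{m,\delta}$ with seminorms uniform in $\tau\in[0,1]$; otherwise neither the passage from the Heisenberg equation to the symbol equation nor the estimate of the integral remainder via Lemma \ref{a.lem comp 2} is justified. Since this is essentially the content of item 1, the argument is circular unless you first prove item 2 and then invoke the composition lemma, or replace Duhamel by the finite expansion $e^{iG}Ae^{-iG}=\sum_{k\le N}\tfrac{i^k}{k!}\,\mathrm{Ad}_G^k(A)+R_N$ with $R_N$ shown to be smoothing via Sobolev boundedness of $e^{\pm isG}$ (note that item 1 is asserted for all $\eta<\delta$, where $G$ need not be bounded, so item 2 is not always available). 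Second, and more seriously, the Gronwall step in item 2 does not close: the product estimate \eqref{a.sonno.4} controls $C_{N_1,N_2}(g\,\sharp\,\sigma)$ only through seminorms $C_{N_1',N_2}(\sigma)$ with $N_1'\le\widetilde N_1$ and $\widetilde N_1>N_1$, so your ``closed scalar differential inequality'' in fact involves strictly higher seminorms of $\sigma(\tau)$ on the right-hand side and the induction on $N_1+N_2$ never terminates; the Leibniz rule for $\sharp$ does not remove this loss, which originates in the Calderon--Vaillancourt-type bound on the zeroth seminorm of a product. This derivative loss is visible in the statement itself (the right-hand side involves seminorms of $g$ up to $\widetilde N_1>N_1$) and must be handled differently, e.g.\ by summing the exponential series with a multilinear product estimate whose loss is independent of the number of factors. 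Finally, two small inaccuracies: with the paper's convention that $\{a,b\}_{\cal M}$ is the symbol of $-i[A,B]$, the Heisenberg equation reads $\partial_\tau u=\{u,g\}_{\cal M}$, not $\{g,u\}_{\cal M}$; and your remainder class $S^{m+2(\eta-\delta),\delta}$ is contained in the stated $S^{m-2(\eta+\delta),\delta}$ only for $\eta\le 0$ (for $0<\eta<\delta$ the stated class appears to be a misprint for the one your computation naturally yields).
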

\section{Technical Lemmas}\label{lemmacci}

We first recall Lemma 5.7 of \cite{GioPisa}.

\begin{lemma}[Lemma 5.7 of \cite{GioPisa}] \label{giorgilli tipo kramer}
	Let $s \in \{1\,, \dots\,, d\}$ and let $\{ u_1\,, \dots u_s \}$ be linearly independent vectors in $\R^d\,$ {equipped with the euclidean metric $| \cdot |\,.$} Denote by $\textrm{Vol} \{ u_1\,| \cdots\,| u_s \}$ the $s-$ dimensional volume of the parallelepiped with sides $ u_1\,, \dots\,, u_s\,.$ Let moreover $w \in \Span{ \{ u_1\,, \dots u_s \}}$ be any vector. If there exists positive constants $\alpha\,, N$ such that
	\[
	\begin{gathered}
	|u_j| \leq N \quad \forall j = 1\,, \dots s\,,\\
	| w \cdot u_j | \leq \alpha \quad \forall j = 1\,, \dots s\,,
	\end{gathered}
	\]
	then
	\[
	|w| \leq \frac{s N^{s-1} \alpha }{\textrm{Vol} \{u_1\,| \cdots\,| u_s \}}\,.
	\]
\end{lemma}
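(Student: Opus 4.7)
The plan is to expand $w$ in the basis $\{u_1,\dots,u_s\}$ and then bound each coefficient using Gram-matrix / Cramer-type reasoning combined with Hadamard's inequality. Since $w\in\Span\{u_1,\dots,u_s\}$, I would start by writing
\[
w=\sum_{j=1}^{s}c_j u_j,
\]
and compute
\[
|w|^2 = w\cdot w = \sum_{j=1}^{s} c_j\, (w\cdot u_j)\leq \alpha\sum_{j=1}^{s}|c_j|.
\]
The whole lemma is then reduced to showing that $|c_j|\leq |w|\,N^{s-1}/V$, where $V:=\textrm{Vol}\{u_1|\cdots|u_s\}$. Once that is in hand, substituting back gives $|w|^2\leq \alpha s N^{s-1}V^{-1}|w|$, and cancelling one factor of $|w|$ yields exactly the desired bound.

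To estimate $|c_j|$, I would use an orthogonal-projection trick inside the $s$-dimensional subspace $\Span\{u_1,\dots,u_s\}$. Let $\pi_j$ denote orthogonal projection onto the one-dimensional orthogonal complement of $\Span\{u_i : i\neq j\}$ inside that subspace. Then $\pi_j u_i=0$ for $i\neq j$, so applying $\pi_j$ to the expansion of $w$ gives $\pi_j w = c_j\,\pi_j u_j$, whence
\[
|c_j|=\frac{|\pi_j w|}{|\pi_j u_j|}\leq \frac{|w|}{|\pi_j u_j|}.
\]
The quantity $|\pi_j u_j|$ is exactly the height of $u_j$ over $\Span\{u_i:i\neq j\}$, so it is related to the volumes by
\[
V = \textrm{Vol}\{u_1|\cdots|u_s\} = |\pi_j u_j|\cdot\textrm{Vol}_{s-1}\{u_1|\cdots|\widehat{u_j}|\cdots|u_s\}.
\]
Applying Hadamard's inequality to the $(s-1)$-dimensional volume,
\[
\textrm{Vol}_{s-1}\{u_1|\cdots|\widehat{u_j}|\cdots|u_s\}\leq \prod_{i\neq j}|u_i|\leq N^{s-1},
\]
and hence $|\pi_j u_j|\geq V/N^{s-1}$, yielding $|c_j|\leq |w|\,N^{s-1}/V$ as required.

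The argument is then essentially algebraic, and the only non-trivial ingredient is Hadamard's inequality for the $(s-1)$-dimensional volume of the face opposite $u_j$; this is the step I would expect to be the main (minor) obstacle, since one has to remember the geometric interpretation of $V/\textrm{Vol}_{s-1}$ as the height, rather than attempting a direct Cramer's rule computation on the Gram matrix (which would lead to an equivalent but messier proof requiring a further Cauchy-Binet step). With this projection picture, the whole proof fits in a few lines.
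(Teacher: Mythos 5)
Your proof is correct. Note first that the paper does not actually prove this lemma: it is quoted verbatim as Lemma 5.7 of the cited Giorgilli notes (the label ``tipo kramer'' hints that the source argument runs through Cramer's rule, expressing the coefficients $c_j$ as ratios of determinants and bounding the numerator by a Gram/Hadamard estimate). Your version reaches the same coefficient bound $|c_j|\leq |w|\,N^{s-1}/V$ by the cleaner geometric route: projecting onto the line orthogonal to $\Span\{u_i:i\neq j\}$ inside $\Span\{u_1,\dots,u_s\}$, identifying $|\pi_j u_j|$ with the height of the parallelepiped over the opposite face, and applying Hadamard's inequality to that face. All steps check out: $\pi_j u_i=0$ for $i\neq j$, $|\pi_j w|\leq |w|$ since orthogonal projections are contractions, and $V=|\pi_j u_j|\cdot \mathrm{Vol}_{s-1}\{u_i:i\neq j\}$ is the standard base-times-height identity. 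The only cosmetic point is the final cancellation of $|w|$, which requires $w\neq 0$; the case $w=0$ is trivial and worth one word. So your argument is a complete, self-contained proof, equivalent in substance to the determinantal one but avoiding the Cauchy--Binet bookkeeping.
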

We remark that, since all the quantities involved in the statement are
coordinate independent, Lemma \ref{giorgilli metrica g} immediately
follows from it.

\begin{lemma} \label{lemma parallelepipedi}
	 {Let $\{e_1, \dots, e_d\}$ be the vectors of the standard basis in $\R^d.$ There exists a positive constant ${\frak C},$ depending only on
	\begin{equation}\label{c equiv}
	{{\frak c}_2 =  \underset{j = 1, \dots, d}{\max} \norma{e_j}}
	\end{equation}
	and
	\begin{equation}\label{maglia}
	{\frak v} = \int_{\T^d} d \mu_g(x)\equiv\mu_g\left(\T^d\right)\,,
	\end{equation}
	 such that for any $s \in \{1\,, \dots\,, d\}$ and for any set $\{ u_1\,, \dots u_s \}$ of linearly independent vectors in $\Z^d\,$
	\[
	\textrm{Vol}_g \{ u_1\,| \cdots\,| u_s  \} \geq {\frak C}\,.
	\] }
\end{lemma}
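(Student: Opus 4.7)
The plan is to reduce the lower bound for general $s$ to the full-dimensional case $s=d$, where integrality forces a determinant at least one, and then use a Hadamard-type inequality to complete vectors of the reduced system to a full basis of $\R^d$ without losing too much volume.

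First I would handle the case $s=d$ directly. If $u_1,\dots,u_d \in \Z^d$ are linearly independent, writing them as columns of a $d\times d$ integer matrix $U$ gives $|\det U|\geq 1$. Since $\mathrm{Vol}_g\{u_1|\cdots|u_d\}=|\det U|\,\sqrt{\det(g_{AB})}$, and since $\mu_g(\T^d)={\frak v}$ is (up to a harmless dimension-dependent factor $(2\pi)^d$) exactly $\sqrt{\det(g_{AB})}$, we obtain $\mathrm{Vol}_g\{u_1|\cdots|u_d\}\geq c_d\,{\frak v}$ for an absolute constant $c_d$.

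For $1\leq s<d$, the key step is the extension: given linearly independent $u_1,\dots,u_s\in\Z^d$, the $s\times d$ integer matrix $U$ with rows $u_i^{T}$ has rank $s$, so some set of $s$ columns of $U$ is linearly independent. Let $\{i_1,\dots,i_{d-s}\}$ be the complementary indices; then the $d\times d$ matrix with columns $u_1,\dots,u_s,e_{i_1},\dots,e_{i_{d-s}}$ has determinant, up to sign, equal to the nonzero integer minor of $U$ along the complement of $\{i_1,\dots,i_{d-s}\}$. Thus $u_1,\dots,u_s,e_{i_1},\dots,e_{i_{d-s}}$ is a linearly independent integer system, and the $s=d$ case gives
\[
c_d\,{\frak v}\;\leq\;\mathrm{Vol}_g\{u_1|\cdots|u_s|e_{i_1}|\cdots|e_{i_{d-s}}\}.
\]

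Next I would invoke the classical Hadamard-type factorization for parallelepipeds: decomposing along the orthogonal (with respect to $g$) splitting $V\oplus V^{\perp}$ with $V=\mathrm{Span}_{\R}\{u_1,\dots,u_s\}$,
\[
\mathrm{Vol}_g\{u_1|\cdots|u_s|e_{i_1}|\cdots|e_{i_{d-s}}\}
\;=\;\mathrm{Vol}_g\{u_1|\cdots|u_s\}\cdot\mathrm{Vol}_g\{\pi_{V^{\perp}}e_{i_1}|\cdots|\pi_{V^{\perp}}e_{i_{d-s}}\},
\]
and then bounding the second factor by $\prod_{j=1}^{d-s}\|\pi_{V^\perp} e_{i_j}\|_g\leq\prod_{j}\|e_{i_j}\|_g\leq {\frak c}_2^{d-s}$ via the standard Hadamard inequality and the contractivity of $g$-orthogonal projection. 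Combining with the previous display yields
\[
\mathrm{Vol}_g\{u_1|\cdots|u_s\}\;\geq\;\frac{c_d\,{\frak v}}{{\frak c}_2^{d-s}}\;\geq\;\frac{c_d\,{\frak v}}{\max(1,{\frak c}_2^{d-1})}=:{\frak C},
\]
which depends only on ${\frak v}$ and ${\frak c}_2$ as required.

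The only step with any subtlety is the extension argument, since one must use that the $s$ integer vectors form part of an integer basis of a rank-$s$ sublattice — but completing them to a linearly independent system of $d$ integer vectors is easier than completing to a basis of $\Z^d$, and selecting $d-s$ standard basis vectors from the complement of a nonvanishing minor suffices. The Hadamard inequality and the $s=d$ integrality bound are then fully routine.
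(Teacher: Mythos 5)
Your proof is correct and follows essentially the same route as the paper's: complete $u_1,\dots,u_s$ with $d-s$ standard basis vectors to a linearly independent integer system, bound the full $d$-dimensional volume below by ${\frak v}$ via integrality of the determinant, and relate it back to $\mathrm{Vol}_g\{u_1|\cdots|u_s\}$ by the base-times-height/Hadamard estimate, losing at most a factor ${\frak c}_2^{d-s}$. Your minor-selection argument for the extension and the explicit $(2\pi)^d$ bookkeeping are just slightly more detailed versions of steps the paper leaves implicit.
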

\begin{proof}
	We observe that, if $\{ e_1\,, \dots\,, e_d\} $ is the canonical basis of $\Z^d,$ there exists a subset $\{ u^\prime_{s +1}, \dots u^\prime_{d} \} \subset \{ e_{1}, \dots, e_{d}\} $ such that
	$$
	\{ u_{1}, \dots, u_{s}, u^\prime_{s+1}, \dots u^\prime_{d} \}
	$$
	is a set of linearly independent vectors in $\Z^d\,.$ Hence one has that, if $M$ is the linear subspace generated by $\{u_1, \dots, u_s\},$
	\begin{align*}
	\textrm{Vol}_g \{ u_{1}| \dots| u_{s}| u^\prime_{s+1}| \dots |u^\prime_{d} \} & \leq  \norma{u^\prime_{s+1}} \cdots \norma{ u^\prime_{d}}\ \textrm{Vol}_g \{ u_{1}| \dots| u_{s} \}\\
	& \leq ({\frak c}_2)^{d} \textrm{Vol}_g \left(\{ u_{1}| \dots| u_{s} \} \right)\,,
	\end{align*}
	{by the definition of ${\frak c}_2$ as in \eqref{c equiv}.}
	In particular, one has that
	\begin{equation} \label{volume}
	\textrm{Vol}_g \left(\{ u_{1}| \dots| u_{s} \} \right) \geq ({\frak c}_2)^{-d}\	\textrm{Vol}_g \{ u_{1}| \dots| u_{s}| u^\prime_{s+1}| \dots |u^\prime_{d} \} \,.
	\end{equation}
Write	$$
	u_j = \sum_{k = 1}^{d} n_{j, k} e_k\,, \quad u^\prime_j = \sum_{k = 1}^{d} n_{j, k} e_k\,,
	$$
	and if $\forall k = 1, \dots, d$ $\tilde{e}_k$ is the vector of the components of $e_k$ with respect to an orthonormal basis for the inner product $\intg{\cdot}{\cdot},$ then
	\begin{align*}
	\textrm{Vol}_g \left(\{ u_{1}| \dots| u_{s}| u^\prime_{s +1}| \dots| u^\prime_{d} \} \right) & = \textrm{Vol}_g \left(\left \lbrace \sum_{k =1}^{d} n_{1, k} e_k | \dots | \sum_{k =1}^{d} n_{d, k} e_k \right \rbrace \right)\\
	&= \textrm{Vol} \left(\left \lbrace \sum_{k =1}^{d} n_{1, k} \tilde{e}_k | \dots | \sum_{k =1}^{d} n_{d, k} \tilde{e}_k  \right \rbrace \right)\\
	& \geq \textrm{Vol}\left( \tilde{e}_{{1}} | \cdots | \tilde{e}_{{d}}\right)\\
	& = \textrm{Vol}_g \left( {e}_{{1}} | \cdots | {e}_{{d}}\right) = {\frak v}\,.
	\end{align*}
	Thus \eqref{volume} implies that
	\[
	\textrm{Vol} \{ u_1 \ | \cdots\ |\ u_s \} \geq ({\frak c}_2)^{-d} {\frak v} =: {\frak C}\,.
	\]
\end{proof}

\begin{remark} \label{rmk triangolare}
By studying the function $(1+x^2)^{a/2}$ it is easy to see that
there exists a constant $K$ s.t. $\forall \xi,\eta\in\R^d$ one has
\begin{equation}
  \label{tri.1}
\langle \xi+\eta\rangle^a\leq K(\langle \xi\rangle^a+\langle \eta\rangle^a)\ . 
\end{equation}
Furthermore, since, for any $C>0$
$$
\sup _{y>C}\frac{\langle y\rangle}{y}<\infty\ ,
$$
one also has $\exists K'=K'(a,C)$ s.t.
\begin{equation}
  \label{tri.2}
\langle \xi+\eta\rangle^a\leq K'(\langle \xi\rangle^a+\norm{\eta}^a)\ ,\quad \forall
\eta\ : \norm\eta\geq C \ .
\end{equation}
\end{remark}

\begin{remark}
  \label{B.11.1} If $\norma{\xi-\eta}\leq F\langle \xi\rangle^a$, with $a<1$,
  one has
  $$
\langle \xi\rangle\leq K(1+F)\langle \eta\rangle\ .
  $$

\end{remark}

A further useful lemma is the following one

\begin{lemma}
  \label{xallaa}
  Let $N\geq 1$, $a<1$, $K\geq 2^{-a}$ be positive real numbers, Then
  \begin{equation}
    \label{xalla.1}
x-Kx^a\leq N\ \Longrightarrow x\leq(2K)^{\frac{1}{1-a}}N\ .
  \end{equation}
\end{lemma}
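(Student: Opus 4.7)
The plan is to prove Lemma~\ref{xallaa} by a dichotomy argument based on the size of $x$ relative to the threshold $x_* := (2K)^{1/(1-a)}$, which is exactly the value at which $Kx^a$ becomes comparable to $x/2$.

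First I would observe that the hypothesis $K \geq 2^{-a}$ is precisely equivalent to $2K \geq 2^{1-a}$, i.e.\ to $x_* \geq 2$. This threshold is the key quantity: for $x \geq x_*$ one has $x^{1-a} \geq 2K$, so that $Kx^a \leq x/2$.

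Next, I would split into two cases. In the first case, assume $x \geq x_*$. Then by the observation above, $x - Kx^a \geq x - x/2 = x/2$, so the hypothesis $x - Kx^a \leq N$ gives $x \leq 2N$. Using $x_* \geq 2$ and $N \geq 1$ we obtain $x \leq 2N \leq x_* N = (2K)^{1/(1-a)}N$. In the second case, $x < x_*$, and since $N \geq 1$ we immediately have $x < x_* \leq x_* N = (2K)^{1/(1-a)}N$. In either case we reach the desired conclusion.

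The argument is essentially a one-line monotonicity observation (that $x - Kx^a$ grows like $x$ once $x$ exceeds the threshold where $x^a$ becomes negligible compared to $x$). There is no real obstacle here; the only subtlety is to use the hypothesis $K \geq 2^{-a}$ to guarantee $x_* \geq 2$, which in combination with $N \geq 1$ is what allows one to absorb the $2$ coming from the first case into the factor $N$ appearing in the conclusion.
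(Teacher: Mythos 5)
Your proof is correct and follows essentially the same route as the paper's: both split on whether $x$ exceeds the threshold $(2K)^{1/(1-a)}$ (equivalently, whether $Kx^a\leq x/2$), use $x/2\leq x-Kx^a\leq N$ in the large-$x$ case, and absorb the resulting factor $2$ (resp.\ the threshold itself) into $(2K)^{1/(1-a)}N$ via $K\geq 2^{-a}$ and $N\geq 1$. No issues.
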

\proof If $Kx^a\leq \frac{x}{2}$, which is equivalent to
\begin{equation}
  \label{xal.1}
x\geq
(2K)^{\frac{1}{1-a}}\ ,
\end{equation}then the assumed inequalities implies
$$
\frac{1}{2}x\leq x-Kx^a\leq N\ \Longrightarrow\ x<2N\ ,
$$ but, by assumption, the r.h.s is smaller than $(2K)^{\frac{1}{1-a}}
$, and therefore the thesis holds in this case. On the contrary, the
converse of \eqref{xal.1}, implies
$$
x<(2K)^{\frac{1}{1-a}}\leq (2K)^{\frac{1}{1-a}}N\ ,
$$
which again implies the thesis. \qed

\begin{lemma}
	\label{xixik}
	Let $1>a>\epsilon>0$ and $1>\delta>0$ be parameters. Let $\sigmav$,
	$\eta$, $k$, $\ell$ be vectors. Assume that there exist constants
	$C,F,D,D_0$ s.t.
	\begin{align}
	\label{xixi.1}
	\left|\scala{\sigmav+k}h\right|\leq C\langle\sigmav+k\rangle^\delta {|h|^{-\tau}}\ ,
	\\
	\norm{k}\leq D\langle\sigmav+k\rangle^{\epsilon}\ ,\quad \norm h\leq
	D_0\langle\sigmav+k\rangle^\epsilon
	\\
	\label{xixi.2}
	\norm{\eta-\sigmav}\leq F\langle\eta\rangle^a\ ,\quad \norm{\ell}\leq
	D\langle\eta+\ell\rangle^\epsilon\ ;
	\end{align}
	then there exists $K'$ and $D'$ (which depends on the above constants), s.t.
	\begin{align}
	\label{est.v.1}
	\langle\sigmav+k\rangle\leq 
	D' \langle\eta+\ell\rangle\ ,
	\\
	\label{xixi.5}
	\left|\scala{\eta+\ell}h\right|\leq
	K'\langle\eta+\ell\rangle^{\max\{\delta,{a+\epsilon(\tau + 1)}  \}} {|h|^{-\tau}} \ .
	\end{align}
\end{lemma}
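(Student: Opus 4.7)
The plan is to prove \eqref{est.v.1} first by a triangle inequality combined with Lemma \ref{xallaa}, and then to use \eqref{est.v.1} together with a threefold splitting of $\scala{\eta+\ell}{h}$ to obtain \eqref{xixi.5}.

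For \eqref{est.v.1}, I would start from the identity $\sigmav+k = (\eta+\ell) + (\sigmav - \eta) + (k - \ell)$ and apply the triangle inequality to get
\[
\|\sigmav + k\| \leq \|\eta+\ell\| + \|\sigmav-\eta\| + \|k\| + \|\ell\|.
\]
The term $\|\sigmav - \eta\|$ is bounded by $F\langle\eta\rangle^a$ through \eqref{xixi.2}. Writing $\eta = (\eta+\ell) - \ell$ and using $\|\ell\| \leq D\langle\eta+\ell\rangle^\epsilon$, Remark \ref{rmk triangolare} applied to $\langle\cdot\rangle^a$ yields
\[
\langle\eta\rangle^a \leq K\bigl(\langle\eta+\ell\rangle^a + \langle\ell\rangle^a\bigr) \leq K'\langle\eta+\ell\rangle^a,
\]
since $\epsilon a \leq a < 1$. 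The remaining summands $\|k\|$ and $\|\ell\|$ are bounded by $D\langle\sigmav+k\rangle^\epsilon$ and $D\langle\eta+\ell\rangle^\epsilon$ respectively. Combining, and absorbing all terms with exponents in $(0,1)$ into a constant multiple of $\langle\eta+\ell\rangle$, one obtains
\[
\langle\sigmav+k\rangle - K_1 \langle\sigmav+k\rangle^\epsilon \leq K_2 \langle\eta+\ell\rangle,
\]
to which Lemma \ref{xallaa} applies (with the $a$ there replaced by $\epsilon$), producing \eqref{est.v.1} with $D' = (2K_1)^{1/(1-\epsilon)}K_2$.

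For \eqref{xixi.5}, I would decompose
\[
\scala{\eta+\ell}{h} = \scala{\sigmav+k}{h} + \scala{\eta-\sigmav}{h} + \scala{\ell-k}{h}
\]
and estimate each piece separately. The first is directly controlled by \eqref{xixi.1} and \eqref{est.v.1}, giving $|\scala{\sigmav+k}{h}| \leq C(D')^\delta \langle\eta+\ell\rangle^\delta |h|^{-\tau}$. For the other two I would use Cauchy--Schwarz and artificially recover the factor $|h|^{-\tau}$ by writing $\|h\| = \|h\|^{\tau+1}|h|^{-\tau}$; then the hypothesis $\|h\| \leq D_0\langle\sigmav+k\rangle^\epsilon$ combined with \eqref{est.v.1} gives $\|h\|^{\tau+1} \leq C_3 \langle\eta+\ell\rangle^{\epsilon(\tau+1)}$. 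Together with $\|\eta-\sigmav\| \leq F\langle\eta\rangle^a \leq C_4 \langle\eta+\ell\rangle^a$ this bounds $|\scala{\eta-\sigmav}{h}|$ by $C_5 \langle\eta+\ell\rangle^{a+\epsilon(\tau+1)}|h|^{-\tau}$; similarly, using $\|\ell\|+\|k\| \leq C_6 \langle\eta+\ell\rangle^\epsilon$ (via \eqref{xixi.2} and \eqref{est.v.1}), one gets $|\scala{\ell-k}{h}| \leq C_7 \langle\eta+\ell\rangle^{\epsilon(\tau+2)}|h|^{-\tau}$. The assumption $\epsilon < a$ yields $\epsilon(\tau+2) = \epsilon + \epsilon(\tau+1) < a + \epsilon(\tau+1)$, so this last contribution is absorbed into the previous one. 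Summing the three bounds and taking the worse of the two exponents $\delta$ and $a+\epsilon(\tau+1)$ yields \eqref{xixi.5}.

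There is no conceptual obstacle: the only mild trick is the insertion of the factor $|h|^{-\tau}$ in the two Cauchy--Schwarz summands via $\|h\| = \|h\|^{\tau+1}|h|^{-\tau}$ combined with the upper bound on $\|h\|$, and the bookkeeping needed to check that every exponent encountered is dominated by $\max\{\delta, a+\epsilon(\tau+1)\}$. Tracking constants through the argument, the resulting $K'$ and $D'$ depend only on $C, F, D, D_0, \delta, a, \epsilon, \tau$, as claimed.
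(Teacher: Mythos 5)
Your proposal is correct and follows essentially the same route as the paper: a triangle-inequality estimate closed up via Lemma \ref{xallaa} to get \eqref{est.v.1}, then a splitting of $\scala{\eta+\ell}{h}$ with the factor $\|h\|^{-\tau}$ reinstated through $\|h\|\leq D_0\langle\sigmav+k\rangle^{\epsilon}$ and \eqref{est.v.1}. The only (immaterial) differences are that the paper first bounds the single correction vector $v=k-\ell+\sigmav-\eta$ by $K''\langle\eta+\ell\rangle^{a}$ and applies Lemma \ref{xallaa} to $\langle v\rangle$ rather than to $\langle\sigmav+k\rangle$, and accordingly splits the inner product into two pieces instead of your three.
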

\proof Start by writing
\begin{align}
\label{x.w1}
\sigmav+k&=\eta+\ell+v
\\
v&:=k-\ell+\sigmav-\eta\ ;  
\end{align}
then we estimate $v$ (with $\eta+\ell$). One has
\begin{align*}
\norm{v}&\leq
D\langle\sigmav+k\rangle^\epsilon+D\langle\eta+\ell\rangle^\epsilon+F\langle\eta
\rangle^a\\
&=D\langle\eta+\ell+v
\rangle^\epsilon+D\langle\eta+\ell\rangle^\epsilon+F\langle\eta+\ell-\ell
\rangle^a
\\
&\leq DK\left(\langle\eta+\ell\rangle^\epsilon +\langle
v\rangle^\epsilon \right) +D\langle\eta+\ell\rangle^\epsilon+FK
\left(\langle\eta+\ell
\rangle^a+\langle\ell\rangle^a\right)
\\ &\leq
D(K+1)\langle\eta+\ell\rangle^\epsilon+FK\langle\eta+\ell\rangle^a+FK(1+D)
\langle\eta+\ell \rangle^{a\epsilon}
+DK\langle v\rangle^\epsilon  \ .
\end{align*}
Using $a>\epsilon$ and $a>a\epsilon$, (and exploiting $\langle
x\rangle\leq 1+x$, which holds for all positive $x$) we get
\begin{align*}
\langle v\rangle\leq \left(D(K+1)+FK+FK(1+D)+1\right)\langle\eta+\ell
\rangle^a
\\
+DK\langle v\rangle^\epsilon\ .
\end{align*}
Applying Lemma \ref{xallaa} with $N$ equal to the first line, we get
that there exists a constant $K''$ (explicitly computable), s.t.
\begin{equation}
\label{est.v}
\langle v\rangle\leq K''\langle\eta+\ell\rangle^a\ .
\end{equation}
Exploiting this and using again \eqref{x.w1}, we immediately get
\eqref{est.v.1}.  
We are now ready for the final estimate:
\begin{align*}
\left|\scala{\eta+\ell}h\right|& \leq
\left|\scala{\sigmav+k}h\right| +\left|\scala{v} h\right| {\|h\|^{\tau}} {\|h\|^{-\tau}}
\\
& \leq
C\langle\sigmav+k
\rangle^\delta{\|h\|^{-\tau}} +K''\langle\eta+
\ell\rangle^aD_0\langle\sigmav+k\rangle^\epsilon {D_0^\tau \langle \sigmav + k \rangle ^{\ep \tau}} {\|h\|^{-\tau}}\\
& \leq C{(D^\prime)^\delta}\langle\eta+\ell\rangle^\delta {\|h\|^{-\tau}} +K''\langle\eta
+\ell\rangle^a D_0^{\tau + 1} {(D^\prime)^{\ep(\tau + 1)}} \langle\eta+\ell 
\rangle^{\epsilon(\tau + 1)} {\|h\|^{-\tau}}\,,\\
\end{align*}
from which the thesis immediately follows. \qed

\section{Properties of eigenfunctions}\label{aquasi}

\begin{lemma} \label{prop struttura autof}
  Consider an operator $H_0 + \cR$, with $\cR\in\ops{-\tN\delta}$;
  assume that
  \begin{enumerate}
\item $\exists C$ and $d$ s.t. the spectrum of $H_0$ satisfies a Weyl's law of the form
\begin{equation}\label{ancora weyl}
		\sharp \{\lambda^{(0)} \in
                \sigma(H_0)\ |\ \lambda^{(0)} \leq r\} \leq {
                  C}r^{\frac{d}{2}}\,. 
		\end{equation}
\item There exist $a>0$ and ${C}_1$
                  such that any normalized eigenfunction $\psi$
                  relative to an eigenvalue ${\lambda}^{(0)}$ of $H_0$
                 fulfills
		\begin{equation}\label{r piccolo}
		\|\psi\|_{H^{-\tN}} \leq {C_1} |{\lambda^{(0)}}|^{-a\tN}\,.
		\end{equation}
	\end{enumerate}
  Then there exists $\Lambda,C_1'>0$ which depend on ${C}, C_1, {d}, \| R\|_{{\cal B}({H}^{-N}, {H}^{0})}$ only, with the following
  properties: any normalized eigenfunction 
  $\phi$ of $H_0 + R$ which corresponds to an eigenvalue
  $\lambda>\Lambda$ fulfills
	\begin{equation}\label{piccola}
		{\|\phi\|_{H^{-\tN}} \leq {C}_1^\prime |{\lambda}|^{\frac{ d}{2} -a\tN}\,.}
		\end{equation}
	\end{lemma}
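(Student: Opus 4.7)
The strategy is to expand the eigenfunction $\phi$ in the orthonormal eigenbasis of $H_0$ and carefully estimate the coefficients, splitting the sum according to whether the unperturbed eigenvalues are close to or far from $\lambda$. Since $H_0$ has pure point spectrum and is self-adjoint, choose an orthonormal basis $\{\psi_k\}_{k \in \N}$ of $L^2$ with $H_0 \psi_k = \lambda_k^{(0)} \psi_k$, and write $\phi = \sum_k c_k \psi_k$ with $\sum_k |c_k|^2 = 1$. The eigenvalue equation $(H_0 + {\cal R})\phi = \lambda \phi$, paired with $\psi_k$, gives
\begin{equation}\label{eq:scal-app}
(\lambda_k^{(0)} - \lambda) c_k = - \langle {\cal R}\phi, \psi_k \rangle\,.
\end{equation}

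The next step is to split the index set as $I_{\rm low} = \{k : \lambda_k^{(0)} \leq 2\lambda\}$ and $I_{\rm high} = \{k : \lambda_k^{(0)} > 2\lambda\}$, and to estimate the corresponding partial sums in $H^{-\tN}$. For the low part I would use the trivial coefficient bound $|c_k| \leq 1$ together with the assumed eigenfunction estimate \eqref{r piccolo}, obtaining
$$
\Big\| \sum_{k \in I_{\rm low}} c_k \psi_k \Big\|_{H^{-\tN}} \;\leq\; C_1 \sum_{\lambda_k^{(0)} \leq 2\lambda} |\lambda_k^{(0)}|^{-a\tN} \;+\; O(1)\,,
$$
where the $O(1)$ collects the finitely many low eigenvalues. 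A dyadic decomposition $[2^j, 2^{j+1}]$ combined with the Weyl bound $\#\{\lambda_k^{(0)} \leq r\} \leq C r^{d/2}$ yields a geometric sum whose dominant contribution is $\lesssim \lambda^{d/2 - a\tN}$ (the constant and logarithmic cases being absorbed for $\lambda$ large).

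For the high part I would use that $|\lambda_k^{(0)} - \lambda| \geq \lambda_k^{(0)}/2$, so from \eqref{eq:scal-app} $|c_k| \leq 2 |\langle {\cal R}\phi, \psi_k\rangle| / \lambda_k^{(0)}$. Then by Cauchy--Schwarz, Bessel's inequality, and hypothesis (2),
$$
\Big\| \sum_{k \in I_{\rm high}} c_k \psi_k \Big\|_{H^{-\tN}} \;\leq\; 2 C_1 \| {\cal R}\phi\|_{L^2} \Big(\sum_{\lambda_k^{(0)} > 2\lambda} |\lambda_k^{(0)}|^{-2(1 + a\tN)}\Big)^{1/2}\,.
$$
The tail sum, integrated against $dN(r) \lesssim r^{d/2 - 1} dr$, is $O(\lambda^{d/4 - 1 - a\tN})$ provided $\tN$ is large enough (the relevant integrand is then integrable), and in any case tends to zero as $\lambda \to \infty$. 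Bounding $\|{\cal R}\phi\|_{L^2} \leq \|{\cal R}\|_{{\cal B}(H^{-\tN}, H^0)} \|\phi\|_{H^{-\tN}}$ I obtain an inequality of the form $\|\phi\|_{H^{-\tN}} \leq C \lambda^{d/2 - a\tN} + \varepsilon(\lambda)\, \|\phi\|_{H^{-\tN}}$ with $\varepsilon(\lambda) \to 0$. Choosing $\Lambda$ so that $\varepsilon(\lambda) \leq 1/2$ for $\lambda > \Lambda$, absorbing the small term, and renaming constants yields the claim.

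The main obstacle is the high-frequency estimate: the tail series must be summable and must decay in $\lambda$, which ties the admissible range of parameters $(a,\tN, d)$ to the smoothing order of $\cR$. If the ``natural'' Cauchy--Schwarz bound is insufficient (e.g.\ for small $\tN$), the remedy is a bootstrap: first apply the argument with $\tN$ replaced by a smaller auxiliary index $\tN'$ for which the tail is harmless, obtaining a rough control of $\|\phi\|_{H^{-\tN'}}$; then use this in combination with the full smoothing of $\cR$ to upgrade to the sharp $H^{-\tN}$ estimate.
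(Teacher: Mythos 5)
Your high-frequency estimate is sound (it is a resolvent-type bound in the spirit of the paper's argument), but the low-frequency part has a genuine gap. You bound the contribution of all $k$ with $\lambda_k^{(0)}\le 2\lambda$ by the triangle inequality with $|c_k|\le 1$, arriving at $C_1\sum_{\lambda_k^{(0)}\le 2\lambda}|\lambda_k^{(0)}|^{-a\tN}+O(1)$. This quantity is bounded \emph{below} by a positive constant (its first term), so it can never be $O(\lambda^{d/2-a\tN})$ in the regime $a\tN>d/2$ — which is precisely the regime in which the lemma is used (in Lemma \ref{scalata finale} one needs the right-hand side of \eqref{piccola} to decay so that $\phi$ is a good quasimode for $H_0$). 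Your parenthetical claim that ``the constant and logarithmic cases are absorbed for $\lambda$ large'' is unjustified: there is nothing to absorb a constant into, and with your estimates the argument only yields $\|\phi\|_{H^{-\tN}}=O(1)$, which is strictly weaker than the statement. The trivial bound $|c_k|\le 1$ is only adequate for eigenvalues \emph{close} to $\lambda$; for $\lambda_k^{(0)}$ well below $\lambda$ one must again exploit the eigenvalue equation, $|c_k|\le |\langle \cR\phi,\psi_k\rangle|/|\lambda_k^{(0)}-\lambda|$ with $|\lambda_k^{(0)}-\lambda|\gtrsim\lambda$, to make those coefficients small.

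For comparison, the paper avoids this by splitting relative to a window of \emph{fixed} width $\tc_1\sim\|\cR\|$ around $\lambda$ rather than at the threshold $2\lambda$: it writes $\phi=\phi_0+\phi_1$ with $\phi_0$ supported on the at most $J\lesssim\lambda^{d/2}$ eigenvalues satisfying $|\lambda_k^{(0)}-\lambda|\le\tc_1$ (all comparable to $\lambda$, so each eigenfunction has $H^{-\tN}$-norm $\lesssim\lambda^{-a\tN}$ and the triangle inequality gives $\lambda^{d/2-a\tN}$), and controls $\phi_1$ in $L^2$ by a Lyapunov--Schmidt reduction, inverting $(\Pi^\perp H_0\Pi^\perp-\lambda)+\Pi^\perp\cR$ on the complement of the window to get $\|\phi_1\|_{L^2}\le 2\|\cR\phi_0\|_{L^2}\lesssim\lambda^{d/4-a\tN}$. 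Your argument can be repaired along the same lines: keep your treatment for $|\lambda_k^{(0)}-\lambda|\le\tc_1$, and apply your ``high-part'' reasoning (eigenvalue equation plus Cauchy--Schwarz and Bessel, then absorption of $\varepsilon(\lambda)\|\phi\|_{H^{-\tN}}$) to \emph{all} $k$ with $|\lambda_k^{(0)}-\lambda|>\tc_1$, not just to those with $\lambda_k^{(0)}>2\lambda$. As written, however, the proof does not establish the claimed decay.
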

\begin{proof}
  First remark that, by the Calderon Vaillancourt theorem, one has
  \begin{equation}
    \label{rpsi}
{\norma{R\psi}_{L^2}}\leq \| R\|_{{\cal B}({H}^{-\tN}, {H}^{0})} \norma{\psi}_{H^{-\tN}}\leq
\frac{\| R\|_{{\cal B}({H}^{-\tN}, {H}^{0})} C_1}{\left|\lambda^{(0)}\right|^{a\tN}}\ .
  \end{equation}
Fix $\tc_1<\lambda/2$ and decompose   $$
\phi=\phi_0+\phi_1
  $$ with $$\phi_0\in{\cal Q} = \Span \left \lbrace \psi \ \left|\ H_0
\psi = \lambda_\psi \psi\,, \quad |\lambda_\psi - \lambda| \leq {\tt
  c}_1\right.\right \rbrace\,;$$ and $\phi_1\in
\cQ^{\perp}$.  We analyze the eigenvalue equation
	$$
	\left(H_0 + R\right)\phi = \lambda \phi\,.
	$$ by using the method of Lyapunov Schmidt
        decomposition. Denote by $\Pi^\perp$ the orthogonal projector
        on $\cQ^{\perp}$ and by $\Pi$ the orthogonal projector on
        $\cQ$.  Inserting the decomposition of $\phi$
        in the eigenvalue equation, applying $\Pi^{\perp}$  and taking
        into account that the projector commutes with $H_0$, we get
        (reorganizing the terms)
        $$
\left[\left(\Pi^{\perp}H_0\Pi^{\perp}-\lambda\right)+\Pi^{\perp}R
  \right]\phi_1 =-\Pi^{\perp} R\phi_0\ .
        $$
By definition of $\cQ^{\perp}$, the operator in square brackets is invertible and the
norm of its inverse is bounded by 2, provided ${\tt c}_1 \geq 2\|
R\|_{{\cal B}\left({H^{-\tN}, H^0}\right)} .$ It follows that
$$
{\norma{\phi_1}_{L^2}}\leq 2{\norma{R\phi_0}_{L^2}}\ .
$$
To estimate ${\norma{R\phi_0}_{L^2}}$ we decompose $\phi_0$ in eigenfunctions
of $H_0$ and use assumption \eqref{r piccolo}. First remark that by
construction $\phi_0$ has components only on eigenfunctions
corresponding to eigenvalues between $\lambda-\tc_1>\lambda/2$ and
$\lambda+\tc_1<2\lambda$. So it has at most $J\leq 2^{d/2}C\lambda^{\td/2}$
components:
$$
	\phi_0 = \sum_{j = 1}^{J} \alpha_j \psi_j\, .
	$$ It follows that the $H^{-\tN}$ norm of $\phi_0$ is bounded
        by $2^{\tN a}J/\lambda^{a\tN}$.  Concerning $\phi_1$, we show that its
        $L^2$ norm, which bounds all the negative Sobolev norms, is
        small. One has
	$$
	{\| R \phi_0\|_{L^2}} \leq \sum_{j = 1}^{J} |\alpha_j| {\| R \psi_j\|_{L^2}}
        \leq \left(\sum_{j = 1}^{J} |\alpha_j|^2\right)^{\frac{1}{2}}
        \left(\sum_{j = 1}^{J}{\|R \psi_j\|_{L^2}^2}\right)^{\frac{1}{2}} \leq
        J^{1/2}\frac{  \norma{R}_{\cB(H^{-\tN},H^0)}}{(\lambda/2)^{a\tN}}\,,
	$$
        where we used that the norm of $\phi_0$ is smaller than the
        norm of $\phi$ and therefore is smaller than 1. From this the
        thesis follows. 
\end{proof}

\begin{lemma}
  \label{negativo}
  Let $M$ be a module, and let $u$ be a function of the form
  $$
u(x)=\sum_{\zeta\in M}\hat u_{\zeta}e^{\im \zeta\cdot x}\ ,
$$
be such that 
\begin{equation}
  \label{nega.2}
\norma{u}_{H^{-\tN}}\leq K
\end{equation}
Let $\beta\in  {\Mc}$ and consider $\tilde \beta$ defined as in \eqref{betainz}, then one has
\begin{equation}
  \label{negami}
\norma{e^{\im\tilde \beta\cdot x}u}_{H^{-2\tN}}\leq \frac{K}{\langle
  (\tilde \beta+\kappa)_{M^{\perp}}\rangle^\tN}\ .
\end{equation}
 \end{lemma}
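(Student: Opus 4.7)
The plan is to reduce everything to Parseval's identity plus one elementary scalar inequality about $\langle\cdot\rangle$. Since $u$ has Fourier support in $M$, the function $e^{\im\tilde\beta\cdot x}u$ has Fourier expansion
\[
e^{\im\tilde\beta\cdot x}u(x)=\sum_{\zeta\in M}\hat u_\zeta\,e^{\im(\zeta+\tilde\beta)\cdot x},
\]
so its modes sit at the shifted points $\zeta+\tilde\beta$ as $\zeta$ ranges over $M$. Writing out the Sobolev norm via \eqref{norfou} on $\T^d$ with Floquet parameter $\kappa$, I would obtain
\[
\|e^{\im\tilde\beta\cdot x}u\|_{H^{-2\tN}}^{2}=\sum_{\zeta\in M}\frac{|\hat u_\zeta|^{2}}{\langle \zeta+\tilde\beta+\kappa\rangle_{g}^{4\tN}}.
\]

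Next I would exploit the orthogonal splitting $\R^{d}=\Span_{\R}M\oplus(\Span_{\R}M)^{\perp}$: since $\zeta\in M$, while $(\tilde\beta+\kappa)_M=\{(\beta+\kappa)_M\}=\kappa'$ by \eqref{betainz}, one has
\[
\zeta+\tilde\beta+\kappa=(\zeta+\kappa')+(\tilde\beta+\kappa)_{M^{\perp}},
\]
and the two summands are orthogonal. Pythagoras then gives
\[
\|\zeta+\tilde\beta+\kappa\|^{2}_{g^{*}}=\|\zeta+\kappa'\|^{2}_{g^{*}}+\|(\tilde\beta+\kappa)_{M^{\perp}}\|^{2}_{g^{*}}.
\]

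The key elementary estimate I would then invoke is: for $a,b\geq 0$,
\[
(1+a+b)^{2}\geq(1+a)(1+b),
\]
which follows by expanding both sides. Applying this with $a=\|\zeta+\kappa'\|^{2}_{g^{*}}$ and $b=\|(\tilde\beta+\kappa)_{M^{\perp}}\|^{2}_{g^{*}}$ yields
\[
\langle\zeta+\tilde\beta+\kappa\rangle_{g}^{4}\geq\langle\zeta+\kappa'\rangle_{g}^{2}\,\langle(\tilde\beta+\kappa)_{M^{\perp}}\rangle_{g}^{2},
\]
and, raising to the $\tN$-th power,
\[
\langle\zeta+\tilde\beta+\kappa\rangle_{g}^{4\tN}\geq\langle\zeta+\kappa'\rangle_{g}^{2\tN}\,\langle(\tilde\beta+\kappa)_{M^{\perp}}\rangle_{g}^{2\tN}.
\]

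Substituting this bound termwise into the Fourier sum and pulling out the $\zeta$-independent factor, I would conclude
\[
\|e^{\im\tilde\beta\cdot x}u\|_{H^{-2\tN}}^{2}\leq\frac{1}{\langle(\tilde\beta+\kappa)_{M^{\perp}}\rangle_{g}^{2\tN}}\sum_{\zeta\in M}\frac{|\hat u_\zeta|^{2}}{\langle\zeta+\kappa'\rangle_{g}^{2\tN}}=\frac{\|u\|_{H^{-\tN}}^{2}}{\langle(\tilde\beta+\kappa)_{M^{\perp}}\rangle_{g}^{2\tN}}\leq\frac{K^{2}}{\langle(\tilde\beta+\kappa)_{M^{\perp}}\rangle_{g}^{2\tN}},
\]
where the last sum is recognized as the $H^{-\tN}$ norm of $u$ on the lower dimensional torus (Floquet parameter $\kappa'$), controlled by $K$ by hypothesis. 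Taking square roots yields \eqref{negami}. There is no real obstacle here; the only subtle point to be careful about is that the Sobolev norm of $u$ is the one intrinsic to the reduced $d'$-dimensional torus (i.e.\ weights $\langle\zeta+\kappa'\rangle_{g}$), consistently with how Lemma \ref{in su} applies this estimate to the eigenfunctions $\phi^{(\zeta)}$ of the reduced operator $-\Delta_{g,\kappa'}+\cV_{M,\beta}$.
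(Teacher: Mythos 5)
Your proof is correct and follows essentially the same route as the paper's: expand in Fourier modes, use the orthogonal splitting $\zeta+\tilde\beta+\kappa=(\zeta+\kappa')+(\tilde\beta+\kappa)_{M^\perp}$ together with the elementary bound $\langle w\rangle^2\geq\langle w_M\rangle\,\langle w_{M^\perp}\rangle$ (which the paper derives by AM--GM rather than by expanding $(1+a+b)^2\geq(1+a)(1+b)$, an immaterial difference), and factor out the $\zeta$-independent term. Your bookkeeping of the exponents and your remark that the hypothesis norm is the intrinsic one on the reduced torus with Floquet parameter $\kappa'$ are both consistent with how the lemma is used in Lemma \ref{in su}.
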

\proof One has
\begin{equation}
  \label{negamipure}
\norma{e^{\im\tilde \beta\cdot x}u}^2_{H^{-2N}}=\sum_{\zeta\in
  M}\langle \tilde
\beta+\kappa+\zeta\rangle^{-2\tN}\left|\hat u_{\zeta}\right|^2\ .
\end{equation}
We analyse, using \eqref{intern} and \eqref{intern.2}, the term 
\begin{align*}
\langle \tilde
\beta+\kappa+\zeta\rangle^2=1+(\zeta+\tilde\beta+\kappa)_M^2+(\zeta+\tilde\beta+\kappa)_{M^\perp}^2
\\
=1+(\zeta+\kappa')^2+(\tilde
\beta+\kappa)_{M^{\perp}}^2=\frac12+(\zeta+\kappa')^2 +\frac12+(\tilde
\beta+\kappa)_{M^{\perp}}^2
\\
\geq 2\sqrt{\frac12+(\zeta+\kappa')^2 }\sqrt  {\frac12+(\tilde
\beta+\kappa)_{M^{\perp}}^2}\geq \langle \zeta+\kappa' \rangle
\langle (\tilde
\beta+\kappa)_{M^{\perp}}   \rangle \ .
\end{align*}
Inserting in \eqref{negamipure} one immediately gets the thesis.\qed

\def\cprime{$'$}

\end{document}